\documentclass[11pt]{article} 

\title{Estimating an Extreme Bayesian Network via Scalings}

\author{Claudia Kl\"uppelberg\thanks{Center for Mathematical Sciences, Technische Universit\"at M\"unchen,  85748 Garching, Boltzmannstrasse 3, Germany, e-mail: cklu@tum.de, mario.krali@tum.de}
\and 
Mario Krali\footnotemark[1]
}
\usepackage{amssymb}
\usepackage{amsfonts}
\usepackage{amsmath}
\usepackage{amsthm}
\usepackage[round, authoryear]{natbib}
\usepackage{graphicx}
\usepackage{caption}
\usepackage{booktabs}
\usepackage{subcaption}
\usepackage[usenames, dvipsnames]{xcolor}
\usepackage{verbatim}
\usepackage{dsfont}
\usepackage{array}
\usepackage{color}
\usepackage{relsize}
\usepackage{lmodern}
\usepackage{url}
\usepackage{siunitx}
\usepackage{MnSymbol}
\usepackage{tikz}
\usetikzlibrary{positioning}
\usepackage{mathtools}
\usepackage{hyperref}
\usepackage{pgf, tikz}
\usepackage{algorithm}
\usepackage[noend]{algpseudocode}
\usepackage{tikz-3dplot}
\usepackage[titletoc]{appendix}
\usepackage[justification=centering]{caption}
\usetikzlibrary{calc,tikzmark}

\usetikzlibrary{arrows, automata}
\usepackage[english]{babel}
\textwidth16cm
\textheight24cm
\topmargin-1cm
\oddsidemargin0cm
\evensidemargin0cm
\frenchspacing


\numberwithin{equation}{section}
\usepackage{caption} 
\captionsetup{font={small}}
\captionsetup[table]{position=top}

\newtheorem{theorem}{Theorem}
\newtheorem{lemma}{Lemma}
\newtheorem{remark}{Remark}
\newtheorem{example}{Example}
\newtheorem{proposition}{Proposition}
\newtheorem{definition}{Definition}

\newtheorem{corollary}{Corollary}
\newtheorem{fig}{Figure}

\newcommand{\bthe}{\begin{theorem}}
\newcommand{\ethe}{\end{theorem}}

\newcommand{\ben}{\begin{enumerate}}
\newcommand{\een}{\end{enumerate}}

\newcommand{\bit}{\begin{itemize}}
\newcommand{\eit}{\end{itemize}}

\newcommand{\beq}{\begin{equation}}
\newcommand{\eeq}{\end{equation}}

\newcommand{\ble}{\begin{lemma}}
\newcommand{\ele}{\end{lemma}}

\newcommand{\bde}{\begin{definition}\rm}
\newcommand{\ede}{\halmos\end{definition}}

\newcommand{\bco}{\begin{corollary}}
\newcommand{\eco}{\end{corollary}}

\newcommand{\bpr}{\begin{proposition}}
\newcommand{\epr}{\end{proposition}}

\newcommand{\brem}{\begin{remark}\rm}
\newcommand{\erem}{\halmos\end{remark}}

\newcommand{\bproof}{\begin{proof}}
\newcommand{\eproof}{\end{proof}}

\newcommand{\bexam}{\begin{example}\rm}
\newcommand{\eexam}{\halmos\end{example}}

\newcommand{\bfi}{\begin{fig}}
\newcommand{\efi}{\end{fig}}

\newcommand{\btab}{\begin{tab}}
\newcommand{\etab}{\end{tab}}

\newcommand{\beao}{\begin{eqnarray*}}
\newcommand{\eeao}{\end{eqnarray*}\noindent}

\newcommand{\beam}{\begin{eqnarray}}
\newcommand{\eeam}{\end{eqnarray}\noindent}

\newcommand{\barr}{\begin{array}}
\newcommand{\earr}{\end{array}}

\newcommand{\bdis}{\begin{displaymath}}
\newcommand{\edis}{\end{displaymath}\noindent}

\def\P{{\mathbb P}}
\def\E{{\mathbb E}}
\def\R{{\mathbb R}}

\def\P{\mathbb{P}}

\def\cals_+{{\cals_+}}
\def\cald{{\mathcal{D}}}

\def\cals{{\mathcal{S}}}

\newcommand{\bs}{\boldsymbol}

\newcommand{\bsz}{\boldsymbol{Z}}
\newcommand{\bsx}{\boldsymbol{X}}

\newcommand{\stp}{\stackrel{P}{\rightarrow}}
\newcommand{\std}{\stackrel{d}{\rightarrow}}

\newcommand{\al}{{\alpha}}

\newcommand{\eps}{\varepsilon}

\newcommand{\DAG}{{\rm DAG}}

\newcommand{\EDM}{{\rm EDM}}

\newcommand{\an}{{\rm an}}
\newcommand{\pa}{{\rm pa}}

\newcommand{\An}{{\rm An}}
\newcommand{\Pa}{{\rm Pa}}

\let\norm\undefined 
\DeclarePairedDelimiter\norm{\lVert}{\rVert}
\newcommand{\halmos}{\quad\hfill\mbox{$\Box$}}  

\def\P{{\bf {\mathbb{P}}}}

\defcitealias{GKO}{Gissibl et al. (2018)}
\defcitealias{gkl}{Gissibl et al. (2018)}

\renewcommand{\harvardurl}[1]{(Available from: \textrm{#1})}

\allowdisplaybreaks[4]

\hyphenation{sup-CO-GARCH CO-GARCH}

\begin{document}


\maketitle

\begin{abstract}
Recursive max-linear vectors model causal dependence between its components by expressing each node variable as a max-linear function of its parental nodes in a directed acyclic graph and some exogenous innovation. 
Motivated by extreme value theory, innovations are assumed to have regularly varying distribution tails. 
We propose a scaling technique in order to determine a causal order of the node variables. 
All dependence parameters are then estimated from the estimated scalings. Furthermore, we prove asymptotic normality of the estimated scalings and dependence parameters based on asymptotic normality of the empirical spectral measure. 
Finally, we apply our structure learning and estimation algorithm to financial data and food dietary interview data. 

\end{abstract}

\noindent
{\em AMS 2010 Subject Classifications:}  primary:
\,\,\,60G70; 
\,\,\,62-09; 
\,\,\,62G32;  
secondary: \,\,\,65S05 

\noindent
{\em Keywords:}
causal order, directed acyclic graph, extreme value statistics, graphical model, recursive max-linear model, regular variation, structural equation model, structure learning.

\section{Introduction}

Human society is continuously faced with challenges arising from factors of both uncontrollable and/or synthetic nature. The former is manifested through events such as natural disasters, in particular climate extremes like heavy rainfall or storms, unusually high/low temperatures, or river flooding. Similarly, synthetic factors correspond to those catastrophes influenced by human intervention, for instance industry fire, terrorist attacks, or a financial market crash. 
Such events occur rarely in isolation, but are rather interconnected, and occur simultaneously across certain instances; for example, floods disseminate through a river network, or extreme losses occur across several financial sectors.
Such events make it necessary to not only understand 
dependencies between rare events, but also their causal structure. 

When modeling rare events, one faces by definition a limited amount of data. While extremes in a univariate setting are well studied, multivariate extremes still remain a focus of present research. This is partly due to the augmented dimensionality problem, which affects crucially non-parametric methods (see \citet{DHF}, Chapter~7), but also by the lack of a parametric family to characterize interdependencies (see \citet{beirlant}, Chapters~8, 9).

Recently, there has been interest in graphical models for modeling dependencies between extreme risks, which brings not only a potential complexity reduction, but also allows for modelling cause and effect in the context of extreme risk analysis. 
The model we consider in the present paper originates from \citet{gk}, where max-linear structural equation models have been proposed and investigated. The underlying graphical structure of the model is a directed acyclic graph (DAG), also called a Bayesian network. 
Identifiability and estimation of recursive max-linar models are investigated in \citetalias{gkl}.  We refer to \citet{lau} and \citet{diest} for details on graphical modeling and graph theory, respectively.

Some other methods for combining graphical modeling with extremes have been proposed recently.
In \citet{segers}, Markov trees with regularly varying node variables are investigated using the so-called tail chains.
In \citet{engelke:hitz:18}, a new approach using conditional independence relations between node variables is introduced, when considering undirected graphical models for extremes. This work is based on the assumption of a decomposable graph as well as the existence of density, which then leads to a Hammersley-Clifford type factorization of the latter into a lower dimensional setting. The method is applied to the estimation of flood in the Danube river network.
A recursive max-linear model has been fitted to data from the EURO STOXX 50 Index in \citet{einmahl2016}, where the structure of the DAG is assumed to be known.

High dimensions are a serious challenge of dependence modeling of extreme events, and as a consequence most of the applications so far have focused on a lower dimensional setting. 
{An exception is} \citet{cooley}, who present a new approach to extract the dependence structure from a regularly varying random vector. The authors propose the use of a dependence summary similar to the extreme dependence measure from \citet{lars}, which can be considered an analogue to the covariance.
Aiming at reducing the complexity, the authors propose a decomposition technique alike that of the Principal Component Decomposition for normal distributions. 
Other attempts aiming at dimension reduction of extremes involve \citet{chautru}, and \citet{JanWan}, who present clustering approaches, or \citet{HKK}, who propose a factor analysis for extremes.

In the present paper we develop a new structure learning and estimation algorithm for the recursive max-linear model in \cite{gk}.
Our approach is motivated by \cite{cooley} and applies to regularly varying node variables, which is a common assumption for extreme risk modelling. Multivariate node distributions have heavy-tailed marginals and are eponymous to those which lie in the domain of attraction of multivariate Fr\'echet distributions; see \citet{sres}, Section~5.4.2 (Proposition~5.15). 
We refer to \citet{sres,ResnickHeavy} for further details on regular variation.

Our multivariate regular variation setting is similar to that in \citetalias{GKO}, which investigates the use of the tail dependence coefficients matrix towards the recovery of a causal order and {identifiability} of the max-linear coefficient matrix. {Their method has the severe drawback that the initial nodes have to be known. For instance, in a DAG with two nodes and one edge there can only be one initial node, which can not be determined by the tail dependence coefficient as it is symmetric. This problem is to be encountered also in a DAG of larger size with several initial nodes, thus being a serious disadvantage to find a complete causal order. 
 In contrast, other than regular variation itself, our methodology is free of assumptions.
 More recently, in a heavy-tailed setting,
 \citet{gnecco:19} propose a method for identifying a causal order from the estimated conditional means of the integral transforms of pairs of nodes.}
 
 For arbitrary recursive max-linear models, a different identification and estimation method based on a generalized MLE can be found in \citetalias{gkl} and \citet{KL2017}. An extension of this method to models with observational noise has been investigated in \citet{BK}. {In these papers all innovations have to be independent and identically distributed, which is stronger than the tail assumptions imposed by regular variation.} 

We develop a new {non-parametric methodology} aimed at applying recursive max-linear models to extreme phenomena in a multivariate regular variation setting.
First, targeting the problem of recovering a causal structure as a graphical model on a DAG, we propose a new technique via the scaling parameters of multivariate marginal distributions. 
This scaling technique allows for the manipulation of the dependence structure between extremes by simple scalar multiplication. 
These manipulations then uncover specific parts of the spectral measure, which fully characterize the dependence structure of interest
to pave the way for estimating the causal dependence structure of the model.
{Second, we estimate the spectral measure empirically, where we focus on the relevant parts for the estimation of the required scaling.}
Asymptotic properties of the empirical spectral measure proven as an extension of a result of \cite{lars} lead to consistent and asymptotically normal estimates of all dependence parameters. 

The application of the proposed methodology to financial data and to food dietary data shows that the recursive max-linear model can model multivariate extremes from real-life data with the goal of inferring causality for high risks.

Our paper is structured as follows. Preliminaries on graph theoretical terminology and regular variation, including the scaling parameter, are introduced in Section~2. 
Section~3 provides relevant properties of recursive max-linear models with regularly varying node variables.
In Section~4 we show how the dependence structure of a recursive max-linear model can be identified from the scaling parameters of the model. 
Section~5 prepares for the causal inference by applying the scaling technique to find initial nodes as well as to reorder all other components into generations.
Section~\ref{regcase} deals with statistical inference of the model. We propose non-parametric estimators of the relevant scalings, which also yield estimators of the dependence parameters.
This allows us to estimate a partial order of the nodes and, in particular, a well-ordered graphical model on a DAG.
We also show the asymptotic normality of the model dependence parameters. 
Finally, Section~7 is dedicated to two applications, namely to a real world financial data set of industry portfolio returns, as well as food dietary interview data.

\section{Preliminaries}\label{s0}

\subsection{Some graphical notation}\label{s1}

Let $\mathcal{D}=(V, E)$ be a directed acyclic graph (DAG) with nodes $V=\{1,\dots,{d}\}$  and edges $E=\{(j,i): i\in V \mbox{ and } j\in\pa(i)\}$, where $\pa(i)$ are the parents of node $i$.  
Each node of $\mathcal{D}$ is associated with a random variable, and dependence between two random variables can be represented via an edge connecting the corresponding nodes; for background see \citet{lau}.

Throughout we use the following notation.
A path $p_{ji}\coloneqq[l_0=j\to l_1 \to\cdots\to l_{m}=i]$ from node $j$ to $i$ has length $|p_{ji}|=m$,
and we summarize all paths from $j$ to $i$ in the set $P_{ji}$.

For a node $i$ with parents $\pa(i)$ we set 
 $\Pa{(i)}=\pa(i)\cup \{i\} $, likewise, we denote by $\an(i)$ the ancestors of $i$ and set $\An(i)=\an(i)\cup \{i\}$. The ancestral set of some subset $C\subset V$ of nodes is denoted by $\an(C)$ or $\An(C)=\an(C)\cup C$.
We also work with the following two notions throughout.

\bde \let\qed\relax \label{generation}
(i) \, We call $i\in V$ an {\em initial node}, if $\pa(i)=\emptyset$, and denote by $V_0$ the set of all initial nodes.\\
(ii) \,	In a DAG $\mathcal{D}$, a {\em generation of nodes} is the set of all nodes that have a longest path of same length from any initial node. 
	Let $G_0=V_0$, then the $i$-th generation of nodes is defined by: 
	$$G_i=\{k\in V\setminus\underset{l<i}{\cup} G_l: \underset{p_{jk}\in P_{jk}: \hspace{1mm}j\in V_0}{\max}  {|p_{jk}|=i}\}.$$
	\ede

The following two auxiliary results provide some properties of this concept.

\ble\label{lem2.2}
In a \DAG\ $\mathcal{D}$ there is no path between two nodes of the same generation.
\ele

\bproof
	Suppose that there exists a path $p_{ij}$ in some generation $G_k\subset V$, $k\geq 1$ for nodes $i,j\in G_k$ on $\mathcal{D}$. A longest path from $V_0$ to $i$ would be of length $k$, say $p_{ti}$ for some $t\in V_0$. Extend now the same path along $p_{ij}$ to get $p_{tj}=[t\to\cdots\to i\to \cdots \to j]$.  Clearly $p_{tj}$ is longer than $p_{ti}$,  
	giving a contradiction to $j\in G_k$.
\eproof

The next result proves useful; its proof is not difficult and can be found in \cite{Krali}, Lemma 3.3.

\ble\label{lem2.3}
	Consider a \DAG\ $\mathcal{D}=(V,E)$ with $|V|= {d}$, and the set $V_0$ of initial nodes. Suppose that $\cald$ has $l$ generations. Then for $i\in\{1,\dots,l\}$, $1\le l\le d$ and $k\notin G_e$ for $e<i$, we have $k\in G_i$ if and only if for all $j\in \underset{m\ge i} {\cup} G_m$ it holds that $j\notin \emph{an}(k)$.
\ele

\bde
	A directed graph $\mathcal{D}=(V,E)$ is {\em well-ordered}, if for all $i\in V$ we have $i<j$ for all $j\in \pa(i)$. {We refer to such an order as a {\em causal order}.}
	\label{orderdef}
\ede

Note that we employ a reverse ordering than in \cite{gk}.

\subsection{Multivariate Regular Variation} \label{subsect2}

Considering max-linear models from an extreme risk perspective, we focus on node variables, which are multivariate regularly varying. 
Throughout all random objects are defined on a probability space $(\Omega,\mathcal{A},\P)$.

Multivariate regular variation can be defined in various ways, and we shall work with the following two equivalent definitions (cf. \citet{ResnickHeavy}, Theorem 6.1).

\bde[Multivariate regular variation]\label{mrv}\\
(a) \, A random vector $\boldsymbol{X}\in\mathbb{R}^{ {d}}_+$ is {\em multivariate regularly varying} if there exists a sequence $b_n\to\infty$ as $n\to\infty$ such that
\beao
n\mathbb{P}(\boldsymbol{X}/{b_n}\in \cdot)\overset{v}{\to}\nu_{\bsx}(\cdot), \hspace{5mm}n\to\infty,
\eeao
where $\overset{v}{\to}$ denotes vague convergence in $M_+(\mathbb{R}_+^{ {d}}\setminus\{\boldsymbol{0}\})$, the set of non-negative Radon measures on $\mathbb{R}_+^{d}\setminus\{\boldsymbol{0}\}$. The measure $\nu_{\bsx}$ is called {\em exponent measure} of $\bsx$.\\
(b) \,
A random vector  $\boldsymbol{X}\in\mathbb{R}^{ {d}}_+$ is {\em multivariate regularly varying} if for any choice of the norm $\|\cdot\|$ there exists a finite measure $H_{\boldsymbol{X}}$ on the positive unit sphere $\Theta_+^{{ {d}}-1}=\{\boldsymbol{\omega}\in \mathbb{R}^{ {d}}_+: \norm{\boldsymbol{\omega}}=1\}$ and a sequence $b_n\to \infty$ as $n\to\infty$ such that for the {\em polar representation} $(R,\boldsymbol{\omega})\coloneqq(\norm{\boldsymbol{X}}, \boldsymbol{X}/\norm{\boldsymbol{X}})$ of $\bsx$, 
\begin{align*}
n\mathbb{P}[({R}/{b_n},\boldsymbol{\omega})\in \cdot)]\overset{v}{\to} \nu_\alpha\times H_{\boldsymbol{X}}(\cdot), \hspace{5mm}n\to\infty,
\end{align*}
in $M_+((0,\infty]\times\Theta_+^{{ {d}}-1})$, $d\nu_\alpha(x)=\alpha x^{-\alpha-1}dx$ for some $\alpha>0$, and for Borel subsets $C\subseteq \Theta_+^{{ {d}}-1}$,
\begin{align*}
H_{\boldsymbol{X}}(C)\coloneqq\nu_{\boldsymbol{X}}\big(\{\boldsymbol{y}\in\mathbb{R}^{ {d}}_+\setminus\{\boldsymbol{0}\}: \norm{\boldsymbol{y}}\geq 1, \boldsymbol{y}/\norm{\boldsymbol{y}} \in {C}\}\big).
\end{align*}
The measure $H_{\boldsymbol{X}}$ is called the {\em spectral measure}.\\
(c) \,
If $\bsx$ satisfies the above definition, we write  $\boldsymbol{X}\in RV^{ {d}}_+(\alpha)$, and $\alpha$ is called the {\em index of regular variation}.
\ede

As explained in Theorem~6.5 of \cite{ResnickHeavy}, starting with an arbitrary vector $\bsx$ with positive components, we can always standardize all marginals to $\boldsymbol{X}\in RV^{ {d}}_+(2)$ with normalizing sequence as in (b) chosen as $b_n=\sqrt{n}$. 
This implies that all scaling information is pushed into $H_{\bsx}$.

Fix now $\alpha=2$, and the Euclidean norm $\|\cdot\|_2$, such that the positive unit sphere is $\Theta_+^{{ {d}}-1}=\{\boldsymbol{\omega}\in \mathbb{R}^{ {d}}_+: \norm{\boldsymbol{\omega}}_2=1\}$. 
The following scaling parameters have been used in \cite{cooley} for a dependence summary statistics. 

\bde\label{scaledef}
Let $\bsx\in RV^{ {d}}_+(2)$ and consider its polar representation $(R,\boldsymbol{\omega})$ as in Definition~\ref{mrv}(b) such that $\omega_i=\frac{X_i}{R}$ for $i=1,\dots, {d}$. 
For every $1\le i,j\le { {d}}$ define 
 \begin{align*}
\sigma_{ij}^2 = \sigma_{\boldsymbol{X}_{ij}}^2&\coloneqq\int_{\Theta_+^{{ {d}}-1}}\omega_i \omega_j dH_{\boldsymbol{X}}(\boldsymbol{\omega}), \quad \boldsymbol{\omega}=(\omega_1,\dots,\omega_{ {d}})\in \Theta_+^{{ {d}}-1}.
 \end{align*}
We abbreviate $\sigma_i = \sigma_{{X}_{i}}=\sigma_{\boldsymbol{X}_{ii}}$ and call it the {\em scaling} or {\em scaling parameter} of $X_i$. 
 \ede
 
The following auxiliary results are well-known and simple consequences of the definitions of regular variation. For the sake of completeness, we provide short proofs.
 
 \ble\label{usescale}
 Let $\bsx\in RV^{ {d}}_+(2)$ and choose $b_n=\sqrt{n}$. \\
 (a) \, Then $\lim_{n\to \infty} n\mathbb{P}({{X_i}}/\sqrt{n}>z)= z^{-2}\sigma_{i}^2.$\\
(b) \, Let $H_{\boldsymbol{X}}$ be the spectral measure of $\bsx$, then
$H_{\boldsymbol{X}}(\Theta_+^{{ {d}}-1})=\sum_{i=1}^{{ {d}}}\sigma_{i}^2.$
\ele

\bproof

(a) From the homogeneity of the exponent measure and its polar representation in Definition~\ref{mrv}(b) we obtain
\begin{align*}
\lim\limits_{n\to \infty} n\mathbb{P}({{X_i}}/\sqrt{n}>z)&=\nu_{\boldsymbol{X}}({\{\boldsymbol{x}\in\R_+^{ {d}} : \frac{\boldsymbol{x}}{\norm {\boldsymbol{x}}}_2\in \Theta_+^{{d}-1}, x_i > z\}})\\
&=\int_{\{\boldsymbol{\omega}\in \Theta_+^{ {d}-1}\}} \int_{\{r> z/{\omega_i}\}} 2r^{-3}dr  dH_{{\boldsymbol{X}}}(\boldsymbol{\omega})
=z^{-2}\sigma_{i}^2. 
\end{align*}
(b) We simply compute the total mass of the $ {d}$-dimensional unit simplex  $\Theta_+^{{d}-1}$:
\begin{align*}
H_{\boldsymbol{X}}(\Theta_+^{{d}-1})=\int_{\Theta_+^{{d}-1}}dH_{\boldsymbol{X}}(\boldsymbol{\omega})=\int_{\Theta_+^{{d}-1}}\sum_{i=1}^{d}\omega_i^2 dH_{\boldsymbol{X}}(\boldsymbol{\omega})=\sum_{i=1}^{{d}}\int_{\Theta_+^{{d}-1}}\omega_i^2 dH_{\boldsymbol{X}}(\boldsymbol{\omega})=\sum_{i=1}^{{d}}\sigma_{i}^2.
\end{align*}
\eproof

Immediately from Definition~\ref{scaledef} and Lemma~\ref{usescale} above we find for $i\in\{1,\dots, {d}\}$ that, if $X_i$ has scaling $\sigma_{i}$, then $cX_i$ has scaling $c \sigma_{i}$ for every $c>0$.

\begin{remark}\rm 
    (i) \, \label{specmass2} As $H_{\bsx}$ is a finite measure, it can be normalised to a probability measure by defining 
    $$\tilde{H}_{\bsx}(\cdot)\coloneqq \frac{H_{\boldsymbol{X}}(\cdot)}{H_{\boldsymbol{X}}(\Theta_+^{{d}-1})}.$$ 
    (ii) \, \label{enx} Define $\bs{\omega} := (\omega_1,\dots,\omega_{{d}}) =(X_1/R,\dots,X_{{d}}/R)$.
    Let $f\colon\Theta_+^{{d}-1}\to\mathbb{R}_+$ be a continuous function. Since $f$ is compactly supported (on $\Theta_+^{{d}-1}$), by vague convergence we have 
\begin{align}\label{empdist}
\mathbb{E}_{\tilde{H}_{\bsx}}[f(\boldsymbol{\omega})]& \coloneqq\lim\limits_{x\to \infty} \mathbb{E}[f(\boldsymbol{\omega})\mid R>x] = \int_{\Theta_+^{{d}-1}} f(\bs{\omega}) d\tilde{H}_{\bsx}(\bs{\omega}).
\end{align}
    \halmos
\end{remark}

For a simple assessment of the dependence structure of the components of a random vector, various summary measures have been introduced; see e.g. Sections~8.2.7 and 9.5.1 of \citet{beirlant}.
We note that Definition \ref{scaledef} is a non-normalized version of the \emph{extreme dependence measure} (EDM), which is a bivariate dependence measure on the positive unit sphere $\Theta_+^{{d}-1}$ and measures the limit of conditional cross-moments in the radial components of two random variables. 
The EDM has been introduced in Section~3 of \citet{edms}. 
A more refined version can be found in Propositions~3 and~4 in \citet{lars}, where also more details on the EDM are given. 

\bde[Extreme dependence measure (EDM)]\label{edmdef}
Let $\boldsymbol{X}\in RV^d_+(\alpha)$. Then for any two components $X_i,X_j$ of $\boldsymbol{X}$, setting $(\omega_i,\omega_j) := \big(\frac{X_i}R , \frac{X_j}R\big)$, the \EDM\ is given by
\begin{align*}
\EDM(X_i,X_j)=\lim\limits_{x\to\infty}\mathbb{E}\Big[\dfrac{X_i}R \dfrac{X_j}R\,\Big|\, R>x\Big]= \mathbb{E}_{\tilde{H}_{\bsx}}[\omega_i\omega_j] = \int_{\Theta_+^{d-1}} \omega_i\omega_j d\tilde{H}_{\bsx}(\bs{\omega}).
\end{align*}
\ede

\section{Recursive Max-linear Models}

Recursive max-linear models were introduced in \citet{gk} and estimated with different methods in \citetalias{gkl,GKO}; \citet{KL2017}. 
We summarize notation and results needed in the present paper.

A {\em max-linear structural equation model} $\boldsymbol{X}$ on a DAG $\mathcal{D}$ is defined as
\begin{align}\label{semequat}
X_i\coloneqq {\underset{k\in \textrm{pa}(i)}{\bigvee}}c_{ik}X_k\vee c_{ii}Z_i,\hspace{5mm} i=1,\dots,{d}, 
\end{align}
for independent random variables $Z_1,\dots,Z_{d}$, which have  support $\mathbb{R}_+$ and are atomfree, and edge weights $c_{ik}$ which are positive for all $i \in V$ and $k\in \textrm{pa}(i)\cup \{i\}$. We call $\boldsymbol{Z}=(Z_1,\dots,Z_{d})$ with these properties an {\em innovations vector}. 

Define now the operator $\times_{\max}$ between two matrices $C\in\mathbb{R}_+^{{d}\times q}$ and $D\in\mathbb{R}_+^{q\times l}$ by $$(C\times_{\max}D)_{ij}= \overset{q}{{\underset{k=1}{\bigvee}}}c_{ik}d_{kj} ,\hspace{5mm} i=1,\dots,{d}, \hspace{2mm} j=1,\dots,l. $$

From Theorem~2.2 of \cite{gk} we know that a max-linear structural equation model $\boldsymbol{X}$ from \eqref{semequat} has a solution in terms of its innovations $\boldsymbol{Z}$, which can be found by a path analysis.
For each path $p_{ji}=[j\to k_1\to\cdots \to k_l=i]$ of length $l\ge 1$ from $j$ to $i$ define the {\em path weights} $d(p_{ji})\coloneqq c_{jj}c_{k_1j}\cdots c_{ik_{l-1}}$.
Furthermore, define for $i=1,\dots,{d},$
	\begin{align*}
	a_{ij}=\underset{p_{ji}\in P_{ji}}{\bigvee}d(p_{ji}) \mbox{ for } j\in {\An}(i),\quad a_{ij}=0 \mbox{ for }  j\in V\setminus {\An}(i),\quad a_{ii}=c_{ii}.
	\end{align*}
	Then $\boldsymbol{X}$ can be written as the {\em recursive max-linear (ML) vector}:
	\begin{align}
	X_i={{\underset{j\in \An(i)}{\bigvee}}}a_{ij}Z_j,\hspace{5mm} i=1,\dots,d.
	\label{rmlmequat}
	\end{align}
The matrix $A=(a_{ij})_{i,j=1,\dots,d}$ is called the {\em ML coefficient matrix}.
Furthermore, a path $p_{ji}$ from $j$ to $i$ such that $a_{ij}=d(p_{ji})$ is called \emph{max-weighted}.

\subsection{Regular Variation of a Recursive Max-Linear Vector}

Let $\boldsymbol{Z}\in\R^{d}_+$ be an {innovations vector} and $A\in \mathbb{R}^{{d}\times {d}}_+$ a ML coefficient matrix. 
Throughout this paper we let the innovations vector $\boldsymbol{Z}\in RV_+^{d}(\alpha)$ have independent and standardized components; i.e., $n\mathds{P}(n^{-1/\alpha} Z_i>x)\to x^{-\alpha}$ as $n\to\infty$ for all $i=1,\dots,{d}$. 
According to \cite{ResnickHeavy}, p.~193f, this is equivalent to the spectral measure of $\bsz$ being {a discrete measure} on the basis vectors $\bs{e}_i$ for $i=1,\dots,{d}$ with {unit mass} on each of the $\bs{e}_i$.
Reformulating \eqref{rmlmequat}, the recursive ML random vector has representation
\begin{align}\label{ML}
\boldsymbol{X}=A\times_{\max}\boldsymbol{Z}.
\end{align}
If the innovations vector $\boldsymbol{Z} \in {RV}_+^{d}(\alpha)$, then by simple calculations given e.g. in Proposition~A.2 of \citetalias{GKO}, see also Proposition~4.1 of \cite{Krali},
$\bsx\in {RV}_+^{d}(\alpha)$ 
with discrete spectral measure
	\begin{equation}
	H_{\boldsymbol{X}}(\cdot) = \sum_{k=1}^{d}\norm{a_k}^\alpha \delta_{\big\{ \frac{{a_k}}{\norm{a_k}} \big\}}(\cdot),
	\label{discretspecteq}
	\end{equation}	
where 	$a_k=(a_{1k},\dots,a_{{d}k})^\top$ is the $k$-th column of $A$. Obviously, the entries of $A$ are the dependence parameters of $\bsx$.

Using the representation of $H_{\boldsymbol{X}}$ in \eqref{discretspecteq} together with Remark~\ref{specmass2} (ii) we obtain the following lemma.

\ble\label{le:fcont}
	Let $\boldsymbol{X}$ be a recursive ML vector  as in \eqref{ML}. 
	Let $f:\Theta_+^{{d}-1}\to\R_+$ be continuous, and define the radial components of $\bsx$ as
	$\bs{\omega}=(\omega_1,\dots,\omega_{d})=(X_1/R,\dots,X_{d}/R)$. Then
\begin{equation*}
\E_{\tilde{H}_{\bsx}} [f(\bs{\omega})]
= \frac{1}{\sum_{i=1}^{d}\norm{a_i}^\alpha}\sum_{k=1}^{d} {\norm{a_k}^\alpha} f\Big(\frac{{a_{1k}}}{\norm{a_k}},\dots,\frac{{a_{dk}}}{\norm{a_k}}\Big).
\end{equation*}
\ele

For $\al=2$ and the Euclidean norm, the scalings of the recursive ML random vector \eqref{ML} can be expressed by the matrix $A$ as follows.

\bpr\label{scalee}
Let ${\boldsymbol{X}}=A\times_{\max}\boldsymbol{Z}$, where $\boldsymbol{Z}\in RV^{d}_+(2)$ is an innovations vector, and $A\in \mathbb{R}^{{d}\times{d}}_+$. Then $\sigma_{ij}^2=(AA^T)_{ij}$. Moreover, every component $X_k$ of $\bsx$ has scaling $\sigma^2_k=\sum_{i=1}^{d} a_{ki}^2$ 
	for $k=1,\dots,{d}$.
\epr

\bproof
From Definition~\ref{scaledef} and \eqref{discretspecteq}  we find for $i\neq j$
	\begin{align*}
	\sigma_{ij}^2 =\int_{\Theta_+^{{d}-1}}\omega_i \omega_j dH_{\boldsymbol{X}}(\boldsymbol{\omega})
	= \sum_{k=1}^{d}\norm{a_k}_2^2 \frac{a_{ik}}{\norm{a_k}_2}\frac{a_{jk}}{\norm{a_k}_2}=\sum_{k=1}^{d}a_{ik}a_{jk} 
	=(AA^T)_{ij}.
	\end{align*}
	The calculation of squared scalings $\sigma^2_i$ is analogous.

\eproof

Finally, we consider the standardized recursive ML random vector $\bsx$ from \eqref{ML} by standardizing the ML coefficient matrix.

\bde[Standardized ML coefficient matrix]\\
Define
\begin{align}\bar{A}=(\bar{a}_{ij})_{{d}\times{d}}\coloneqq \bigg(\frac{a_{ij}^\alpha}{\sum_{k\in \textrm{An(i)}}a_{ik}^\alpha}\bigg)_{{d}\times{d}}^{1/\alpha}=\bigg(\frac{a_{ij}^\alpha}{\sum_{k=1}^{d}a_{ik}^\alpha}\bigg)_{{d}\times{d}}^{1/\alpha}.\label{abar}
\end{align}
Then $\bar{A}$ is referred to as \emph{standardized ML coefficient matrix}.
\ede 

\brem\label{rem34}
Since the innovations vector $\boldsymbol{Z}$ is standardized, all scaling information is in $\bar{A}$:
Proposition~\ref{scalee} entails that the recursive ML vector $\boldsymbol{X}=\bar{A}\times_{\max}\boldsymbol{Z}$ has components with squared scalings $\sigma^2_{i}=(\bar{A}\bar{A}^T)_{ii}=1$ for $i=1,\dots,{d}$.
\erem

The following result has been proven in Lemma 2.1 of \citetalias{GKO}. 

\ble\label{ineq}
Assume that the DAG corresponding to the recursive ML vector $\boldsymbol{X}$ is well-ordered.
Then  
$$\bar{a}_{jj}>\bar{a}_{ij}\quad\mbox{for all}\quad i\in V\quad\mbox{and}\quad j\in \an(i).$$
\ele

We summarize all model assumptions used throughout the rest of the paper.\\

\noindent
{\bf {Assumptions:}}
\begin{enumerate}
\item[(A1)]
The innovations vector $\boldsymbol{Z}\in RV^{d}_+(2)$ has independent and standardized components.
\item[(A2)]
We work with the Euclidean norm $\|\cdot\|_2$.
\item[(A3)]
The ML coefficient matrix $A$ is standardized as in eq. \eqref{abar}, such that the components of $\bsx$ are standardized. 
\end{enumerate}

 \section{Identification of the ML Coefficient Matrix From Scalings}\label{s4}
 
 In this section we consider a recursive ML vector $\boldsymbol{X}=A\times_{\max}\boldsymbol{Z}$ such that (A1)-(A3) are satisfied. 
 We show how to {identify} $A$ from $\bsx$, when $\boldsymbol{X}$ is a recursive ML vector on a well-ordered DAG; i.e., 
\begin{align}
 \boldsymbol{X}=\begin{bmatrix}
 X_1 \\X_2\\ \vdots \\ X_{d} \end{bmatrix}=A\times_{\max}\boldsymbol{Z}=\begin{bmatrix}
 a_{11} & a_{12}& \hdots  & a_{1{d}} \\
 0 &   a_{22} &\hdots   & a_{2{d}}\\
 \vdots& \vdots & \ddots & \vdots\\
 0&   0 &\hdots    &a_{{d}{d}}
 \end{bmatrix}\times_{\max}\boldsymbol{Z}.
 \label{dageq1}
\end{align}
We identify $A$ from the squared scalings of maxima over combinations of components of $\boldsymbol{X}$.
For a set $\bs{h}\subseteq\{1,\dots,{d}\}$ we define 
\begin{align}\label{maxp} 
M_{\bs{h}}\coloneqq\max_{i\in \bs{h}} (X_i).
\end{align}

We first compute the relevant squared scalings.

\ble\label{scalcoll}
The random variable $M_{\bs{h}}$ is again max-linear, in particular $M_{\bs{h}}\in RV^1_+(2)$ with squared scalings as follows:\\
(a) Let $\bs{h}\subseteq\{1,\dots,{d}\}$, then 
	\beam\label{scaleq}
	\sigma_{M_{\bs{h}}}^2 &=&\sum_{k=1}^{d} (\bigvee_{i\in \bs{h}} a_{ik}^2).
	\eeam
(b) If $\bs{h} =\{1,\dots,{d}\}$, then $\sigma_{M_{\bs{h}}}^2 = \sum_{k=1}^{d}  a_{kk}^2.$
\ele

\begin{proof} 
	\emph{(a)} Starting with $X_i=\underset{j=1,\dots,{d}}{\bigvee}a_{ij}Z_j$ for $i=1,\dots,{d}$, we calculate:
	\begin{equation*}
	M_{\bs{h}}=\max_{i\in \bs{h}}(X_i)
	=\underset{i\in \bs{h}}{\bigvee}\hspace{1mm}  \underset{j=1,\dots,{d}}{\bigvee}a_{ij}Z_j
	=\underset{j=1,\dots,{d}}{\bigvee} (\underset{i\in \bs{h}}{\bigvee}a_{ij})Z_j.
	\end{equation*}
	By eq. \eqref{discretspecteq} $M_{\bs{h}}$ is regularly varying. In order to compute the squared scaling $\sigma_{M_{\bs{h}}}^2$, we use the same arguments as in Proposition~\ref{scalee} which yields \eqref{scaleq}.\\
	\emph{(b)} This follows directly from \emph{(a)} in combination with Lemma \ref{ineq}.
\end{proof}

We now illustrate the identification of the ML coefficient matrix by the following example. 

\bexam\label{examp2}
	Let $\boldsymbol{X}$ be a recursive ML vector on a well-ordered DAG satisfying (A1)-(A3)  
	such that
	\begin{align*}
	\boldsymbol{X}=A\times_{\max}\boldsymbol{Z}=
	\begin{bmatrix}
	a_{11} & a_{12}  & a_{13} \\
	0 &   a_{22}     &a_{23}\\
	0&   0     & a_{33}
	\end{bmatrix}\times_{\max}\boldsymbol{Z}.
	\end{align*}
Note that by standardization every row must have norm 1.

	We first compute the diagonal entries. By standardization, $a^2_{33}=\sigma_{3}^2=1$.
	By Lemma \ref{ineq}  we know that $a_{ii}>a_{ki}$ for $k<i$.
	Let $M_{ij}$ for $1\le i,j\le 3$ and $M_{123}$ be defined as in \eqref{maxp}.
	From Lemma \ref{scalcoll} we obtain
	\beao
	\barr{rcrcr}
	\sigma_{M_{123}}^2&=& a_{11}^2+a_{22}^2+a_{33}^2 &=& a_{11}^2+a_{22}^2+1,\\
	\sigma_{M_{23}}^2&=& a^2_{21}\vee a^2_{31} +a_{22}^2+a_{33}^2& =& 0 \, + a_{22}^2+1.
	\earr
	\eeao
	From this we first find $a_{11}^2=\sigma_{M_{123}}^2-\sigma_{M_{23}}^2$. 
	Similarly $a_{22}^2=\sigma_{M_{23}}^2-\sigma_{3}^2$.
	
	The next step is to find the remaining entries in the first row of $A$, namely $a_{12}$ and $a_{13}$. \\
	Proceeding with $a_{12}$ we find from Lemma \ref{scalcoll} for $M_{13}$ first
	$\sigma_{M_{13}}^2=a_{11}^2+a_{12}^2+a_{33}^2=a_{11}^2+a_{12}^2+\sigma_{3}^2,$
	which yields 
	$$a_{12}^2=\sigma_{M_{13}}^2-\sigma_{3}^2-a_{11}^2=\sigma_{M_{13}}^2+\sigma_{M_{23}}^2-\sigma_{M_{123}}^2-\sigma_{3}^2.$$ 
	Finally, we find $a_{13},a_{23}$, since the rows of $A$ have norm 1.
\eexam

We now proceed by proving the correctness of the above recursion, which gives rise to Algorithm~\ref{recalg2} below.

\begin{proposition}\label{estalg2}
Let $\boldsymbol{X}$ be a recursive ML vector on a well-ordered DAG satisfying  (A1)-(A3).
Then the following recursion yields the standardized ML coefficient matrix $A$:
	\begin{align}
	a_{{d}{d}}^2  =  \sigma_{d}^2=1 \, \mbox{ and } \, a_{ii}^2 \, =\, \sigma_{M_{i,\dots,{d}}}^2-\sigma_{M_{i+1,\dots,{d}}}^2 \quad	 & i=1,\dots,{d}-1,  \label{recformula1}\\
	a_{ij}^2 = \sigma_{M_{i,j+1,j+2,\dots,{d}}}^2-\sigma_{M_{j+1,j+2,\dots,{d}}}^2-\sum_{k=i}^{j-1}a_{ik}^2 \quad \quad
	&
	i=1,\dots,{d}-2,  j =i+1,\dots,{d}-1.
	\label{recformula2}\\
	a_{i{d}}^2 = \sigma_{i}^2-\sum_{k=i}^{{d}-1}a_{ik}^2 \, = \, 1-\sum_{k=i}^{{d}-1}a_{ik}^2 \quad\quad\quad\quad\quad & i=1,\dots,{d}-1. \label{recformula3}
	\end{align}
\end{proposition}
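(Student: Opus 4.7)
The plan is to apply Lemma \ref{scalcoll}(a) to three families of index sets and exploit two structural facts about the well-ordered recursive ML vector: (i) the matrix $A$ is upper triangular, so $a_{lk} = 0$ whenever $l > k$; and (ii) by Lemma \ref{ineq} each diagonal entry $a_{kk}$ dominates its column, i.e.\ $a_{kk} \geq a_{lk}$ for every row index $l$. Together, (i) and (ii) collapse the column-wise maxima $\bigvee_{l \in \bs{h}} a_{lk}^2$ appearing in \eqref{scaleq} to a single easily identified entry, after which each formula drops out by subtraction.

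For \eqref{recformula1} I would take $\bs{h} = \{i, i+1, \ldots, d\}$ and partition the $k$-sum in \eqref{scaleq}: upper triangularity kills every column $k < i$ since all row indices in $\bs{h}$ exceed $k$, while for $k \geq i$ the index $l = k$ lies in $\bs{h}$ and dominates the column by (ii), so the max equals $a_{kk}^2$. Hence $\sigma_{M_{i,\ldots,d}}^2 = \sum_{k=i}^{d} a_{kk}^2$, and subtracting the analogous identity for $i+1$ yields the formula. The base case $a_{dd}^2 = \sigma_d^2 = 1$ is standardization (Remark \ref{rem34}).

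For \eqref{recformula2} I would take $\bs{h} = \{i\} \cup \{j+1, \ldots, d\}$ with $i < j < d$, and split the $k$-sum into three ranges: $k < i$ vanishes by upper triangularity; for $i \leq k \leq j$ the only row index $l \in \bs{h}$ satisfying $l \leq k$ is $l = i$, giving $a_{ik}^2$; and for $k > j$ the argument of the previous paragraph applies again, the max being $a_{kk}^2$ after (ii) simultaneously dominates $a_{ik}$ and the entries $a_{lk}$, $l \in \{j+1, \ldots, k-1\}$, by $a_{kk}$. Summing the three contributions and subtracting $\sigma_{M_{j+1,\ldots,d}}^2 = \sum_{k=j+1}^{d} a_{kk}^2$ isolates $\sum_{k=i}^{j} a_{ik}^2$, from which solving for the last summand produces \eqref{recformula2}. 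Equation \eqref{recformula3} is immediate from row normalization: Proposition \ref{scalee} together with $\sigma_i = 1$ yields $\sum_{k=i}^{d} a_{ik}^2 = 1$, where upper triangularity kills the summands with $k < i$.

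No step is intricate; the main obstacle is simply the column-wise case analysis, and then applying the formulas in the correct order—diagonal first via \eqref{recformula1}, then each row filled left to right via \eqref{recformula2} and closed off with \eqref{recformula3}—so that every term on the right-hand side is already known when it is invoked.
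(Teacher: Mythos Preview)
Your proposal is correct and follows essentially the same route as the paper's proof: both arguments apply Lemma~\ref{scalcoll} to the index sets $\{i,\dots,d\}$ and $\{i\}\cup\{j+1,\dots,d\}$, use upper triangularity together with Lemma~\ref{ineq} to reduce the column maxima to single entries (yielding the key identity $\sigma_{M_{i,j+1,\dots,d}}^2-\sigma_{M_{j+1,\dots,d}}^2=\sum_{k=i}^{j}a_{ik}^2$), and finish \eqref{recformula3} via row standardization. The only cosmetic difference is that the paper phrases the recovery of $a_{ij}^2$ from this identity as an induction on $j$, whereas you simply note the left-to-right order; the content is identical.
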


\begin{proof}
	We first show \eqref{recformula1}.
	From Lemma \ref{scalcoll} we find for $i=1,\dots,{d}-1$:
	\begin{align*}
	\sigma_{M_{i,i+1,\dots,{d}}}^2&=\sum_{k=i}^{d}a_{kk}^2\quad\mbox{ and }\quad
	\sigma_{M_{i+1,i+2,\dots,{d}}}^2 \, =\sum_{k=i+1}^{d}a_{kk}^2,
	\end{align*}
	which implies that $a_{ii}^2=\sigma_{M_{i,\dots,{d}}}^2-\sigma_{M_{i+1,\dots,{d}}}^2$.
	For $i=p$, by standardization of $A$ we have $a_{{d}{d}}^2={\sigma_{{d}}^2}=1$.\\
	In order to prove \eqref{recformula2} we compute first:
	\begin{align*}
	\sigma_{M_{i,j+1\dots,{d}}}^2&=\sum_{k=i}^{j}a_{ik}^2+\sum_{k=j+1}^{d}a_{kk}^2\quad\mbox{ and }\quad
	\sigma_{M_{j+1,\dots,{d}}}^2\, =\sum_{k=j+1}^{d}a_{kk}^2,
	\end{align*}
	which implies
	\begin{align}
	\sigma_{M_{i,j+1\dots,{d}}}^2-	\sigma_{M_{j+1,\dots,{d}}}^2&=\sum_{k=i}^{j}a_{ik}^2.
	\label{ijentry}
	\end{align}
	Fix now $i\in\{1,\dots,{d}-1\}$. We proceed by induction over $j$. We start with the initial index $j=i+1.$ By \eqref{recformula1} we know all $a_{ii}$ for $i=1,\dots,{d}$, 
       and by (\ref{ijentry}),
	$$ \sigma_{M_{i,i+2\dots,{d}}}^2-	\sigma_{M_{i+2,\dots,{d}}}^2-a_{ii}^2=\sum_{k=i}^{i+1}a_{ik}^2-a_{ii}^2=a_{i,i+1}^2.$$
	By the induction hypothesis, suppose that we have found $a_{ij}$ for all $j\in \{i+1,...,l-1\}$, where $i+2<l<{d}$.  Let $j=l$.
	Then, it is straightforward to see that
	$$\sigma_{M_{i,j+1\dots,{d}}}^2-	\sigma_{M_{j+1,\dots,{d}}}^2-\sum_{k=i}^{j-1}a_{ik}^2=\sum_{k=i}^{j}a_{ik}^2-\sum_{k=i}^{j-1}a_{ik}^2=a_{ij}^2.$$
	Equation \eqref{recformula3} follows from the fact that $A$ is standardized, hence, all rows have norm 1 (by Remark~\ref{rem34}, $\sigma_{i}^2=\sum_{k=i}^{d}a_{ik}^2=1.$)
\end{proof}

The Algorithm corresponding to Proposition~\ref{estalg2} reads as follows.

\begin{algorithm}[H]
	\caption{Computation of the ML Coefficient Matrix A}\label{recalg2}
	\begin{algorithmic}[1]
		\Procedure{}{}
		\State \textbf{Set} $A = (0)_{{d}\times{d}}$
		\State \textbf{for} $i = 1,\dots,{d}-1$ \textbf{do}
		\State \hspace{5mm}\textbf{Compute} $\sigma_{M_{i,i+1,\dots,{d}}}^2; \sigma_{M_{i+1,\dots,{d}}}^2$
		\State \hspace{5mm}\textbf{Set} $a_{ii}^2=\sigma_{M_{i,i+1,\dots,{d}}}^2-\sigma_{M_{i+1,\dots,{d}}}^2$
		\State \hspace{10.2mm} $a_{{d}{d}}^2=\sigma_{{d}}^2$
		\State \hspace{5mm}\textbf{if} $i\in\{1,\dots,{d}-2\}$ \textbf{do}
		\State \hspace{10mm}\textbf{for} $j=i+1,\dots,{d}-1$ \textbf{do}
		\State \hspace{15mm}\textbf{Compute} $\sigma_{M_{i,j+1,j+2,\dots,{d}}}^2; \sigma_{M_{j+1,j+2,\dots,{d}}}^2$
		\State \hspace{15mm}\textbf{Set} $a_{ij}^2=\sigma_{M_{i,j+1,j+2,\dots,{d}}}^2-\sigma_{M_{j+1,j+2,\dots,{d}}}^2-\sum_{k=i}^{j-1}a_{ik}^2$
		\State \hspace{10mm}\textbf{end for}
			%
		
		\State\hspace{10mm}\textbf{Set} $a_{i{d}}^2=\sigma_{i}^2-\sum_{k=i}^{{d}-1}a_{ik}^2$
		\State \hspace{5mm}\textbf{end if}
		\State\hspace{5mm}\textbf{Set} $a_{{d}-1,{d}}^2=\sigma_{{{d}-1}}^2-a_{{d}-1,{d}-1}^2$
		\State \textbf{end for}. 
		\EndProcedure
	\end{algorithmic}
\end{algorithm}

In Proposition~\ref{estalg2} we have shown that we can compute the diagonal entries of $A$ from the squared scalings $\sigma_{M_{1,2,\dots,{d}}}^2,\sigma_{M_{2,3,\dots,{d}}}^2,\dots,$ $ \sigma_{M_{{d}-1,{d}}}^2,\sigma_{d}^2$ by a recursion algorithm.
Furthermore, we have identified the non-diagonal entries of the $i$-th row of $A$ from
$$(\sigma_{M_{i,i+1,\dots,{d}}}^2,\sigma_{M_{i,i+2,\dots,{d}}}^2,\dots, \sigma_{M_{i,{d}}}^2 ,\sigma_{i}^2),\quad i=1,\dots,{d}.$$
We summarize all these quantities into one column vector ${S}_{M}\in \mathbb{R}^{{d}({d}+1)/2}_+$; i.e.,
\begin{small}
    \begin{align}\label{sm}
	{S}_{M}\coloneqq( {\sigma}_{M_{1,2,\dots,{d}}}^2, {\sigma}_{M_{1,3,\dots,{d}}}^2,\dots,  {\sigma}_{M_{1,{d}}}^2 , {\sigma}_{1}^2,  {\sigma}_{M_{2,3,\dots,{d}}}^2, {\sigma}_{M_{2,4,\dots,{d}}}^2,\dots,  {\sigma}_{M_{2,{d}}}^2, {\sigma}_{2}^2,\dots , {\sigma}_{M_{{d}-1,{d}}}^2, {\sigma}_{{d}-1}^2,  {\sigma}_{d}^2)^\top.
\end{align}
\end{small}
Consider the row-wise vectorized version of the squared entries of the upper triangular matrix $A$, where we use $A^2$ for the {matrix with squared entries of $A$} and its vectorized version 
\begin{align}\label{vecA}
{A^2}\coloneqq(a_{11}^2,\dots,a_{1{d}}^2,a_{22}^2,\dots,a_{2{d}}^2,\dots..,a_{{d}-1,{d}-1}^2, a_{{d}-1,{d}}^2,a_{{d}{d}}^2)^\top.
\end{align}
Note that both vectors ${A^2}$ and ${S}_{M}$ show a similar structure, built from row vectors with ${d},{d}-1,\ldots,1$ components, respectively; so both have ${d}({d}+1)/2$ components.
By means of Proposition~\ref{estalg2} we show that ${A^2}$ can be written as a linear transformation of $S_M$.

\bthe\label{consT}
	Let $S_M$ and ${A^2}$ be as in (\ref{sm}) and (\ref{vecA}), respectively. 	Then
\beam \label{Alinear} 
{A^2} &=& T \, S_M,
\eeam
where $T\coloneqq (t_{uv})_{k\times k}\in \mathbb{R}^{k\times k}$ for $k={d}({d}+1)/2$ has non-zero entries in the rows corresponding to the non-zero components $a^2_{ij}$ in the vector (\ref{vecA}) given by
\begin{enumerate}
\item[]
$a^2_{ii}:\quad t_{l_{ii}, l_{ii}}=1$, $t_{l_{ii}, l_{i+1,i+1}}=-1 $ for  $i=1,\dots,{d}-1$;
\item[]
$a^2_{{d}{d}}:\quad t_{l_{ii}, l_{ii}}=1$ for $i={d}$; 
\item[]
$a^2_{ij}:\quad t_{l_{ij},l_{ij}}=1,  t_{l_{ij},l_{j+1,j+1}}=-1, t_{l_{ij},l_{i,j-1}}=-1, t_{l_{ij},l_{jj}}=1$ for $i<j\leq {d}-1$; 
\item[]
$a^2_{i{d}}:\quad t_{l_{i{d}},l_{i{d}}}=1, t_{l_{i{d}},l_{i,{d}-1}}=-1,  t_{l_{i{d}},l_{{d}{d}}}=1$ for $i=1,\dots,{d}-1$,
\end{enumerate}
where {$l_{ij}=(j-{d})+\sum_{k=0}^{i-1}({d}-k)$ for $i=1,...,d$ and $j\geq i$}. All other entries of $T$ are equal to zero.

\ethe

\bproof
(i) \, From \eqref{recformula1} we know that 
\begin{align*}
	 a_{ii}^2 \, =\, \sigma_{M_{i,\dots,{d}}}^2-\sigma_{M_{i+1,\dots,{d}}}^2, \,\,	  i=1,\dots,{d}-1, \, \, \mbox{ and } \, a_{{d}{d}}^2  =  \sigma_{{d}}^2=1
\end{align*}
(ii) \, Starting from \eqref{recformula2} we show by induction that for $i=1,\dots,{d}-2$ and $j=i+1,\dots,{d}-1$,
	\begin{align}\label{AS1}
	a_{ij}^2=(\sigma_{M_{i,j+1,j+2,\dots,{d}}}^2-\sigma_{M_{j+1,j+2,\dots,{d}}}^2)-(\sigma_{M_{i,j,\dots,{d}}}^2-\sigma_{M_{j,\dots,{d}}}^2).
	\end{align}
	For $j=i+1$ we clearly have that $a_{i,i+1}^2=(\sigma_{M_{i,i+2,\dots,{d}}}^2-\sigma_{M_{i+2,\dots,{d}}}^2)-(\sigma_{M_{i,i+1,\dots,{d}}}^2-\sigma_{M_{i+1,\dots,{d}}}^2).$
	Suppose now that this holds for all $i<j<{d}-2$. We show now that it holds for $j+1$. More specifically,
	\begin{align*}
	a_{i,j+1}^2&=(\sigma_{M_{i,j+2,j+3,\dots,{d}}}^2-\sigma_{M_{j+2,j+3,\dots,{d}}}^2)-\sum_{k=i}^{j}a_{ik}^2\\
	&=(\sigma_{M_{i,j+2,j+3,\dots,{d}}}^2-\sigma_{M_{j+2,j+3,\dots,{d}}}^2)-\sum_{k=i}^{j}[(\sigma_{M_{i,k+1,k+2,\dots,{d}}}^2-\sigma_{M_{k+1,k+2,\dots,{d}}}^2)-(\sigma_{M_{i,k,\dots,{d}}}^2-\sigma_{M_{k,\dots,{d}}}^2)]\\
	&=(\sigma_{M_{i,j+2,j+3,\dots,{d}}}^2-\sigma_{M_{j+2,j+3,\dots,{d}}}^2)-(\sigma_{M_{i,j+1,\dots,{d}}}^2-\sigma_{M_{j+1,\dots,{d}}}^2),
	\end{align*}
	where the last equality is due to the telescoping sum after noting that $\sigma_{M_{i,i,\dots,{d}}}^2=\sigma_{M_{i,\dots,{d}}}^2$.
	
	(iii) Similar to (ii), for $a_{i{d}}$ with $i<{d}$,
	\begin{align}\label{ASp}
	a_{i{d}}^2=\sigma_{i}^2-(\sigma_{M_{i,{d}}}^2-\sigma_{{d}}^2),
	\end{align}
	while for $i={d}$ we obtain again $a_{{d}{d}}^2=\sigma_{{d}}^2$.
	
	(iv) \, The results in (i)-(iii) show already the linearity between the vectors ${A^2}$ and $S_M$. 
	It remains to construct the matrix $T=(t_{uv})_{k\times k}$ for $k={d}({d}+1)/2$ such that ${A^2}=T S_M$. 
	We start by renumbering the vector components in ${A^2}$ and replacing the double indices $ij$ for $i=1,\dots,{d}$ and $j\ge i$ by 
	\begin{align}\label{renumber}
	l_{ij}=(j-{d})+\sum_{k=0}^{i-1}({d}-k),\hspace{4mm}{j\geq i}.
	\end{align}

Then the vector in \eqref{vecA} becomes $(a^2_1,a^2_2,\dots,a^2_{{d}({d}+1)/2})$.
Moreover, \eqref{renumber} maps $ii$ into $l_{ii}$ and $l_{i,i+k}=l_{ii}+k$ for $i=1,\dots,{d}$ and $1\le i+k\leq {d}$.

Also notice that for all $i=1,\dots,{d}$, by the structure of $S_M$, its $l_{ij}$-th component is $S_{l_{ij}}=\sigma_{M_{i,j+1,\dots,{d}}}^2$ for $i\le j<{d}$, and $S_{l_{i{d}}}=\sigma_{i}^2$.

	(v) \, We construct now $T$, where by (i)-(iii) $T$ contains many zeros, and we focus on the non-zero entries.
	
	Since $a^2_{ii}$ becomes $a^2_{l_{ii}}$ for $i=1,\dots,{d}$
	and, by the structure of $S_M$, the $l_{11},\dots,l_{{d}{d}}$-th components of $S_M$ are $\sigma_{M_{1,2,\dots,{d}}}^2, \sigma_{M_{2,3,\dots,{d}}}^2,$ $\dots,\sigma_{M_{{d}-1,{d}}}^2,\sigma_{M_{d}}^2=\sigma_{d}^2$, respectively,
	in each $l_{ii}$-th row of $T$ there must be a 1 on the diagonal; i.e. $t_{l_{ii},l_{ii}}=1$. Furthermore, $t_{l_{ii}, l_{i+1,i+1}}=-1$, and all other entries in this row are 0.
	
	Similarly, we find the other non-zero entries by representation \eqref{AS1} and \eqref{ASp}.
\eproof

\bexam
We illustrate the linear transformation (\ref{Alinear}) for ${d}=4$, which clarifies the structure also for higher dimensions. For a recursive ML vector with 4 nodes, by (\ref{AS1}) and (\ref{ASp}) the identity ${A}^2=T S_M$ becomes
$$
\begin{bmatrix}
a_{11}^2\\
a_{12}^2\\
a_{13}^2\\
a_{14}^2\\
\hline
a_{22}^2\\
a_{23}^2\\
a_{24}^2\\
\hline
a_{33}^2\\
a_{34}^2\\
\hline
a_{44}^2\\
\end{bmatrix}
\, = \,
\left[
\begin{array}{rrrr|rrr|rr|r}
1  & 0 & 0 & 0 \, & -1 & 0 & 0 \, & 0 & 0 \, & 0\\
-1 & 1 & 0 & 0 \, & 1 & 0 & 0 \, & -1 & 0 \, & 0\\
0  & -1 & 1 & 0 \, & 0 & 0 & 0 \, & 1 & 0 \, & -1\\
0  & 0 & -1 & 1 \, & 0 & 0 & 0 \, & 0 & 0 \, & 1\\
\hline
0 & 0 & 0 & 0 \, & 1 & 0 & 0 \, & -1 & 0 \, & 0\\
0 & 0 & 0 & 0 \, & -1 & 1 & 0 \, & 1 & 0 \, & -1\\
0  & 0 & 0 & 0 \, & 0 & -1 & 1 \, & 0 & 0 \, & 1\\
\hline
0 & 0 & 0 & 0 \, & 0 & 0 & 0 \, & 1 & 0 \, & -1\\
0  & 0 & 0 & 0 \, & 0 & 0 & 0 \, & -1 & 1 \, & 1\\
\hline
0  & 0 & 0 & 0 \, & 0 & 0 & 0 \, & 0 & 0 \, & 1\\
\end{array}
\right]
\, \times \, 
\begin{bmatrix}
\sigma^2_{M_{1,2,3,4}}\\
\sigma^2_{M_{1,3,4}}\\
\sigma^2_{M_{1,4}}\\
\sigma^2_{1}\\
\hline
\sigma^2_{M_{2,3,4}}\\
\sigma^2_{M_{2,4}}\\
\sigma^2_{2}\\
\hline
\sigma^2_{M_{3,4}}\\
\sigma^2_{3}\\
\hline
\sigma^2_{4}\\
\end{bmatrix}.
$$
\eexam

\section{Reordering the Vector Components}\label{s5}

In Section~4 we have assumed that the DAG underlying the recursive ML vector $\boldsymbol{X}$ is well-ordered. In a real life situation this will rarely be the case, and the components of $\bsx$ have to be reordered. 
In this section we use again the scalings for finding a causal order of the components of $\boldsymbol{X}$. This is achieved by first identifying the initial nodes, which can be ordered arbitrarily within all initial nodes. The same applies for every following generation: within each generation the order is arbitrary. All such obtained partial orders correspond to equivalent well-ordered DAGs and we construct one representative DAG by the method {as follows.}

We start with an auxiliary result which ensures that the recursive ML vector $\bsx=A \times_{\max}\boldsymbol{Z}$ is invariant with respect to column permutations of the ML coefficient matrix $A$.

\ble	\label{colpermutation}
	Let $\boldsymbol{X}\in\mathbb{R}^{d}_+$ be a recursive ML vector with ML coefficient matrix $A\in\R_+^{{d}\times{d}}$ and innovations vector $\bsz\in\R_+^{d}$. 
	Let $\pi$ be a permutation of the columns of $A$. 
	Then $\boldsymbol{X}^{\pi}=\boldsymbol{X}$. 
\ele

\begin{proof}
Denote by $\pi: \{1,\dots,{d}\}\to \{1,\dots,{d}\}$ an arbitrary permutation of the columns of $A$,	and notice that an arbitrary component $i\in \{1,\dots,{d}\}$ of $\boldsymbol{X}$ is given by 
	$$X_i=\underset{k=1,\dots,{d}}{\bigvee}a_{ik}Z_k =\underset{\pi(k)=1,\dots,{d}}{\bigvee}a_{i\pi(k)}Z_{\pi(k)}=\underset{k'=1,\dots,{d}}{\bigvee}a_{ik'}Z_{k'}=:X_i^{\pi}$$ and, 
	therefore, $\boldsymbol{X}^{\pi}=\boldsymbol{X}$.
\end{proof}

{Since by Lemma~\ref{colpermutation} the distribution of $\bs X$ is invariant with respect to column permutations,} we can assume that an arbitrarily ordered recursive ML vector $\boldsymbol{X^*}=(X_{1^*},\dots,X_{{d}^*})=A^*\times_{\max}\boldsymbol{Z}$, needs only row permutations in $A$, denoted by $\nu: (1^*,\dots,{d}^*) \to (1,\dots,{d})$, to become well-ordered:
\[
\nu: A^*=\begin{bmatrix}
a_{1^*1}&\hdots &a_{1^*{d}}\\
\vdots& \ddots & \vdots\\
a_{{d}^*1}&\hdots &a_{{d}^*{d}}
\end{bmatrix}
\to
A_\nu = A 
=\begin{bmatrix}
a_{11}&\hdots &a_{1{d}}\\
\vdots& \ddots & \vdots\\
0&\hdots &a_{{d}{d}}
\end{bmatrix}.
\]
We refer to entries of the matrix $A^*$ as $a_{i^*k}$ and to entries from the row-permuted matrix $A_\nu$ as $a_{ik}\coloneqq a_{\nu(i^*)k}$, corresponding to a reordered vector $\boldsymbol{X}_{\nu}=\bsx$ {in distribution} on a well-ordered DAG.

\subsection{Reordering the Vector Components: Finding the Initial Nodes}  \label{inod}

In order to find an initial node, we fix one node which we want to investigate and extend the notation from \eqref{maxp} to maxima over a ${d}$-tuple of partly scaled random variables:
for $a>0$ we define for $m \in\{1,\dots,{d}\}$,
\begin{equation}\label{Mmam}
M_{{-m},am} \coloneqq\max(X_1,\dots ,X_{m-1}, aX_m, X_{m+1},\dots,X_{d}).
\end{equation}
By Lemma~\ref{scalcoll}, also $M_{{-m},am}\in RV_+(2)$. 
The following theorem provides necessary and sufficient conditions for the identification of initial nodes. 

\begin{theorem}
	\label{theo2}
	Let $\boldsymbol{X^*}=(X_{1^*},\dots ,X_{d^*})$ be an arbitrarily ordered recursive ML vector with ML coefficient matrix $A^*$ satisfying  (A1)-(A3). 
	Then the following holds.\\
		(a) If ${m^*\in\{1^*,\dots ,{d}^*\}}$ is an initial node of the recursive ML vector $\boldsymbol{X}_{\nu}$ in a well ordered DAG, then for all scalars $a>1$ it holds that 
		\begin{align}\label{lem2cr}
		\sigma_{M_{-{m^*}, am^*}}^2-\sigma_{M_{1,\dots,{d}}}^2=a^2-1.
		\end{align}
		(b) If there exists a scalar $a>1$, such that for ${m^*\in\{1^*,\dots ,{d}^*\}}$ eq. (\ref{lem2cr}) holds, then $m^*$  is an initial node of the recursive ML vector $\boldsymbol{X}_{\nu}$ in a well ordered DAG.
\end{theorem}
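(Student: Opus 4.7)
The plan is to express everything in the well-ordered frame and reduce both sides of the claimed identity to explicit expressions in the entries of $A$. By Lemma~\ref{colpermutation} and the row-permutation argument preceding the theorem, I may relabel $m^*$ as $m$ in the representation where $A$ is upper triangular, and then apply Lemma~\ref{scalcoll} to both $M_{-m,am}$ and $M_{1,\dots,d}$. Since Lemma~\ref{ineq} gives that $a_{kk}$ is the maximum of column $k$, the column $m$ contribution to $\sigma^2_{M_{-m,am}}$ is $a\,a_{mm}$ (because $a>1$), whereas for $k\neq m$ the diagonal index $k$ itself lies in $\{i\neq m\}$, giving $\max_{i\neq m}a_{ik}=a_{kk}$. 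Using also that $a_{mk}$ is nonzero only for $k\in\an(m)$, this yields the clean representation
\[
\sigma^2_{M_{-m,am}} - \sigma^2_{M_{1,\dots,d}} \;=\; (a^2-1)\,a_{mm}^2 \;+\; \sum_{k\in\an(m)}\bigl[(a_{kk}\vee a\,a_{mk})^2 - a_{kk}^2\bigr].
\]

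Part (a) is then immediate: if $m$ is initial, then $\an(m)=\emptyset$ so the sum is empty, and (A3) gives $a_{mm}^2=\sigma_m^2=1$; the right-hand side collapses to $a^2-1$ for every $a>1$.

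For part (b) I would argue by contrapositive. Assume $m$ is not initial, so $\an(m)\neq\emptyset$. By Lemma~\ref{ineq}, $a_{mk}<a_{kk}$ for all $k\in\an(m)$, and by (A3), $a_{mm}^2=1-\sum_{k\in\an(m)}a_{mk}^2<1$. Set
\[
g(a) \;:=\; \sigma^2_{M_{-m,am}} - \sigma^2_{M_{1,\dots,d}} - (a^2-1),
\]
and aim to show $g(a)<0$ for every $a>1$. Ordering $\an(m)=\{k_1,\dots,k_s\}$ along the thresholds $r_j:=a_{k_jk_j}/a_{mk_j}>1$, on each interval $(r_j,r_{j+1})$ exactly the indices $k_1,\dots,k_j$ contribute positive summands to $g$. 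An algebraic expansion using the standardization identity $a_{mm}^2+\sum_{l=1}^{s}a_{mk_l}^2=1$ then rewrites $g$ there as an affine function of $a^2$ with coefficient $-\sum_{l=j+1}^{s}a_{mk_l}^2\leq 0$, strictly negative for $j<s$. Continuity at each $r_j$ is automatic since the newly activated summand vanishes exactly at its threshold, so $g$ is continuous and nonincreasing on $(1,\infty)$ and strictly decreasing on $(1,r_1)$. Combined with $g(1)=0$, this forces $g(a)<0$ for all $a>1$, which is the desired contrapositive.

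The main obstacle is this monotonicity analysis of $g$ in part (b): the function changes expression at every threshold $r_j$, and one must verify that all these pieces paste together into a nonincreasing function. The crucial cancellation, which forces the coefficient of $a^2$ to be exactly zero only on the last piece, comes directly from the standardization identity $\sum_{k\in\An(m)}a_{mk}^2=1$; it is precisely what prevents $g$ from ever returning to zero after starting at $g(1)=0$.
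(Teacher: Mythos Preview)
Your proof is correct and rests on the same decomposition as the paper: pass to the well-ordered frame, use Lemma~\ref{scalcoll} together with Lemma~\ref{ineq} to write
\[
\sigma^2_{M_{-m,am}}-\sigma^2_{M_{1,\dots,d}}=(a^2-1)a_{mm}^2+\sum_{k\in\an(m)}\bigl[(a_{kk}\vee a\,a_{mk})^2-a_{kk}^2\bigr],
\]
and then exploit the standardization $\sum_{k\in\An(m)}a_{mk}^2=1$. Part~(a) matches the paper exactly.

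For part~(b) your piecewise monotonicity analysis of $g$ is correct but more elaborate than needed. The paper bypasses it with a single term-by-term bound: for each $k\in\an(m)$ one has
\[
(a_{kk}\vee a\,a_{mk})^2-a_{kk}^2\;<\;(a^2-1)a_{mk}^2,
\]
because either the left side is $0<(a^2-1)a_{mk}^2$, or it equals $a^2a_{mk}^2-a_{kk}^2<a^2a_{mk}^2-a_{mk}^2$ by Lemma~\ref{ineq}. Summing and using standardization gives the strict inequality directly, for every fixed $a>1$, without tracking thresholds or pasting pieces. Your argument proves more (global monotonicity of $g$), but the paper's bound is the shorter route to the stated conclusion.
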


\begin{proof}
	$(a)$ 
	Let $X_{m^*}$ be the component of $\boldsymbol{X^*}$ such that $m^*$ is an initial node.  
	W.l.o.g. we may set 
	 $X_{\nu(m^*)}=X_{{d}}$. By the representation (\ref{dageq1}), and given the standardized scalings, we know that $a_{\nu(m^*),1}=\dots=a_{\nu(m^*),{d}-1}=0$, and $a_{\nu(m^*),{d}}=1$. 
    By Lemma \ref{scalcoll}(b), for some $a>1$, we compute 
	\[\sigma_{M_{1,\dots,{d}}}^2=a_{11}^2+\dots +a_{{d}-1, {d}-1}^2+1
	\quad\mbox{and}\quad
	\sigma_{M_{-{m^*}, a m^*}}^2=a_{11}^2+\dots +a_{{d}-1, {d}-1}^2+a^2. \]
	Taking the difference yields \eqref{lem2cr}.\\
	$(b)$ \, We prove this by contradiction. 
	Let $\nu$ be a row permutation that transforms $\bsx^*$ into a recursive ML vector $\bsx_{\nu}$ on a well-ordered DAG, and suppose that for some non-initial node, say $k^*\in\{1^*,\dots,{d}^*\}$, there exists some $a>1$ such that 
	\begin{align}\label{contra1}
	    \sigma_{M_{-k^*, a k^*}}^2=\sigma_{M_{1,\dots,{d}}}^2 +a^2-1.
	\end{align} 
	Given that a recursive ML vector can have more than one initial node, w.l.o.g. assume that there are $0<l<{d}$ initial nodes. Since $k^*$ is not an initial node, we know by Definition \ref{orderdef} that $\nu({k^*})\leq {d}-l$. 
	Furthermore, as ${k^*}$ must have an ancestor, there exists some node $j>\nu({k^*})$ for $j\in\{1,...,{d}\}\setminus{\{\nu(k^*)\}}$, such that $j\in \textrm{an}(\nu({k^*}))$ and, hence, $a_{\nu({k^*})j}>0$. 
	Since $\nu$ is a row permutation, which permutes $\boldsymbol{X^*}$ into a recursive ML vector on a well ordered DAG, we know that there exists some $j^*\in\{1^*,...,{d}^*\}\setminus{\{k^*\}}$, such that $\nu(j^*)=j$. By Lemma \ref{ineq}, and since $j>\nu({k^*})$, and $j\in \textrm{an}(\nu({k^*}))$, it follows that $a_{jj}>a_{\nu(k^*)j}> 0$.
	This implies that
	\begin{align*}
	\sigma_{M_{-\nu(k^*), a\nu(k^*)}}^2 &= \sum^{j=\nu(k^*)-1}_{j=1} a_{jj}^2 + a^2a_{\nu(k^*), \nu(k^*)}^2 +\sum_{j=\nu(k^*)+1}^{d} (a^2a_{\nu(k^*)j}^2)\vee a_{jj}^2\\
	\sigma_{M_{1,\dots,{d}}}^2 &= \sum^{j=\nu(k^*)-1}_{j=1} a_{jj}^2 +a_{\nu(k^*), \nu(k^*)}^2+\dots+a_{{d}{d}}^2. 
	\end{align*}
	The difference gives 
	\begin{align}\label{nod1dif}
	\sigma_{M_{-\nu(k^*), a\nu(k^*)}}^2-\sigma_{M_{1,\dots,{d}}}^2 & = (a^2-1)a_{\nu(k^*), \nu(k^*)}^2+\sum_{j=\nu(k^*)+1}^{d} ((a^2a_{\nu(k^*)j}^2)\vee a_{jj}^2-a_{jj}^2).
	\end{align} 
	Next, for the summands in the sum on the right-hand side, following Lemma \ref{ineq}, we obtain 	
	\begin{align}\label{bound}
	(a^2a_{\nu(k^*)j}^2)\vee a_{jj}^2- a_{jj}^2= \begin{cases}
	a^2a_{\nu(k^*)j}^2-a_{jj}^2<(a^2-1)a_{\nu(k^*)j}^2, \hspace{2mm}$  	if$ \hspace{1mm} {a^2a_{\nu(k^*)j}^2>a_{jj}^2}\\
	0, \hspace{6cm}  $else$.
	\end{cases}
	\end{align}
	This implies
	$$\sum_{j=\nu(k^*)+1}^{d} (a^2a_{\nu(k^*)j}^2\vee a_{jj}^2-a_{jj}^2)<\sum_{j=\nu(k^*)+1}^{d}(a^2-1)a_{\nu(k^*)j}^2,$$
	which, when combined with eq. \eqref{nod1dif}, yields the inequality
	\begin{align*}
	\sigma_{M_{-\nu(k^*), a\nu(k^*)}}^2-\sigma_{M_{1,\dots,{d}}}^2
	< (a^2-1) \, \Big(a_{\nu(k^*), \nu(k^*)}^2+\sum_{j=\nu(k^*)+1}^{d}a_{\nu(k^*)j}^2\Big) \, = \, a^2-1.
	\end{align*}
However, this is a contradiction to eq. \eqref{contra1}.
\end{proof}

\subsection{Reordering the Vector Components: Finding the Descendants} \label{desnod}

Once we have identified the initial nodes of the recursive ML vector $\boldsymbol{X}$, we provide a necessary and sufficient criterion for identifying the causal order of the descendants. 

We proceed iteratively by identifying every new generation in the DAG.
Suppose we have found all nodes which belong to a certain number of generations, and that there are $h\le {d}-1$ such nodes which we have ordered as ${d},{d}-1,\dots,{d}-h+1$.  
Let $X_{\nu^{-1}({d})},\dots,X_{\nu^{-1}({d}-h+1)}$ be the corresponding components in the arbitrarily ordered recursive ML vector $\boldsymbol{X^*}.$

The next logical step is to investigate, whether ${m^*}\in\{1^*,\dots,{d}^*\}\setminus \{\nu^{-1}({d}),\dots,$ $\nu^{-1}({d}-h+1)\}$, belongs to the next generation of nodes in the causal order. Define 
$\boldsymbol{h}\coloneqq \{\nu^{-1}({d}),\dots,\nu^{-1}({d}-h+1)\}$ and let $\boldsymbol{h}^c$ contain all other components.
Then we take the maximum over a ${d}$-tuple of partly scaled random variables: for $a>0$ define
\begin{equation}\label{partscaled}
M_{\boldsymbol{h}_a, m^*_a, \{\boldsymbol{h}\cup \{m^*\}\}^c}\coloneqq\max(aX_{\nu^{-1}({d})},\dots,aX_{\nu^{-1}({d}-h+1)}, aX_{m^*}, \max_{k^*\notin \{\bs{h}\cup\{m^*\}\}}X_{k^*}).
\end{equation} 

\begin{theorem} \label{mainorder}
	Let $\boldsymbol{X^*}=(X_{1^*},\dots,X_{{d}^*})$ be an arbitrarily ordered recursive ML vector with ML coefficient matrix $A^*$ satisfying  (A1)-(A3). Let $\boldsymbol{h}=\{\nu^{-1}({d}),\dots,\nu^{-1}({d}-h+1)\}$ be the first $h$ nodes of the recursive ML vector $\boldsymbol{X}_\nu$ of a well-ordered DAG,
	which have already been ordered. Then the following holds:\\
		(a) If ${m^*}\notin\boldsymbol{h}$ has no ancestors in $\boldsymbol{h}^c$, then for all scalars $a>1$ it holds that
	\begin{align}
	\sigma_{\boldsymbol{M}_{\boldsymbol{h}_a,m^*_a,\{\boldsymbol{h}\cup \{m^*\}\}^c}}^2-\sigma_{M_{1,...,d}}^2=(a^2-1)\sigma_{M_{\boldsymbol{h}, m^*}}^2.\label{maindif}
	\end{align}
		(b) If there exists a scalar $a>1$ such that (\ref{maindif}) holds, then we identify ${m^*}\notin\boldsymbol{h}$ as the $(h+1)$-th node.
\end{theorem}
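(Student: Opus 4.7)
My approach is to reduce to the well-ordered case via the row permutation $\nu$ and then evaluate both sides of \eqref{maindif} explicitly using the column-wise formula of Lemma \ref{scalcoll}(a) together with the diagonal dominance from Lemma \ref{ineq}. Throughout I identify $\boldsymbol{h}$ with positions $\{d-h+1,\dots,d\}$ in the resulting upper-triangular $A=A_\nu$, and I use that scaling the rows indexed by $\boldsymbol{h}\cup\{m^*\}$ by $a$ amounts to multiplying the corresponding rows of $A$ by $a$ before taking the column-wise max.

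\medskip

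For part (a), the hypothesis that $m^*$ has no ancestors in $\boldsymbol{h}^c$ lets me choose the well-ordering $\nu$ so that $\nu(m^*)=d-h$. Then columns $k<d-h$ receive zero contribution from every scaled row by upper triangularity and contribute the unchanged $a_{kk}^2$, while for columns $k\geq d-h$ the diagonal entry $a_{kk}$ sits in a scaled row and, by Lemma \ref{ineq}, dominates the rest of its column, giving a contribution of $a^2 a_{kk}^2$. Summing yields $\sigma^2_{M_{\boldsymbol{h}_a,m^*_a,\{\boldsymbol{h}\cup\{m^*\}\}^c}}-\sigma^2_{M_{1,\dots,d}}=(a^2-1)\sum_{k=d-h}^d a_{kk}^2$. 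The identical column split applied to $M_{\boldsymbol{h},m^*}$ gives $\sigma^2_{M_{\boldsymbol{h},m^*}}=\sum_{k=d-h}^d a_{kk}^2$, proving \eqref{maindif}.

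\medskip

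For part (b), I argue by contraposition: suppose $m^*$ has an ancestor $j^*\in\boldsymbol{h}^c$. In any well-ordering extending $\boldsymbol{h}$, one has $\nu(j^*)\in\{1,\dots,d-h\}$ and, by the ancestor relation, $t\coloneqq\nu(m^*)<\nu(j^*)\leq d-h$. The outer columns $k<t$ and $k>d-h$ behave exactly as in part (a). The new ingredient is the intermediate columns $k\in\{t+1,\dots,d-h\}$, in which only row $t$ is scaled while the diagonal entry $a_{kk}$ is not; by Lemma \ref{ineq} the column-wise max simplifies to $\max(a^2 a_{tk}^2,a_{kk}^2)$. Subtracting $\sigma^2_{M_{1,\dots,d}}$ and the conjectured right-hand side $(a^2-1)\sigma^2_{M_{\boldsymbol{h},m^*}}$ leaves, in each such column, the residual $\max(a^2 a_{tk}^2,a_{kk}^2)-a_{kk}^2-(a^2-1)a_{tk}^2$; a short case split shows this residual equals $a_{tk}^2-a_{kk}^2<0$ in the first regime (strict by Lemma \ref{ineq}) and $-(a^2-1)a_{tk}^2$ in the second (strictly negative whenever $a_{tk}>0$). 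Because $j^*\in\an(m^*)$ forces $a_{t,\nu(j^*)}>0$, at least one residual is strictly negative for every $a>1$, contradicting \eqref{maindif}. The main obstacle is precisely this intermediate-column case analysis: one must track the max split into its two regimes and use Lemma \ref{ineq} to secure strict negativity in each.
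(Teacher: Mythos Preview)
Your proof is correct and follows essentially the same route as the paper's: both parts reduce to the well-ordered matrix $A_\nu$, evaluate the three squared scalings column-by-column via Lemma~\ref{scalcoll}(a), and use the diagonal dominance of Lemma~\ref{ineq}; your contrapositive framing of (b) with per-column residuals $\max(a^2 a_{tk}^2,a_{kk}^2)-a_{kk}^2-(a^2-1)a_{tk}^2$ is just a reorganisation of the paper's bound \eqref{bound}, leading to the same strict inequality.
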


\begin{proof}
	$(a)$ W.l.o.g. let ${m^*}$ be a node such that $\nu(m^*)={d}-h$. 
	Consider the squared scaling of ${M}_{\boldsymbol{h}_a,m^*_a,\{\boldsymbol{h}\cup \{m^*\}\}^c}$ as in \eqref{partscaled}, and $M_{1,\dots,{d}}$. 
	By representation \eqref{dageq1} and Lemma \ref{scalcoll}, and following similar steps as in the proof of Theorem \ref{theo2}(i), we find
	\begin{align*}
	\sigma_{M_{\boldsymbol{h}_a,m^*_a,\{\boldsymbol{h}\cup \{m^*\}\}^c}}^2-\sigma_{M_{1,\dots,{d}}}^2&=a^2(\sum_{i={d}-h+1}^{d}a_{ii}^2+a_{\nu(m^*),\nu(m^*)}^2)+\sum_{j=1}^{{d}-h-1}a_{j j}^2-(\sum_{i=1}^{d}a_{ii}^2)\\
	&=(a^2-1)(\sum_{i={d}-h}^{d}a_{ii}^2)=(a^2-1)\sigma_{M_{\boldsymbol{h},m^*}}^2.
	\end{align*}
	Therefore, (\ref{maindif}) is satisfied. \\
	$(b)$\, Suppose now that for 
	$X_{m^*}$ (${m^*}\notin\boldsymbol{h}$) there exists some $a>1$ such that 
	\begin{align}\label{useeq}
	\sigma_{\boldsymbol{M}_{\boldsymbol{h}_a,m^*_a,\{\boldsymbol{h}\cup \{m^*\}\}^c}}^2-\sigma_{M_{1,\dots,{d}}}^2=(a^2-1)\sigma_{M_{\boldsymbol{h}, m^*}}^2.
	\end{align}
	We have the following system of equalities:
	\begin{align*}	
	\sigma_{M_{1,\dots,{d}}}^2 &=\sum_{i=1}^{d}a_{ii}^2;\\ 
	\sigma_{M_{\boldsymbol{h},m^*}}^2 &=\sum_{i={d}-h+1}^{d}a_{ii}^2+\sum_{j=\nu(m^*)}^{{d}-h} a_{\nu(m^*)j}^2;\\
	\sigma_{\boldsymbol{M}_{\boldsymbol{h}_a,m^*_a,\{\boldsymbol{h}\cup \{m^*\}\}^c}}^2 &=a^2(\sum_{i={d}-h+1}^{d}a_{ii}^2+a_{\nu(m^*), \nu(m^*)}^2)+\sum_{j=\nu(m^*)+1}^{{d}-h} ((a^2a_{\nu(m^*)j}^2)\vee a_{jj}^2) +\sum_{j=1}^{\nu(m^*)-1}a_{jj}^2; \\ 
	\sigma_{\boldsymbol{M}_{\boldsymbol{h}_a,m^*_a,\{\boldsymbol{h}\cup \{m^*\}\}^c}}^2-\sigma_{M_{1,\dots,{d}}}^2 &=(a^2-1)(\sum_{i={d}-h+1}^{d}a_{ii}^2+a_{\nu(m^*),\nu(m^*)}^2) +\sum_{j=\nu(m^*)+1}^{{d}-h} ((a^2a_{\nu(m^*)j}^2)\vee a_{jj}^2-a_{jj}^2).
	\end{align*}	
	The summands in the last summation are non-negative. \\
	For $\nu(m^*)={d}-h$ we can take $m^*$ as the $(h+1)$-th node. 
	Then the difference is equal to $(a^2-1)\sigma_{M_{\boldsymbol{h},m^*}}^2$. 
	Similarly, if $a_{\nu(m^*)j}=0$ for $\nu(m^*)+1\leq j\leq {d}-h$, then this implies that 
	${\nu^{-1}(j)}\notin \textrm{an}(m^*)$ and, thus, that $m^*$ can be chosen as the $(h+1)$-th node. \\
	Suppose now that $\nu(m^*)< {d}-h$, and $a_{\nu(m^*)j}> 0$ for some $\nu(m^*)+1\leq j\leq {d}-h$. Then, by Lemma~\ref{ineq} and since $a>1$, a bound similar to that in (\ref{bound}) gives
	$$\sum_{j=\nu(m^*)+1}^{{d}-h} (a^2a_{\nu(m^*)j}^2\vee a_{jj}^2-a_{jj}^2)<\sum_{j=\nu(m^*)+1}^{{d}-h}(a^2-1)a_{\nu(m^*)j}^2$$
	which contradicts (\ref{useeq}), since 
	\begin{align*}
	\sigma_{\boldsymbol{M}_{\boldsymbol{h}_a,m^*_a,\{\boldsymbol{h}\cup \{m^*\}\}^c}}^2-\sigma_{M_{1,...,d}}^2<(a^2-1)\sigma_{M_{\boldsymbol{h}, m^*}}^2.
	\end{align*}
	Therefore we have that either $\nu(m^*)={d}-h$, or $a_{\nu(m^*)j}=0$ for $\nu(m^*)+1\leq j\leq {d}-h$. In both cases $m^*$ can be chosen as the $(h+1)$-th node.
\end{proof}

One of the consequences of the proof of Theorem~\ref{mainorder} provides a criterion, when two or more components of $\boldsymbol{X}$ are neither descendants nor ancestors of one another in a DAG. 

\bco\label{exzero}
	Let $\boldsymbol{X}$ be as in Theorem \ref{mainorder} and suppose that we have found the first $h$ nodes. If
	$\sigma_{\boldsymbol{M}_{\boldsymbol{h}_a,m^*_a,\{\boldsymbol{h}\cup \{m^*\}\}^c}}^2-\sigma_{M_{1,\dots,{d}}}^2=(a^2-1)\sigma_{M_{\boldsymbol{h}, m^*}}^2$ for $m^*\in\{i^*,j^*\}\cap\boldsymbol{h}^c$ and some $a>1$, then $a_{i^*j^*}=a_{j^*i^*}=0.$ 
\eco

	As another direct consequence of Theorem~\ref{mainorder} we obtain the following corollary.
	
\bco\label{algord}
	Let $\boldsymbol{X}$ be as in Theorem \ref{mainorder} and suppose that we have found the first $h$ nodes.
	Let $X_{i^*},X_{j^*}$ be such that $X_{i^*}$ is the $(h+1)$-th node, while $X_{j^*}$ belongs to a different generation. 
	Define for $m^*\in\{1^*,\dots,{d}^*\}$ and $a>1$
	$$\Delta_{m^*}\coloneqq \sigma_{\boldsymbol{M}_{\boldsymbol{h}_a,m^*_a,\{\boldsymbol{h}\cup \{m^*\}\}^c}}^2-\sigma_{M_{1,\dots,{d}}}^2-(a^2-1)\sigma_{M_{\boldsymbol{h}, m^*}}^2.$$  Then $\Delta_{i^*}>\Delta_{j^*}$.
\eco

\bexam[The reordering algorithms for a 10 nodes model]\label{10ex}\\
We assess the performance of the {reordering procedure} 
based on Theorems~\ref{theo2} and~\ref{mainorder}. 
We assume that the recursive ML vectors have standard Fr\'echet(2) components, which
allows us to estimate the scalings by the standard MLE given for $M_{\bs{h}}$ by (see \citet{Krali}, Section 5.4 for details)
$$\hat\sigma^2_{M_{\bs{h}}}= \Big(\frac1n\sum_{\ell=1}^n \frac1{m_{\bs{h}_\ell}^{2}}\Big)^{-1},$$
where $m_{\bs{h}_\ell}$ is the empirical maximum $\vee_{i\in \boldsymbol{h}}X_{\ell i}$ for the $\ell$-th observation.
We consider the $10$-nodes DAG depicted in Figure \ref{fig:M1}.
All diagonal entries of the edge weight matrix $C_{10}\in\smash{\mathbb{R}^{10\times 10}_+}$ (see (\ref{semequat})) are set to $c_{ii}=1$ and
the squares of the non-diagonal non-zero entries of the upper-triangular matrix are drawn from a discrete uniform distribution over $\smash{\{2/1,2/2,\dots,2/8\}}$. 
We have chosen edge weights $c_{ij}$ larger than 1, equal to 1, and smaller than 1 to capture amplifying and downsizing risk in the network.
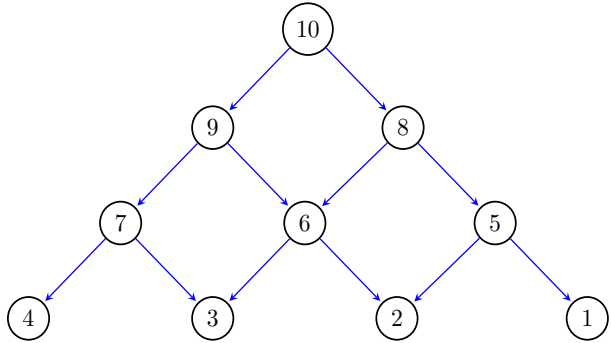
\begin{figure}[H]
	\centering
	\resizebox{8cm}{4.5cm}{\begin{tikzpicture}[
		> = stealth,
		shorten > = 1pt, 
		auto,
		node distance = 2cm, 
		semithick 
		]
		\tikzstyle{every state}=[
		draw = black,
		thick,
		fill = white,
		minimum size = 4mm,scale=1
		]
		\node[state] (10) {$10$};
		\node[state] (8) [below right=1cm and 1cm of 10] {$8$};
		\node[state] (9) [below left=1cm and 1cm of 10] {$9$};
		\node[state] (7) [below left=1cm and 1cm  of 9] {$7$};
		\node[state] (6) [below right=1cm and 1cm  of 9] {$6$};
		\node[state] (5) [below right=1cm and 1cm  of 8] {$5$};
		\node[state] (4) [below left=1cm and 1cm  of 7] {$4$};
		\node[state] (3) [below right=1cm and 1cm  of 7] {$3$};
		\node[state] (2) [below right=1cm and 1cm  of 6] {$2$};
		\node[state] (1) [below right=1cm and 1cm  of 5] {$1$};
		
		\path[->][blue] (10) edge node {} (9);
		\path[->][blue] (10) edge node {} (8);
		\path[->][blue] (9) edge node {} (7);
		\path[->][blue] (9) edge node {} (6);
		\path[->][blue] (6) edge node {} (2);
		\path[->][blue] (6) edge node {} (3);
		\path[->][blue] (7) edge node {} (4);
		\path[->][blue] (7) edge node {} (3);
		\path[->][blue] (8) edge node {} (5);
		\path[->][blue] (8) edge node {} (6);
		\path[->][blue] (5) edge node {} (1);
		\path[->][blue] (5) edge node {} (2);
		\end{tikzpicture}}
	\caption{DAG with $10$ nodes.} \label{fig:M1}
\end{figure}

To perform the reordering we turn both Theorem~\ref{theo2} and Theorem~\ref{mainorder} into Algorithm \ref{alg8} and Algorithm \ref{alg9}, respectively.
For every node, say ${i^*}$, Algorithm \ref{alg8} checks the criterion in Theorem~\ref{theo2} (see Line 4 below). 
The difference between $\sigma_{M_{-\nu(i^*), a\nu(i^*)}}^2-\sigma_{M_{1,\dots,{d}}}^2$ and $a^2-1$ is entered into the ${i^*}$-th- component of the vector $\Delta$. 
Finally, the vector $N$ is filled with non-zero components only for those nodes, which satisfy the criterion of Theorem~\ref{theo2}.
Algorithm \ref{alg9} follows a similar logic. 

When estimating the scalings, then the $\hat{\Delta}_{i^*}$-s in Algorithms~\ref{alg8} and \ref{alg9} are a.s. different from 0.
Hence, both algorithms are adapted to allow for small bounds to both quantities, which have to be chosen appropriately.  We are interested in checking if the final order identifies the nodes in accordance with their respective generations. To this end, we choose a small bound $\epsilon_3$, which enables Algorithm~\ref{alg9} to return more than one node per iteration step; namely the generations. Based on simulation experience, we chose $a=\sqrt{2}$ and $\eps_1=0.1, \eps_2=0.05, \eps_3=0.1$.
All simulations and data analysis are done using \textsf{R}, \citet{R.ref}.

\begin{algorithm}[H]
	\caption{Identifying the initial nodes in $\boldsymbol{X}_{10}$}\label{alg8}
	\begin{algorithmic}[1]
		\Procedure{}{}
		\State \textbf{Set} $\hat{\Delta} = (0)_{1\times {d}}; N = (0)_{1\times {d}}; a>1; \eps_1>0; \eps_2>0$
		\State \textbf{for} $i^* = {d},\dots,1$ \textbf{do}
		\State \hspace{5mm} 
		$\hat{\Delta}_{i^*}=\hat\sigma_{M_{-i^*, ai^*}}^2-\hat\sigma_{M_{1,...,d}}^2 -a^2+1$
		\State\hspace{5mm}  \textbf{if} $(\hat{\Delta}_{i^*}\geq 0$ \textbf{and}  $\hat{\Delta}_{i^*}\leq\eps_1)$, \textbf{or}  $(\hat{\Delta}_{i^*}\leq 0$ \textbf{and}  $\hat{\Delta}_{i^*}\geq-\eps_2)$,  \textbf{then}
		\State\hspace{10mm}$N_{i^*}=i^*$
		%
		\State\hspace{5mm} \textbf{else} $N_{i^*}=0$
		\State\textbf{end for}. 
		\EndProcedure
	\end{algorithmic}
\end{algorithm}

\begin{algorithm}[ht]
	\caption{Identifying the ($h+1$)-th {node} of $\boldsymbol{X}_{10}$}\label{alg9}
	\begin{algorithmic}[1]
		\Procedure{}{}
		\State \textbf{Set} $\hat{\Delta} = (0)_{1\times{d}}; N = (0)_{1\times {d}}; a>1; \epsilon_3>0; {\boldsymbol{h}=\{\nu^{-1}({d}),\dots,\nu^{-1}({d}-h+1)\}}$
		\State \textbf{for} $i^*\notin\boldsymbol{h}$ \textbf{do}
		\State \hspace{5mm} 
		$\hat{\Delta}_{i^*}=\hat\sigma_{\boldsymbol{M}_{\boldsymbol{h}_a,i^*_a,\{\boldsymbol{h}\cup \{i^*\}\}^c}}^2-\hat\sigma_{M_{1,...,d}}^2-(a^2-1)\hat\sigma_{M_{\boldsymbol{h}, i^*}}^2$
		%
		\State\hspace{5mm}  \textbf{if} $|\hat{\Delta}_{i^*}|\leq \epsilon_3$,  \textbf{then}
		\State\hspace{10mm}$N_{i^*}=i^*$
		\State\hspace{5mm} \textbf{else} $N_{i^*}=0$
		\State\textbf{end for}. 
		\EndProcedure
	\end{algorithmic}
\end{algorithm}
As the initial order is irrelevant, we set w.l.o.g. $\boldsymbol{X}^*_{10}=(X_1,\dots,X_{10})$.
We perform 100 simulation runs for each of the sample sizes $n\in\{2000,3000,5000,$ $10000\}$. 

The reorderings are obtained by first applying Algorithm \ref{alg8} and then Algorithm \ref{alg9}. 
For some simulations it has happened that the bounds in Line 5 in Algorithms~\ref{alg8} and~\ref{alg9} are not satisfied by any of the components. 
We indicate this in Table~\ref{tab1}, where the column ``Valid Runs'' corresponds to the number of simulation runs (out of $100$) for which the conditions of the bounds in the algorithms are satisfied each time the ``if" loop is entered. 
\begin{table}[htp]
	\centering
	{\begin{tabular}{rrrrr}
			\toprule
			Sample size & Simulation Runs &Valid Runs& Correctly Reordered & Success Ratio \\
			\midrule
			2000 &100& 81& 65 & 80.24\% \\
			3000 &100& 91 & 80 & 87.91\% \\
			5000 &100& 96 & 94 & 97.92\% \\
			10000 & 100& 99 & 99 & 100\% \\
			\bottomrule
		\end{tabular}}%
		\caption{\small{Results of the reordering procedure for the 10-node DAG in Figure \ref{fig:M1}. Correct orders are only those with initial node $V_0=\{10\}$, and generations $G_1=\{8,9\}, G_2=\{5,6,7\},$ $ G_3=\{1,2,3,4\}$ with arbitrary order within each generation.}}
		\vspace*{-3mm}\label{tab1}
	\end{table}%

The column ``Correctly Reordered'' gives the number of runs for which Algorithms~\ref{alg8} and~\ref{alg9} return the generations $V_0=\{10\}, G_1=\{8,9\}, G_2=\{5,6,7\}, G_3=\{1,2,3,4\}$.
The column ``Success Ratio'' presents the ratio of ``Correctly Reordered" over ``Valid Runs".

\eexam

\section{Statistical Theory for Regularly Varying Innovations}\label{regcase}

The discrete spectral measure of the ML model poses serious challenges towards the objective of estimation. \citet{einmahl2016m}, \citet{einmahl2012m}, and \citet{einmahl2016} develop estimation procedures for the stable tail dependence function. In both, \cite{einmahl2012m} and \cite{einmahl2016}, the methods are also applied to models with discrete spectral measure, whose dependence parameters can be obtained from the stable tail dependence function.
\citet{JanWan} provide a new way for estimating the atoms of the spectral measure on the unit sphere by using a clustering approach. 
For our purposes we resort to the empirical spectral measure.

Let $\boldsymbol{X}_1,\dots,\boldsymbol{X}_n$ be an i.i.d. sample  of $\boldsymbol{X}\in RV^{d}_+(2)$. 
For $\ell=1,\dots,n$ define
\begin{align}\label{pol.est}
R_\ell\coloneqq\norm{\boldsymbol{X}_\ell}_2\quad\mbox{and}\quad {\boldsymbol{\omega}_\ell=({\omega}_{\ell1},...,{\omega}_{\ell {d}})\coloneqq\frac{\boldsymbol{X}_\ell}{R_\ell}},
\end{align}
 to obtain their respective polar representation $\{(R_\ell, {\boldsymbol{\omega}}_\ell): \ell=1,\dots,n\}$. 
A consistent estimator for the standardized spectral measure $\tilde{H}_{\bsx}$ as in Remark~\ref{specmass2} is based on the limit relation \eqref{empdist} and given 
e.g. in eq. (9.32) in Chapter~9.2 of \cite{ResnickHeavy} as ($\lfloor s \rfloor$ denotes the integer part of $s\in\R$ )
$$\tilde{H}_{\bsx,\lfloor n/k\rfloor} (\cdot) = \frac{\sum_{\ell=1}^{n}\mathds{1}{\{(R_\ell/b_{\lfloor\frac{n}{k}\rfloor},{\boldsymbol{\omega}}_\ell) \in [1,\infty] \times \cdot\}}}{\sum_{\ell=1}^{n}\mathds{1}{\{R_\ell/b_{\lfloor\frac{n}{k}\rfloor}\geq1\}}}\std \tilde{H}_{\bsx} (\cdot),$$
as $n\to\infty$, $k\to\infty$, $k/n\to 0$.
Since $R^{(k)}/b_{\lfloor\frac{n}{k}\rfloor}\stp 1$ (cf. below eq. (9.32) of \cite{ResnickHeavy}), where $R^{(k)}$ is the $k$-th largest among $R_1,...,R_n$.
Hence, setting ${b}_{\lfloor\frac{n}{k}\rfloor}=R^{(k)}$, the denominator becomes $\sum_{\ell=1}^{n}\mathds{1} {\{R_\ell\geq R^{(k)}\}}=k$ and the above estimator reads 
$$\tilde{H}_{\bsx,\lfloor n/k\rfloor} (\cdot) = \frac{1}{k}\sum_{\ell=1}^{n} \mathds{1}{\{R_\ell\geq R^{(k)}, \boldsymbol{\omega}_\ell\in\cdot \}}.
$$
Then an estimator for $\E_{\tilde{H}_{\bsx}} [f(\bs{\omega})]$ is given by
\begin{align}\label{specemp}
\hat{\mathbb{E}}_{\tilde{H}_{\boldsymbol{X}}}[f({\boldsymbol{\omega}})]=\frac{1}{k}\sum_{\ell=1}^{n}f({\boldsymbol{\omega}}_\ell)\mathds{1}{\{R_\ell \geq R^{(k)} \}}.
\end{align}
Our goal is to estimate the squared scalings ${\sigma}_{M_{\boldsymbol{h}}}^2$ of $M_{\bs{h}}$ for $\bs{h}\subseteq\{1,\dots,{d}\}$ as in \eqref{maxp}. 
To this end we choose $f:\Theta_+^{{d}-1}\to\R_+$ defined  for $\ell=1,\dots,n$ via
$f(\boldsymbol{\omega}_\ell)={d}(\underset{k \in \boldsymbol{h}}{\bigvee}\omega_{\ell k}^2)$, which is a continuous function. 

\bde\label{estscaling2}[Non-parametric scaling estimators]
Let $\boldsymbol{X}\in RV^{d}_+(2)$ and  $\boldsymbol{X}_1,\dots,\boldsymbol{X}_n$ be an i.i.d. sample of $\boldsymbol{X}$ with respective polar representations $(R_\ell, {\boldsymbol{\omega}}_\ell)$ for  $\ell=1,\dots,n$ as in \eqref{pol.est}. For $1\le k\le n$ estimate {$\sigma^2_{i}$ for $i=1,\dots,d$ and ${\sigma}_{M_{\boldsymbol{h}}}^2$ for $\bs{h}\subseteq\{1,\dots,d\}$} by \eqref{specemp} as 
\begin{align}\label{estscal}
\hat\sigma^2_{i} = \frac{{d}}{k}\sum_{\ell=1}^{n}{\omega}_{\ell i}^2\mathds{1}{\{R_\ell \geq R^{(k)}\}}
\quad\mbox{and}\quad 
\hat{\sigma}_{M_{\boldsymbol{h}}}^2 
=\frac{{d}}{k}\sum_{\ell=1}^{n}\underset{j\in \boldsymbol{h}}{\bigvee}{\omega}_{\ell j}^2\mathds{1}{\{R_\ell \geq R^{(k)}\}}.
\end{align}
\ede

Finally, we estimate $A$ by the linear transformation as given in Theorem \ref{consT}.

\subsection{Asymptotic Normality of the Non-paramatric Estimators}

 \citet{lars} have proven in their Theorem~1 a CLT for the EDM as in Definition~\eqref{edmdef} based on the fact that 
the function $f:\Theta_+\to \mathbb{R}_+$ given by $f(\boldsymbol{\omega})=\omega_1\omega_2$ is continuous.
This result can be generalised to any continuous functions $f:\Theta_+^{{d}-1}\to \mathbb{R}$ and any dimension ${d}$.

The following CLT holds for every $\boldsymbol{X}\in RV^{d}_+(2)$.
Again we use the polar representation \eqref{pol.est}.
Let the radial component $R$ of $\bsx$ have distribution function $F$. 

\begin{theorem}[Central Limit Theorem]\label{clt}
	Let $\boldsymbol{X}\in RV^{d}_+(2)$
	and $\boldsymbol{X}_1,\dots,\boldsymbol{X}_n$ be i.i.d. copies of $\boldsymbol{X}$. Choose $k$ such that $k=o(n)$ and $k\to\infty$ as $n\to\infty$. Let $f\colon \Theta_+^{{d}-1}\to {\mathbb{R}_+}$ be a continuous function. 
Assume that 
	\begin{align}
	\lim\limits_{n\to \infty}\sqrt{k}\bigg(\frac{n}{k}\mathbb{E}[f({{\boldsymbol{\omega}_1}})\mathds{1}{\{R_1\geq b_{\lfloor\frac{n}{k}\rfloor}t^{-1/ \alpha}\}}] 
	- \mathbb{E}_{\tilde{H}_{\bsx}}[f({\boldsymbol{\omega}}_1)]\frac{n}{k}\bar{F}(b_{\lfloor\frac{n}{k}\rfloor} t^{-1/\alpha}) \bigg)=0
	\label{assump}
	\end{align}
	holds locally uniformly for $t\in[0, \infty)$, and 
	$\emph{Var}_{\tilde{H}_{\bsx}}(f({\boldsymbol{\omega}}))>0$.
	Then
	\begin{align*}
	\sqrt{k}(\hat{\mathbb{E}}_{\tilde{H}_{\boldsymbol{X}}}[f({\boldsymbol{\omega}})]-\mathds{E}_{\tilde{H}_{\boldsymbol{X}}}[f({\boldsymbol{\omega}})])\overset{\mathcal{D}}{\to} N(0, \sigma^2), \hspace{5mm} n\to\infty.
	\end{align*}
\end{theorem}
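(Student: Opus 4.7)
The plan is to adapt the tail-empirical-process methodology of Chapter~9 of \cite{ResnickHeavy}, extending Theorem~1 of \cite{lars} from the bivariate EDM to general continuous $f$ on the simplex $\Theta_+^{d-1}$. Since $f$ is continuous on a compact set, $M := \sup_{\boldsymbol\omega}|f(\boldsymbol\omega)| < \infty$. For $t>0$ I would introduce the auxiliary quantity
\begin{equation*}
T_n(f,t) := \frac{1}{k}\sum_{\ell=1}^n f(\boldsymbol{\omega}_\ell)\,\mathds{1}\bigl\{R_\ell \ge b_{\lfloor n/k \rfloor}\,t^{-1/\alpha}\bigr\},
\end{equation*}
so that $\hat{\mathbb{E}}_{\tilde{H}_{\bsx}}[f(\boldsymbol{\omega})] = T_n(f, U_n)$ with random time-change $U_n := (R^{(k)}/b_{\lfloor n/k \rfloor})^{-\alpha}$, and $U_n \overset{P}{\to} 1$ because $R^{(k)}/b_{\lfloor n/k\rfloor} \overset{P}{\to} 1$ (Chapter~9 of \cite{ResnickHeavy}).

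First, I would prove a pointwise CLT for $T_n(f,t)$ at fixed $t$. The summands of $kT_n(f,t)$ are i.i.d.\ and bounded by $M$, so the Lindeberg condition holds trivially; their rescaled variance is $(n/k)\mathbb{E}[f(\boldsymbol\omega_1)^2\mathds{1}\{R_1 \ge b_{\lfloor n/k\rfloor}t^{-1/\alpha}\}] + o(1)$, which converges to $t\,\mathbb{E}_{\tilde{H}_{\bsx}}[f(\boldsymbol\omega)^2]$ by applying assumption \eqref{assump} to the continuous bounded function $f^2$ and noting $(n/k)\bar F(b_{\lfloor n/k\rfloor}t^{-1/\alpha})\to t$. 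Assumption \eqref{assump} applied to $f$ controls the bias, so Lindeberg--Feller delivers
$\sqrt{k}(T_n(f,t) - t\,\mathbb{E}_{\tilde{H}_{\bsx}}[f(\boldsymbol\omega)]) \Rightarrow \mathcal{N}(0, t\,\mathbb{E}_{\tilde{H}_{\bsx}}[f(\boldsymbol\omega)^2])$. Joint convergence of $(T_n(f,1), T_n(1,1))$ then follows from Cramér--Wold applied to the linear combinations $\alpha T_n(f,1)+\beta T_n(1,1) = T_n(\alpha f+\beta, 1)$.

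Next, to pass from the deterministic threshold $t=1$ to the random $U_n$, I would exploit the deterministic identity $T_n(1,U_n)\equiv 1$, which is immediate from $\sum_\ell \mathds{1}\{R_\ell \ge R^{(k)}\}=k$. Combined with the CLT above for $f\equiv 1$ and the asymptotic path-continuity at $t=1$ of the limit Gaussian, this inverts to $\sqrt{k}(U_n-1) = -\sqrt{k}(T_n(1,1)-1) + o_P(1)$. A Vervaat-type linearization---valid because the average of $f(\boldsymbol\omega_\ell)$ over the $O_P(\sqrt{k})$ indices with $R_\ell$ in the annulus between radii $b_{\lfloor n/k\rfloor}$ and $b_{\lfloor n/k\rfloor}U_n^{-1/\alpha}$ converges in probability to $\mathbb{E}_{\tilde{H}_{\bsx}}[f(\boldsymbol\omega)]$---then yields
\begin{equation*}
\sqrt{k}\bigl(T_n(f,U_n) - T_n(f,1)\bigr) = \mathbb{E}_{\tilde{H}_{\bsx}}[f(\boldsymbol\omega)]\,\sqrt{k}(U_n-1) + o_P(1).
\end{equation*}
Substituting the $U_n$-identity collapses the random-threshold correction into the deterministic one and produces the clean i.i.d.\ decomposition
\begin{equation*}
\sqrt{k}\bigl(\hat{\mathbb{E}}_{\tilde{H}_{\bsx}}[f(\boldsymbol\omega)] - \mathbb{E}_{\tilde{H}_{\bsx}}[f(\boldsymbol\omega)]\bigr) = \frac{1}{\sqrt{k}}\sum_{\ell=1}^n \bigl(f(\boldsymbol\omega_\ell) - \mathbb{E}_{\tilde{H}_{\bsx}}[f(\boldsymbol\omega)]\bigr)\mathds{1}\{R_\ell\ge b_{\lfloor n/k\rfloor}\} + o_P(1).
\end{equation*}
Applying the first step once more to the centred function $f - \mathbb{E}_{\tilde{H}_{\bsx}}[f(\boldsymbol\omega)]$ then delivers the stated Gaussian limit with $\sigma^2 = \var_{\tilde{H}_{\bsx}}(f(\boldsymbol\omega))$.

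The hard part is the Vervaat-type linearization of the second step: substituting $U_n$ inside $T_n(f,\cdot)$ at the $\sqrt{k}$-scale requires a form of stochastic equicontinuity of $t\mapsto T_n(f,t)$ near $t=1$, which is precisely the reason for imposing assumption \eqref{assump} locally uniformly in $t$ rather than only at $t=1$. The pointwise CLT and the variance identification are routine; the genuine technical work lies in bridging the deterministic and random thresholds.
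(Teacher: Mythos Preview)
Your proposal is correct and follows exactly the route the paper indicates: the paper's own proof simply says to go through Theorem~1 of \cite{lars} line by line with $f(\omega_1,\omega_2)=\omega_1\omega_2$ replaced by a general continuous $f$, and your sketch is precisely that argument (tail-empirical process $T_n(f,t)$, random time-change $U_n$, Vervaat-type linearisation). One small imprecision: in the variance step you invoke assumption~\eqref{assump} for $f^2$, but that hypothesis is only given for $f$; the convergence $(n/k)\mathbb{E}[f(\boldsymbol\omega_1)^2\mathds{1}\{R_1\ge b_{\lfloor n/k\rfloor}t^{-1/\alpha}\}]\to t\,\mathbb{E}_{\tilde H_{\bsx}}[f(\boldsymbol\omega)^2]$ needs only the unscaled vague convergence of Definition~\ref{mrv}, not the $\sqrt{k}$-rate bias condition.
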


\begin{proof}
	The proof follows closely Theorem 1 of \cite{lars}, which only covers the case of $f(\omega_1,\omega_2)=\omega_1\omega_2$. Going through this proof line by line we find that it applies to every continuous function $f:\Theta_+^{{d}-1}\to \R_+$, where the asymptotic variance $\sigma^2$ has to be adapted to the chosen function $f$.
\end{proof}

Assumption (\ref{assump}) requires the dependence between ${\boldsymbol{\omega}}_\ell$ and $R_\ell$ for  $R_\ell>b_{\lfloor\frac{n}{k}\rfloor}$ to decay sufficiently fast as $n\to \infty$. 

\subsection{Asymptotic Normality of the Scalings of Maxima}

From Theorem~\ref{consT} we know that we can identify $A$ by a known linear transformation $T$ from the vector ${S}_{M}\in \mathbb{R}^{{d}({d}+1)/2}_+$ of squared scalings defined in \eqref{sm}.
Each of these squared scalings we estimate by $\hat\sigma^2_{i}$ and $\hat{\sigma}_{M_{\boldsymbol{h}}}^2$ as in eq. \eqref{estscal} and
denote the resulting estimation vector by 
\begin{align}\label{smest}
	\hat{S}_{M}\coloneqq( {\hat\sigma}_{M_{1,2,\dots,{d}}}^2, {\hat\sigma}_{M_{1,3,\dots,{d}}}^2,\dots,  {\hat\sigma}_{M_{1,{d}}}^2 , {\hat\sigma}_{1}^2,  {\hat\sigma}_{M_{2,3,\dots,{d}}}^2, {\hat\sigma}_{M_{2,4,\dots,{d}}}^2,\dots,  {\hat\sigma}_{M_{2,{d}}}^2, {\hat\sigma}_{{2}}^2,\dots , {\hat\sigma}_{M_{{d}-1,{d}}}^2, {\hat\sigma}_{{{d}-1}}^2,  {\hat\sigma}_{{d}}^2).
\end{align}
We show asymptotic multivariate normality of the vector $\hat S_M$. 
To this end we use the Cram\'er-Wold device and a properly chosen continuous function $f$ on $\Theta_+^{{d}-1}$ to which we then apply Theorem \ref{clt}.

\begin{theorem}
	\label{scalclt}
	Let $\boldsymbol{X}=A\times_{\max}\boldsymbol{Z}$ be a recursive ML vector satisfying  (A1)-(A3), and let $\boldsymbol{X}_1,\dots,\boldsymbol{X}_n$ be i.i.d. copies of $\boldsymbol{X}$. Choose $k$ such that $k=o(n)$ and $k\to\infty$ as $n\to\infty$.
	Furthermore, assume that (\ref{assump}) holds {for $f(\boldsymbol{\omega})={d}(\underset{i\in\boldsymbol{h}_i}{\bigvee}\omega_i^2)$} and that 
	 $\emph{Var}_{\tilde{H}_{\bsx}}({d}(\underset{i\in\boldsymbol{h}_i}{\bigvee}\omega_i^2))>0$  for all $\boldsymbol{h}_i\subset \{1,\dots,{d}\}$ such that $S_{M_{\boldsymbol{h}_i}}$ is a component of $S_M$.
	Then
	$$\sqrt{k}(\hat{S}_M-S_M)\overset{\mathcal{D}}{\to} N(0, W_M),\quad n\to\infty,$$ 
	where the entries of the covariance matrix $W_M$ are given by the right-hand sides of the following two limits.
	The diagonal entries for $\boldsymbol{h}\subset \{1,\dots,{d}\}$ satisfy $$\lim_{n\to\infty} \, k\emph{Var}(\hat{S}_{M_{\boldsymbol{h}}}) = {d}^2 \emph{Var}_{\tilde{H}_{\bsx}}(\underset{i\in \boldsymbol{h}}{\bigvee}{\omega}_i^2)),$$ 
	and the non-diagonal entries for two different sets $\boldsymbol{h}_i\neq \boldsymbol{h}_j\subset \{1,\dots,{d}\}$ are given by $$\lim_{n\to\infty} \, k\,\emph{Cov}(\hat{S}_{M_{\boldsymbol{h}_i}},\hat{S}_{M_{\boldsymbol{h}_j}}) = \frac{{d}^2}{2}( \emph{Var}_{\tilde{H}_{\bsx}}(\underset{k\in \boldsymbol{h}_i}{\bigvee}{  \omega}_k^2+\underset{l\in \boldsymbol{h}_j}{\bigvee}{  \omega}_l^2)-\emph{Var}_{\tilde{H}_{\bsx}}(\underset{k\in \boldsymbol{h}_i}{\bigvee}{  \omega}_k^2)-\emph{Var}_{\tilde{H}_{\bsx}}(\underset{l\in \boldsymbol{h}_j}{\bigvee}{  \omega}_l^2)).$$
	Moreover, the covariance matrix $W_M$ is singular. 
\end{theorem}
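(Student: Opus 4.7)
The plan is to reduce the multivariate CLT to the univariate CLT of Theorem~\ref{clt} via the Cram\'er--Wold device, and then to identify the covariance entries by polarization.

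First, I would fix an arbitrary vector of constants $\bs c=(c_1,\dots,c_m)\in\R^m$ with $m={d}({d}+1)/2$ and consider the scalar linear combination $\bs c^\top \hat S_M$. Using the representation \eqref{estscal} this can be written as
\begin{align*}
\bs c^\top \hat S_M = \frac{{d}}{k}\sum_{\ell=1}^{n} g(\boldsymbol{\omega}_\ell)\,\mathds{1}\{R_\ell\ge R^{(k)}\},
\qquad
g(\boldsymbol{\omega}) \coloneqq \sum_{i=1}^{m} c_i \underset{j\in \bs h_i}{\bigvee}\omega_j^{2},
\end{align*}
where $\bs h_i\subseteq\{1,\dots,{d}\}$ is the index set corresponding to the $i$-th component of $S_M$ (with $\bs h_i=\{i\}$ when that component is a marginal scaling). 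Since each $\boldsymbol{\omega}\mapsto \vee_{j\in\bs h_i}\omega_j^2$ is continuous on $\Theta_+^{{d}-1}$, so is $g$, and hence $f={d}\,g$ is a continuous function on the compact sphere.

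Next, I would verify that assumption \eqref{assump} transfers to $f={d}\,g$. This is immediate from linearity of the integral and of expectation in $f$, using that the hypothesis gives \eqref{assump} for each summand $f(\boldsymbol{\omega})={d}(\vee_{j\in\bs h_i}\omega_j^2)$. Theorem~\ref{clt} then applies to $f$ and gives
\begin{align*}
\sqrt{k}\bigl(\bs c^\top \hat S_M - \bs c^\top S_M\bigr)\overset{\mathcal D}{\to} N\!\bigl(0,\,{d}^{2}\var_{\tilde H_{\bsx}}(g)\bigr),\qquad n\to\infty,
\end{align*}
where the degenerate case $\var_{\tilde H_{\bsx}}(g)=0$ is absorbed as a Dirac mass at $0$. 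By the Cram\'er--Wold device this proves joint asymptotic normality $\sqrt{k}(\hat S_M-S_M)\overset{\mathcal D}{\to}N(0,W_M)$ with $\bs c^\top W_M\bs c={d}^{2}\var_{\tilde H_{\bsx}}(g)$ for every $\bs c$.

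It remains to read off the entries of $W_M$ and to prove singularity. Taking $\bs c$ to be the $i$-th standard basis vector yields the stated diagonal formula $\lim k\,\var(\hat S_{M_{\bs h}})={d}^{2}\var_{\tilde H_{\bsx}}(\vee_{j\in\bs h}\omega_j^{2})$. For the off-diagonal entries I would invoke the polarization identity
\begin{align*}
\cov(U,V) \,=\, \tfrac12\bigl(\var(U+V)-\var(U)-\var(V)\bigr),
\end{align*}
applied to $U=\sqrt{k}(\hat S_{M_{\bs h_i}}-S_{M_{\bs h_i}})$ and $V=\sqrt{k}(\hat S_{M_{\bs h_j}}-S_{M_{\bs h_j}})$. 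Choosing $\bs c=\bs e_i+\bs e_j$ in the CLT above computes $\lim k\,\var(\hat S_{M_{\bs h_i}}+\hat S_{M_{\bs h_j}})$, and combining with the two marginal limits yields exactly the stated formula for $\lim k\,\cov(\hat S_{M_{\bs h_i}},\hat S_{M_{\bs h_j}})$. Finally, for singularity, note that $\|\boldsymbol{\omega}_\ell\|_2^{2}=\sum_{i=1}^{{d}}\omega_{\ell i}^{2}=1$ for every $\ell$, so
\begin{align*}
\sum_{i=1}^{{d}}\hat\sigma_{i}^{2} \,=\, \frac{{d}}{k}\sum_{\ell=1}^{n}\sum_{i=1}^{{d}}\omega_{\ell i}^{2}\mathds{1}\{R_\ell\ge R^{(k)}\} \,=\, \frac{{d}}{k}\cdot k \,=\, {d},
\end{align*}
and similarly $\sum_{i=1}^{{d}}\sigma_{i}^{2}={d}$ by Remark~\ref{rem34}. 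Hence, letting $\bs c$ be the indicator of the positions of $\sigma_{1}^{2},\dots,\sigma_{{d}}^{2}$ in $S_M$, we have $\bs c^\top(\hat S_M-S_M)=0$ almost surely, so $\bs c^\top W_M\bs c=0$ and $W_M$ is singular.

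The most delicate step I expect to be the propagation of \eqref{assump} and the continuity argument to linear combinations of the max-indicator functions; everything else (Cram\'er--Wold, polarization, the deterministic constraint $\sum_i\omega_i^{2}=1$ driving the rank deficiency) is essentially bookkeeping once the univariate CLT of Theorem~\ref{clt} is in hand.
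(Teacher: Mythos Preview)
Your proposal is correct and follows essentially the same route as the paper: Cram\'er--Wold reduction to Theorem~\ref{clt} applied to $f(\boldsymbol{\omega})={d}\sum_i c_i\bigvee_{j\in\bs h_i}\omega_j^2$, identification of diagonal and off-diagonal entries via the polarization identity, and singularity from the deterministic constraint $\sum_i\hat\sigma_i^2={d}$. You are in fact slightly more explicit than the paper in noting that assumption~\eqref{assump} must be propagated to linear combinations by linearity and in handling the degenerate variance case.
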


\begin{proof}
	By the Cram\'er-Wold device we have to show asymptotic normality of $\sqrt{k}\boldsymbol{t}^T(\hat{S}_M-{S}_M)$ for every $\boldsymbol{t}\in \mathbb{R}^{{d}({d}+1)/2}$. 
	We first index the entries of $S_M$ according to $\boldsymbol{h}_i$ for $i\in\{1,\dots,{d}({d}+1)/2\}$. 
	Then we re-write
	\begin{align*}
	\boldsymbol{t}^T{S}_M&=\sum_{i=1}^{{d}({d}+1)/2}t_iS_{M_{\boldsymbol{h}_i}}=\sum_{i=1}^{{d}({d}+1)/2}{d}t_i\mathbb{E}_{\tilde{H}_{\bsx}}[\underset{j\in \boldsymbol{h}_i}{\bigvee}{  \omega}_j^2]
	=\mathbb{E}_{\tilde{H}_{\bsx}}[{d}\sum_{i=1}^{{d}({d}+1)/2}t_i(\underset{j\in \boldsymbol{h}_i}{\bigvee}{  \omega}_j^2)].
	\end{align*}
	Now choose $f:\Theta_+^{{d}-1}\to\R_+$ as 
	\begin{equation*}
	f(\boldsymbol{\omega})={d}\sum_{i=1}^{{d}({d}+1)/2}t_i(\underset{j\in \boldsymbol{h}_i}{\bigvee}{\omega}_j^2),    
	\end{equation*} 
	which is---as a linear function of continuous functions---itself continuous on $\Theta_+^{{d}-1}$. 
	The empirical estimator for $f$ is by \eqref{estscal} given as
	\begin{align}\label{sing_ref}
	    \mathbb{\hat{E}}_{\tilde{H}_{\boldsymbol{X}}}[f({\boldsymbol{\omega}})]=\frac{{d}}{k}\sum_{\ell=1}^{n}\sum_{i=1}^{{d}({d}+1)/2}t_i(\underset{j\in \boldsymbol{h}_i}{\bigvee}{\omega}_{\ell j}^2)\mathds{1}_{\{R_\ell\geq R^{(k)}\}}.
	\end{align}
	Applying Theorem~\ref{clt} for the given choice of $f$ it follows that 
	$$\sqrt{k}\boldsymbol{t}^T({\hat{S}}_M-{{S}}_M)\overset{\mathcal{D}}{\to} N(0,w_M),\hspace{5mm}n\to\infty,$$
	where $w_M={d}^2\textrm{Var}_{\tilde{H}_{\bsx}}(\sum_{i=1}^{{d}({d}+1)/2}t_i(\underset{j\in \boldsymbol{h}_i}{\bigvee}{  \omega}_j^2))$.
	By the Cram\'er-Wold device this implies that $$\sqrt{k}({\hat{S}}_M-{{S}}_M)\overset{\mathcal{D}}{\to} N(0,W_M),\quad n\to\infty.$$
	To show that $W_M$ is singular, let $t_i=1$ for $\boldsymbol{h}_i$ such that $|\boldsymbol{h}_i|=1$, and set the remaining components of the vector $\boldsymbol{t}$ to zero. 
	Summarize all these $\bs{h}_i$ into the set $\mathcal{H}^1\coloneqq \{i\in\{1,\dots,{d}({d}+1)/2\}: |\boldsymbol{h}_i|=1\}$, and note that $|\mathcal{H}^1|={d}$ since there are exactly ${d}$ such entries in $S_M$, namely $\sigma_1^2,\sigma_2^2,...,\sigma_{d}^2$ corresponding to the dimension of $\boldsymbol{X}$.
	Then we obtain 
	\begin{align*}
	\boldsymbol{t}^T\hat{S}_M&=\frac{{d}}{k}\sum_{\ell=1}^{n}\sum_{i\in \mathcal{H}^1}t_i(\underset{j\in \boldsymbol{h}_i}{\bigvee}  {\omega}_{\ell j}^2)\mathds{1}_{\{R_\ell\geq R^{(k)}\}}\\
	&=\frac{{d}}{k}\sum_{\ell=1}^{n}\sum_{m=1}^{d}t_m   {\omega}_{\ell m}^2\mathds{1}_{\{R_\ell\geq R^{(k)}\}}\\
	&=\frac{{d}}{k}\sum_{\ell=1}^{n}\sum_{m=1}^{d}   {\omega}_{\ell m}^2\mathds{1}_{\{R_\ell\geq R^{(k)}\}}\\
	&=\mathbb{\hat{E}}_{\tilde{H}_{\boldsymbol{X}}}[{d}\sum_{m=1}^{d}  {\omega}_{m}^2]
	\, = \, {d},
	\end{align*}
	where the last line is due to \eqref{sing_ref} for the particularly chosen $f(\bs{\omega})={d}\sum_{m=1}^{d}  {\omega}_{m}^2={d}$, and the last equality follows from $\frac{{d}}{k}\sum_{\ell=1}^{n}\mathds{1}_{\{R_\ell\geq R^{(k)}\}}={d}$.
	Since this is non-random, the limiting multivariate normal distribution is degenerate.
	
	We proceed now by computing the entries of the covariance matrix.\\
	First, for $i=1,\dots,{d}({d}+1)/2$ we compute the $i$-th diagonal element of $W_M$, corresponding to the asymptotic variance $\textrm{Var}(\hat {S}_{M_{\boldsymbol{h}}})=\textrm{Var}(\hat{\sigma}_{M_{\boldsymbol{h}}}^2)$. 
	We find this from Theorem~\ref{clt} as $\textrm{Var}_{\tilde{H}_{\bsx}}(f(\boldsymbol{  \omega}))$ for $f(\boldsymbol{\omega})={d}(\underset{i\in\boldsymbol{h}}{\bigvee}\omega_i^2)$:
	\begin{align}\label{asy.var}
	\lim_{n\to\infty} \, k \, \textrm{Var}(\hat{S}_{M_{\boldsymbol{h}_i}}) = {d}^2\textrm{Var}_{\tilde{H}_{\bsx}}(\underset{i\in \boldsymbol{h}}{\bigvee}{  \omega}_i^2)={d}^2\Big(\mathbb{E}_{\tilde{H}_{\bsx}}[\underset{i\in \boldsymbol{h}}{\bigvee}{  \omega}_i^4]- (\mathbb{E}_{\tilde{H}_{\bsx}}[\underset{i\in \boldsymbol{h}}{\bigvee}{  \omega}_i^2])^2\Big).
	\end{align}
	Next, when computing the covariance we simply use the identity $2\textrm{Cov}(X,Y)=\textrm{Var}(X+Y)-\textrm{Var}(X)-\textrm{Var}(Y)$.
	Let $\boldsymbol{h}_i\neq \boldsymbol{h}_j\subseteq \{1,\dots,{d}\}$. Consider $f(\boldsymbol{\omega})={d}\big(\underset{k\in \boldsymbol{h}_i}{\bigvee}\omega_k^2+\underset{l\in \boldsymbol{h}_j}{\bigvee}\omega_l^2\big)$. 
	Then using again Theorem \ref{clt} we get
	\begin{align}\label{as.covar}
	\lim_{n\to\infty} \, 2k \, \textrm{Cov}(\hat{S}_{M_{\boldsymbol{h}_i}},\hat{S}_{M_{\boldsymbol{h}_j}}) = {d}^2( \textrm{Var}_{\tilde{H}_{\bsx}}(\underset{k\in \boldsymbol{h}_i}{\bigvee}{  \omega}_k^2+\underset{l\in \boldsymbol{h}_j}{\bigvee}{\omega}_l^2)-\textrm{Var}_{\tilde{H}_{\bsx}}(\underset{k\in \boldsymbol{h}_i}{\bigvee}{\omega}_k^2)-\textrm{Var}_{\tilde{H}_{\bsx}}(\underset{l\in \boldsymbol{h}_j}{\bigvee}{\omega}_l^2)).
	\end{align}
\end{proof}

The asymptotic covariance matrix $W_m$ can also be expressed in terms of the squared entries of the matrix $A$.

\bco \label{mlclt}
Let the assumptions of Theorem~\ref{scalclt} hold.
Then the entries of $W_M$ are given by the right-hand sides of the following two limits:
On the diagonal we obtain
	$$\lim_{n\to\infty} \, k\,\emph{Var}(\hat{S}_{M_{\boldsymbol{h}}}) = {d}(\sum_{j=1}^{{d}}\underset{i\in \boldsymbol{h}}{\bigvee}\frac{a_{ij}^4}{\norm {a_j}^2_2})-
	  (\sum_{j=1}^{d}(\underset{i\in \boldsymbol{h}}{\bigvee}a_{ij}^2))^2>0,$$
	where $(\sum_{j=1}^{d}(\underset{i\in \boldsymbol{h}}{\bigvee}a_{ij}^2))^2=\sigma_{M_{\boldsymbol{h}}}^4$.
	For the non-diagonal entries we obtain
	$$\lim_{n\to\infty} \,k\,\emph{Cov}(\hat{S}_{M_{\boldsymbol{h}_i}},\hat{S}_{M_{\boldsymbol{h}_j}}) = {d}\sum_{k=1}^{d}(\frac{1}{\norm {a_k}^2_2}(\underset{m\in \boldsymbol{h}_i}{\bigvee}a_{mk}^2)(\underset{l\in \boldsymbol{h}_j}{\bigvee}a_{lk}^2))
	-((\sum_{k=1}^{d}\underset{m\in \boldsymbol{h}_i}{\bigvee}a_{mk}^2)(\sum_{k=1}^{d}\underset{l\in \boldsymbol{h}_j}{\bigvee}a_{lk}^2))^2,$$
	where $\sum_{k=1}^{d}\underset{m\in \boldsymbol{h}_i}{\bigvee}a_{mk}^2 = \sigma_{M_{\boldsymbol{h}_i}}^2$ and  $\sum_{k=1}^{d}\underset{l\in \boldsymbol{h}_j}{\bigvee}a_{lk}^2= \sigma_{M_{\boldsymbol{h}_j}}^2.$
	\eco
	
\begin{proof}
		With the explicit form of the spectral measure (\ref{discretspecteq}), expression (\ref{asy.var}) becomes:  
		\begin{align*}
		{d}^2\textrm{Var}_{\tilde{H}_{\bsx}}(\underset{i\in \boldsymbol{h}}{\bigvee}{\omega}_i^2)
		&={d}(\sum_{j=1}^{d}\underset{i\in \boldsymbol{h}}{\bigvee}\frac{a_{ij}^4}{\norm {a_j}^2_2})-    (\sum_{j=1}^{d}(\underset{i\in \boldsymbol{h}}{\bigvee}a_{ij}^2))^2.
		\end{align*}
Similarly, for the covariance, from (\ref{as.covar}) we obtain:
	\begin{align*}
	&{d}^2\textrm{Var}_{\tilde{H}_{\bsx}}(\underset{m\in \boldsymbol{h}_i}{\bigvee}{\omega}_m^2+\underset{l\in \boldsymbol{h}_j}{\bigvee}{\omega}_l^2)\\
	&={d}\sum_{k=1}^{d}(\underset{m\in \boldsymbol{h}_i}{\bigvee}\frac{a_{mk}^4}{\norm {a_k}^2_2}+\underset{l\in \boldsymbol{h}_j}{\bigvee}\frac{a_{lk}^4}{\norm {a_k}^2_2}+\frac{2}{\norm {a_k}^2_2}(\underset{m\in \boldsymbol{h}_i}{\bigvee}a_{mk}^2)(\underset{l\in \boldsymbol{h}_j}{\bigvee}a_{lk}^2))
-((\sum_{k=1}^{d}\underset{m\in \boldsymbol{h}_i}{\bigvee}a_{mk}^2)+ (\sum_{k=1}^{d}\underset{l\in \boldsymbol{h}_j}{\bigvee}a_{lk}^2))^2 
	\end{align*}
Summing the variance terms together we obtain
	\begin{align*}
	2k\textrm{Cov}_{\tilde{H}_{\bsx}}({S}_{M_{\boldsymbol{h}_i}},{S}_{M_{\boldsymbol{h}_j}})
	&=2{d}\sum_{k=1}^{d}(\frac{1}{\norm {a_k}^2_2}(\underset{m\in \boldsymbol{h}_i}{\bigvee}a_{mk}^2)(\underset{l\in \boldsymbol{h}_j}{\bigvee}a_{lk}^2))-2(\sum_{k=1}^{d}\underset{m\in \boldsymbol{h}_i}{\bigvee}a_{mk}^2)(\sum_{k=1}^{d}\underset{l\in \boldsymbol{h}_j}{\bigvee}a_{lk}^2).
	\end{align*}  
	The identities giving $\sigma_{M_{\boldsymbol{h}}}^2$, $\sigma_{M_{\boldsymbol{h}_i}}^2$, and $\sigma_{M_{\boldsymbol{h}_j}}^2$ follow from \eqref{scaleq}.
\end{proof}

Finally we prove asymptotic normality of the estimated ML coefficient matrix $A$ computed via Theorem~\ref{consT} as $A^2= T S_M$. 
As $T$ is a deterministic matrix, we obtain ${\hat{A}}^2$ as $T \hat{S}_M$. 
Then Theorem~\ref{scalclt} and Corollary~\ref{mlclt} gives the asymptotic normality of the estimated ML coefficient matrix $A$.

\begin{theorem}\label{mlclt2}
Let $\boldsymbol{X}=A\times_{\max}\boldsymbol{Z}$ be a recursive ML vector satisfying  (A1)-(A3), and let $\boldsymbol{X}_1,\dots,\boldsymbol{X}_n$ be i.i.d. copies of $\boldsymbol{X}$.
	Let the assumptions of Theorem~\ref{scalclt} hold and
	assume that 
	the ML coefficient matrix $A =({a}_{ij})_{{d}\times{d}}$ satisfies
	$${d}(\sum_{j=1}^{d}\underset{i\in \boldsymbol{h}}{\bigvee}\frac{a_{ij}^4}{\norm {a_j}^2_2})-    (\sum_{j=1}^{d}(\underset{i\in \boldsymbol{h}}{\bigvee}a_{ij}^2))^2 >0.$$
	Estimate ${\hat{A}}^2 = T \hat S_M$ with $T$ as in Theorem \ref{consT} and $\hat S_M$ as in \eqref{smest}.
	 Then
	$$\sqrt{k}({\hat{A}}^2-{A}^2)\overset{\mathcal{D}}{\to} N(0, TW_MT^T),\hspace{5mm}n\to\infty,$$
	where $W_M$ is the covariance matrix in Corollary~\ref{mlclt}.
\end{theorem}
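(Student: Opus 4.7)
The proof I propose is a direct application of the linear transformation property of Gaussian convergence, using the two preceding results as building blocks. The plan is to observe that, since $T$ is a deterministic (non-random) matrix and $A^2 = T S_M$ by Theorem~\ref{consT}, we may write
\begin{equation*}
\sqrt{k}(\hat{A}^2 - A^2) \;=\; \sqrt{k}(T\hat{S}_M - T S_M) \;=\; T\sqrt{k}(\hat{S}_M - S_M).
\end{equation*}
Thus the entire problem reduces to pushing forward the joint CLT for $\hat{S}_M$ through the fixed linear map $T$.

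First I would invoke Theorem~\ref{scalclt}, whose hypotheses are exactly the assumptions carried over in the statement (the mixing-type condition \eqref{assump} and the positivity of the variances $\mathrm{Var}_{\tilde{H}_{\bsx}}(d(\bigvee_{i\in \bs h_i}\omega_i^2))$, the latter being the translation of the positivity condition on the entries of $A$ via Corollary~\ref{mlclt}). This yields
\begin{equation*}
\sqrt{k}(\hat{S}_M - S_M) \;\xrightarrow{\mathcal{D}}\; N(0, W_M),\qquad n\to\infty,
\end{equation*}
with $W_M$ as in Corollary~\ref{mlclt}. Next, I would apply the standard fact that if $Y_n \xrightarrow{\mathcal{D}} N(0,\Sigma)$ in $\mathbb{R}^{m}$ and $T\in\mathbb{R}^{m\times m}$ is constant, then $TY_n \xrightarrow{\mathcal{D}} N(0,T\Sigma T^T)$; this is an immediate consequence of the continuous mapping theorem (or, equivalently, of the Cram\'er--Wold device applied componentwise, since $\bs t^T T Y_n = (T^T\bs t)^T Y_n$ is asymptotically univariate normal with variance $\bs t^T T\Sigma T^T\bs t$ for every $\bs t$). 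Combining the display with this fact gives the claim
\begin{equation*}
\sqrt{k}(\hat{A}^2 - A^2) \;\xrightarrow{\mathcal{D}}\; N(0,\, T W_M T^T),\qquad n\to\infty.
\end{equation*}

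The only subtle point is that $W_M$ is singular (as noted in Theorem~\ref{scalclt}), and consequently $T W_M T^T$ will in general be singular as well. I would therefore be careful to formulate the limit throughout as a (possibly degenerate) multivariate normal distribution, which the Cram\'er--Wold route handles without difficulty because it never requires inverting the covariance. This is the main conceptual obstacle, but it is entirely harmless: degenerate Gaussian limits are preserved under linear maps, and no regularity of $T W_M T^T$ is needed for the statement as written. No further calculation is required beyond recording that the deterministic linear map $T$ intertwines $\hat{S}_M$ with $\hat{A}^2$.
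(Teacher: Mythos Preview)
Your proposal is correct and matches the paper's approach exactly: the paper does not even spell out a proof for this theorem, treating it as an immediate consequence of the deterministic linear relation $A^2 = T S_M$ from Theorem~\ref{consT} together with the CLT for $\hat S_M$ in Theorem~\ref{scalclt} (with $W_M$ given explicitly in Corollary~\ref{mlclt}). Your added remark about the possible singularity of $TW_MT^T$ and how the Cram\'er--Wold route handles degenerate Gaussian limits is a nice clarification that the paper leaves implicit.
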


	\section{Data Applications}\label{s8}

	\subsection{Structure Learning and Estimation}\label{s81}

For estimating a causal order as well as for the estimation of the ML coefficient matrix $A$ we need estimates for the scalings in {Algorithms~\ref{recalg2}, \ref{alg8}, and \ref{alg9}}, respectively. Notice that in Proposition~\ref{estalg2} we estimate the ML coefficient matrix $A$ for a well-ordered DAG, so that we first estimate the order of the nodes and then $A$.
The structure learning is based on the scalings of $M_{\boldsymbol{h}_a, m^*_a, \{\boldsymbol{h}\cup \{m^*\}\}^c}$ as defined in \eqref{partscaled}, and the estimation of $A$ as in Proposition~\ref{estalg2} is based on scalings of $M_{\bs{h}}$ as in \eqref{maxp}. {In contrast to Example~\ref{10ex} we make no distributional assumptions on $\bsx$, but use the non-parametric estimation method developed in Section~\ref{regcase}.}

{For the non-parametric estimation of all scalings needed in the algorithms, as in Section~\ref{regcase} we denote by $k$ the number of upper order statistics corresponding to the radial threshold used for the estimation of {the spectral measure}. We recall that it has to be chosen as $k=o(n)$ and we choose $k\approx\sqrt{n}$.
We observe that we may have rather few components exceeding the radii computed as in \eqref{pol.est} based on all $d$ components.  
	If some components of an observation are very large, then other components may not exceed the corresponding threshold. 
	As the choice of the $k$ {upper order statistics} is based on these radii, there may be rather few exceedances in some component.
Hence, we resort to lower dimensional vectors $\boldsymbol{X_{{\boldsymbol{q}}}}=(X_i: i\in {\boldsymbol{q}})$ for appropriate sets ${\boldsymbol{q}}\subseteq\{1,...,d\}$ for the estimation of the various scalings.

	\subsubsection{Structure Learning}\label{s811}
	
	We want to apply Algorithms \ref{alg8} and \ref{alg9} for structure learning by replacing the theoretical scalings by their estimated counterparts. However, we have to modify both algorithms to account for estimation errors. 
	
As the limited number of exceedances {of radii} in some components is particularly critical for identification of the initial nodes, 
	we modify the estimation procedure, which has been presented in \eqref{Mmam}, and estimate for every $m\in\{1,\dots,d\}$ the initial nodes only based on $\max(X_{i}, aX_{m})$ for $a>1$ (corresponding to $\bs{q}=\{i,m\}$), but then for all $i\in\{1,\dots,d\}\setminus \{m\}$.
	This means that the identification of the initial nodes is carried out by applying a pairwise version of Theorem~\ref{theo2} for all pairs.
	The following pairwise version of Algorithm~\ref{alg8}
	identifies the intial nodes of the DAG. 
	It is based on Theorem~5.6 and Algorithm~3 of \cite{Krali}, adapted for possible estimation errors.
 The positive bounds $\eps_1, \eps_2$ have to be chosen appropriately to ensure that the estimates $\hat{\Delta}_{i^*m^*}$ are close to zero. The scalar $a>1$ has to be chosen in accordance with Theorem~\ref{theo2}.
		
		\begin{algorithm}[H]
			\caption{Identifying the initial nodes of $\boldsymbol{X}$ }
			\label{datdalg}
			\begin{algorithmic}[1]
				\Procedure{}{}
				\State \textbf{Set} $\hat{\Delta} = (0)_{d\times d}; {N} = (0)_{1\times d}; {a>1}; \eps_1>0;  \eps_2>0$ 
				\State \textbf{for} $m^* \in \{d^*,...,1^*\}$ \textbf{do}
				\State \hspace{5mm}\textbf{for} $i^*\in \{d^*,...,1^*\}$ \textbf{do}
				\State \hspace{10mm} 
				$\hat{\Delta}_{i^*,m^*}=\hat{\sigma}_{M_{a m^*,i^*}}^2-\hat{\sigma}_{M_{i^*,m^*}}^2-a^2+1$
				\State \hspace{5mm}\textbf{end for} 
				\State \hspace{5mm}\textbf{Set} $\hat{\Delta}_{m^*} = (\hat{\Delta}_{1^*,m^*},...,\hat{\Delta}_{d^*,m^*})$
				\State \textbf{end for}
				\State \textbf{Set} $I_1 = \{m^*\in\{d^*,...,1^*\}:\{\underset{i^*\in\{d^*,...,1^*\}}{\max}\hat{\Delta}_{i^*,m^*}\leq\eps_1\}\cap \{\underset{i^*\in\{d^*,...,1^*\}}{\min}\hat{\Delta}_{i^*,m^*}\geq-\eps_2\}\}$
					
				\State \textbf{for} $m^* \in \{d^*,...,1^*\}$ \textbf{do}
				\State  \hspace{5mm}\textbf{if} $m^*\in I_1$,  \textbf{then} \\
				\hspace{15mm}
				$N_{m^*}=m^*$
				\State \hspace{5mm}\textbf{else} $N_{m^*}=0$
				\State \textbf{end for}.
				\
				\EndProcedure
			\end{algorithmic}
		\end{algorithm}

		{	Once the initial nodes are identified, we proceed finding the descendants. When searching for the $(h+1)$-th node, we apply the following modification of Algorithm~\ref{alg9}, which has again been adapted for possible estimation errors by an application of Corollary~\ref{algord}.}
			
			\begin{algorithm}[H]
				\caption{Identifying the ($h+1$)-th {node} of $\boldsymbol{X}$}\label{last.alg1}
				\begin{algorithmic}[1]
					\Procedure{}{}
					\State \textbf{Set} $\hat{\Delta} = (0)_{1\times{d}}; N = (0)_{1\times {d}}; a>1; {\boldsymbol{h}=\{\nu^{-1}({d}),\dots,\nu^{-1}({d}-h+1)\}}$
					\State \textbf{for} $i^*\notin \boldsymbol{h}$ \textbf{do} 
					\State \hspace{5mm} 
					$\hat{\Delta}_{i^*}=\hat{\sigma}_{{M}_{\boldsymbol{h}_a,i^*_a,\{\boldsymbol{h}\cup \{i^*\}\}^c}}^2-\hat{\sigma}_{{M_{\boldsymbol{h},i^*,\{\boldsymbol{h}\cup \{i^*\}\}^c}}}^2-(a^2-1)\hat{\sigma}_{M_{\boldsymbol{h}, i^*}}^2$
					%
					\State\textbf{end for} 
					
					\State \textbf{for} $i^*\notin\boldsymbol{h}$ \textbf{do}
					\State \hspace{5mm} \textbf{if}  $\hat{\Delta}_{i^*}=\underset{i^*\notin\boldsymbol{h}}{\max}\hat{\Delta}_{i^*}$, \textbf{then}
					\State \hspace{10mm} $N_{i^*}=i^*$
					\State \hspace{5mm} \textbf{else} $N_{i^*}=0$
					\State \textbf{end for}. 
					\EndProcedure
				\end{algorithmic}
			\end{algorithm}
				{In contrast to Example~\ref{10ex}, where our goal was to identify the generations of the graph, here we are only interested in a causal order of the nodes, and thus modify Algorithm \ref{alg9} based on Corollary~\ref{algord} so that it returns a unique node at each step of the if-loop.}
	\subsubsection{Estimation of the Scalings}\label{s812}
	
	According to Line 4 of Algorithm~\ref{last.alg1} three squared scalings need to be estimated:
	
	-${\sigma}_{{M}_{\boldsymbol{h},i^*,\{\boldsymbol{h}\cup \{i^*\}\}^c}}^2$\hspace{1.6mm}: By \eqref{partscaled}, this estimate involves all $d$ components of $\boldsymbol{X}$. 
	Thus, we set $\boldsymbol{q}\coloneqq\{1,...,d\}$ and proceed as in step (i) below;
	
	-${\sigma}_{M_{\boldsymbol{h}, i^*}}^2\hspace{1.42cm}$: We estimate the spectral measure based on $\boldsymbol{q}\coloneqq\boldsymbol{h}\cup\{i^*\}$  and proceed as in step (i) below; {such scalings we also need to estimate in Line 5 of Algorithm~\ref{datdalg};}
	
	-${\sigma}_{{M}_{\boldsymbol{h}_a,i^*_a,\{\boldsymbol{h}\cup \{i^*\}\}^c}}^2$: By \eqref{partscaled}, this is the estimated scaling of the rescaled vector $\boldsymbol{X}$ and we follow step (ii) below.  
	
	\medskip
	
\noindent
	For Algorithm~\ref{recalg2} we have to estimate the following squared scalings for $i,j\in\{1,...,d\}$ and $i\leq j+1$:
	
	-${\sigma}_{M_{i,j,j+1,...,d}}^2$: We estimate the spectral measure based on $\boldsymbol{q}\coloneqq\{i,j,j+1,...,d\}$.
	
	-${\sigma}_{M_{j,j+1,...,d}}^2$\hspace{1.75mm}:  Here we set $\boldsymbol{q}\coloneqq\{j,j+1,...,d\}$.

The following two steps modify the setting of Section~\ref{regcase} and summarize the estimation of the scalings in both Algorithms~\ref{datdalg}, \ref{last.alg1}, and Algorithm~\ref{recalg2}.

	(i) For the estimation of the squared scalings $\sigma_{M_{\boldsymbol{q}}}^2$ of $M_{{\boldsymbol{q}}}$ 
	for some subset of components ${\boldsymbol{q}}\subseteq\{1,...,d\}$ {as specified above},
	we take $\boldsymbol{X_{{q}}}=(X_i: i\in {\boldsymbol{q}})$, and compute for each observation $\ell\in\{1,\dots,n\}$, the (reduced) polar representation
	\begin{align}\label{bh}
	R_\ell\coloneqq\norm{\boldsymbol{X}_{\ell\boldsymbol{q}}}_2 =  (\sum_{i\in\boldsymbol{q}} X_{\ell i}^2)^{1/2}
	\quad\mbox{and}\quad {\boldsymbol{\omega}_\ell=({\omega}_{\ell i}: i\in \boldsymbol{q})\coloneqq\frac{\boldsymbol{X}_{\ell\boldsymbol{q}}}{R_\ell}}, \hspace{5mm} \ell=1,...,n.
	\end{align}
	Then for $1\le k\le n$ we estimate ${\sigma}_{M_{\boldsymbol{q}}}^2$ as
\begin{align*}
\hat{\sigma}_{M_{\boldsymbol{q}}}^2 =\frac{|\boldsymbol{q}|}{k}\sum_{\ell=1}^{n}\underset{j\in \boldsymbol{q}}{\bigvee}{\omega}_{\ell k}^2\mathds{1}{\{R_\ell \geq R^{(k)}\}}.
\end{align*}

	When $\boldsymbol{q}=\{i\}$, corresponding to the scaling of a single component, then by \eqref{bh}, $\omega=1$, and plugging this in the estimator in \eqref{estalg2}, we obtain $\hat{\sigma}_i^2=1$, which is the true scaling parameter $\sigma_i=1$.
	
	(ii) For the estimation of the scaling of $M_{\boldsymbol{h}_a, m^*_a, \{\boldsymbol{h}\cup \{m^*\}\}^c}$, we replace $\boldsymbol{X_q}$ above by the rescaled version $(aX_{\bs{h}}, aX_{m^*}, X_{\{\bs{h}\cup\{m^*\}\}^c})$, and re-apply the same procedure {as in (i)} to obtain new polar representations, say $(R_{a_\ell },\boldsymbol{\omega}_{a_\ell })$, and then estimate:
	$$\hat{\sigma}_{M_{\boldsymbol{h}_a, m^*_a, \{\boldsymbol{h}\cup \{m^*\}\}^c}}^2 
	=\frac{(a^2-1)(|\boldsymbol{h}|+1)+d}{k}\sum_{\ell=1}^{n}\underset{j=1 }{\overset{d}{\bigvee}}{\omega}_{a_\ell j}^2\mathds{1}{\{R_{a_\ell  }\geq R_a^{(k)}\}}.$$
	The numerator, $(a^2-1)(|\boldsymbol{h}|+1)+d$ corresponds to the new mass of the spectral measure as a consequence of the scaling by $a$ of the components involved in $X_{\boldsymbol{h}}$ and $X_{m^*}$, see for instance Lemma~\ref{usescale}(b).
	
\subsubsection{Estimating the ML Coefficient Matrix}\label{s813}

After having estimated also the scalings needed for Algorithm~\ref{recalg2}, we have to take care of estimation errors. Indeed, it can happen that entries of $A^2$ are estimated as being negative.
For the two data examples to follow we simply set $\hat A=\sqrt{\max (\hat A^2,0)}$, {with the square root taken entrywise}, and keeping all positive estimates. For larger networks it may be advisable to choose a thresholding or lasso procedure to obtain a sparse graph. 

	\subsection{Industry Portfolio Data}\label{s82}
			
	The data consist of seven time series of value-averaged daily percentage returns, each assigned to one of seven industry portfolios as part of the 30-Industry-Portfolio in the Kenneth French Data Library available at \url{https://mba.tuck.dartmouth.edu/pages/faculty/ken.french/data library.html}.
	
	All 30 portfolios have been analysed in \cite{cooley,JanWan}, where in \cite{cooley} it is suggested that the tails of the data are regularly varying. We refer to the introduction for more details on the objectives of these papers.

The seven portfolios we consider are:
	Chemicals (Ch), Fabricated Products (FP), Electrical Equipment (EE),  Healthcare (H), Smoke (S),  Utilities (U), and Others (O), which includes products which are not specific to any of the other listed industries. 
	A precise description of the data, in particular of each industry sector can be found on the website above.
	
	The data has been collected over the years 1950-2015. 
	{Since the time series over this time period is non-stationary and, in particular, since also the dependence structure changes over this long period,}
	we have selected the time window from $01.06.1989$ to $15.06.1998$ containing $2285$ observations, which show marginal stationarity. 
	To each of the 7 time series we have fitted moving average processes of order 3, {with the exception of Others where we have fitted a moving average process of order 4}, and  performed a Ljung-Box test with 8 lags on the residuals. The test did not reject the independence hypothesis of the residuals, supporting the assumption that the time series are stationary. Figure~\ref{TS} depicts the time series plots in \%-returns for the seven industries. 

\begin{figure}[ht]
	\vspace{-.4cm}
	\hspace*{-.8cm}\includegraphics[ height=5cm, width=17cm, trim={.0cm -.6cm -.0cm .9cm}, clip]{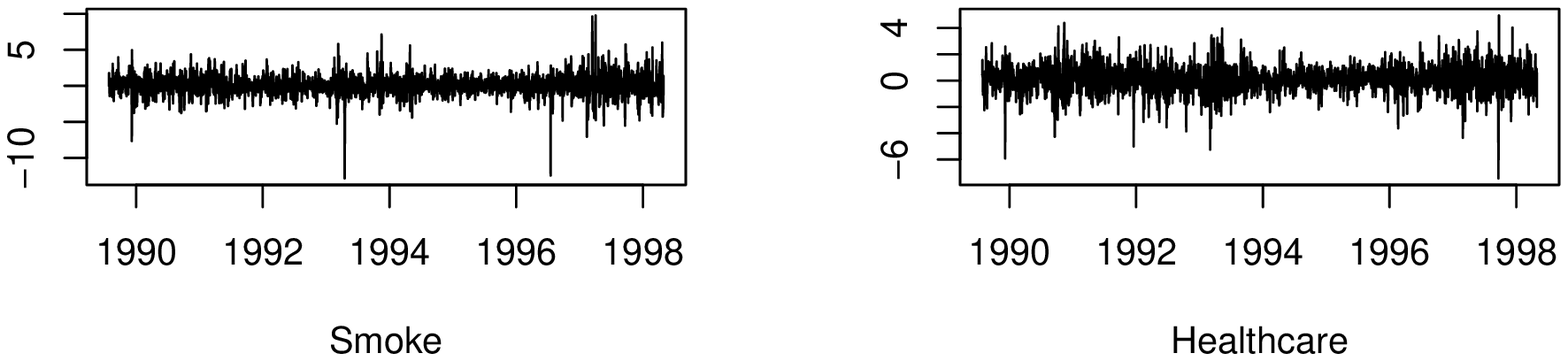}\centering\vspace{-1.0cm}
	\hspace*{-.8cm}\includegraphics[ height=5cm, width=17cm,   trim={.0cm -.6cm -.0cm .9cm}, clip]{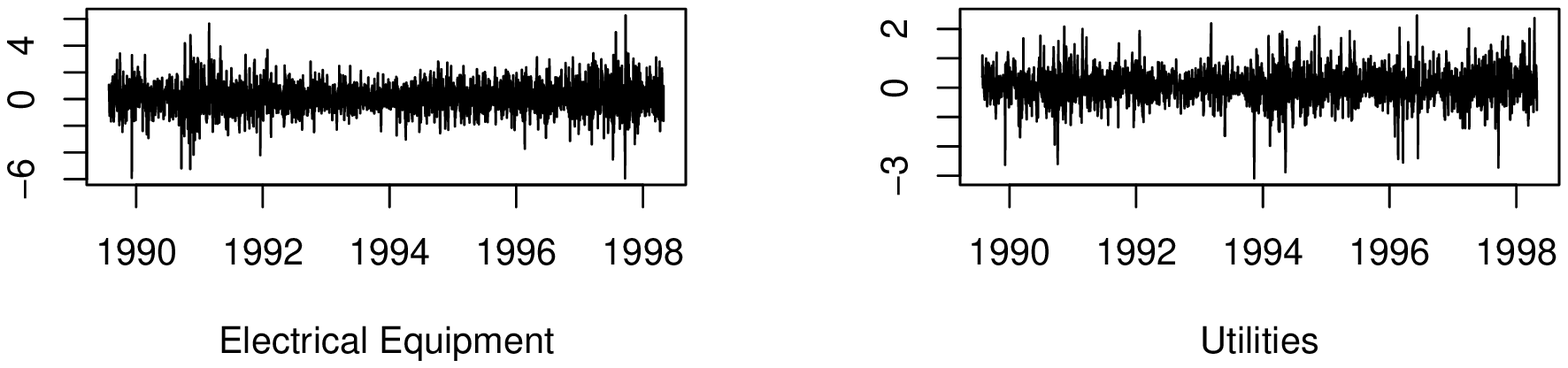}\vspace{-1cm}
	\hspace*{-.8cm}\includegraphics[ height=5cm, width=17cm,   trim={.0cm -.6cm -.0cm .9cm}, clip]{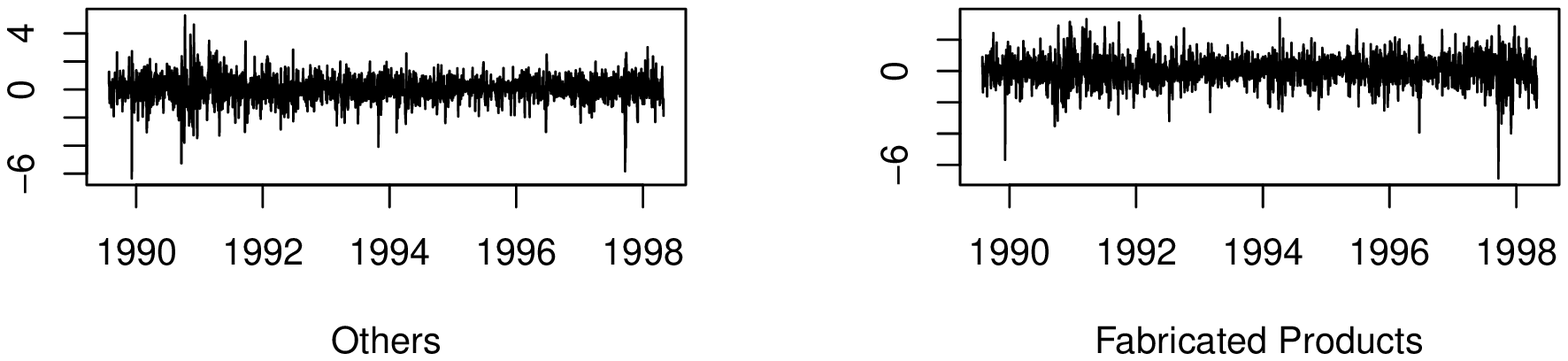}\vspace{-1cm}
	\hspace*{3.55cm}\includegraphics[ height=5cm, width=17.4cm, trim={.0cm -.6cm -.0cm .9cm}, clip]{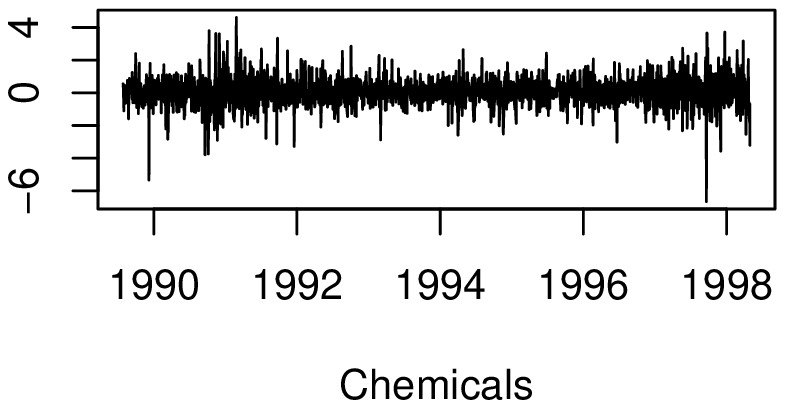}\vspace{-1cm}
	\caption{Time-series of \%-returns for the seven industries from 01.06.1989 until 15.06.1998.}
	\label{TS}
\end{figure}

As we assess dependence in extreme negative returns, we transform the vector of the 7 portfolio time series to $\boldsymbol{X}^{*}=\max(-\boldsymbol{X},0)$. 
Our aim is to fit a recursive ML model model to $\boldsymbol{X}^*$.
We transform the data by the empirical integral transform to standardize them to Fr\'echet$(2)$ margins (see for instance, p.~381 in \cite{beirlant}, or \cite{cooley}).
	We map $(S, H, EE, U, O, FP, Ch)\mapsto (1,2,3,4,5,6,7)$ and define for $i=1,\dots,7$
\begin{align}\label{empstand}
X_{\ell i}\coloneqq\Big(-\log\big\{\frac{1}{n+1}\sum_{j=1}^{n}\mathds{1}_{\{X^*_{ji}\leq X^*_{\ell i}\}}\big\}\Big)^{-1/2},\quad  \ell=1,\dots,n=2285.
\end{align}
	We also collect the 7-dimensional data into a vector time series $\bs{x}^*_\ell=(x^*_{\ell,1},x^*_{\ell,2},x^*_{\ell,3},x^*_{\ell,4},x^*_{\ell,5},x^*_{\ell,6},x^*_{\ell,7})$, $\ell=1,\dots,2285$.

{\em Running maxima.} In order to provide some insight in the data structure, we start our analysis with a time-line of the high risk events of the seven industry portfolios and their association with shocks entering the industry network from the relevant component of the innovation vector $\boldsymbol{Z}$. 
	The horizontal axes of Figure~\ref{occur} shows every time point, when the maximum over all seven standardized industry returns happens and indicates on the vertical axes the respective innovation component causing this maximum. 
		Since the innovations $Z_i$ for $i=1,\dots,d$ are atomfree and by Assumption~(1) independent, representation  \eqref{rmlmequat} ensures that each recursive ML component $X_i$ realises its maximum in exactly one innovation. 
		If two components of $\bsx$ are realised by the same innovation, still the realised values of the two components are different by Lemma~\ref{ineq}, which implies that $\max\{X_1,...,X_d\}$ is unique. These innovations are indicated in Figure~\ref{occur}. 
		\begin{figure}[H]
			\vspace{-0.4cm}\hspace*{-.8cm}\includegraphics[ height=5.5cm, width=17cm,   trim={.0cm -.6cm -.0cm .9cm}, clip]{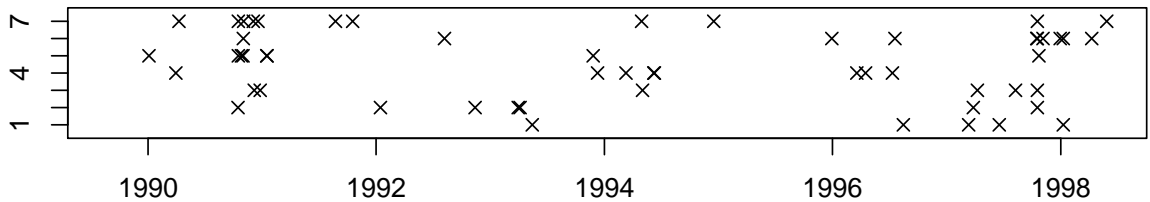}
			\vspace{-2.5cm}
			\caption{Dates of the maximal occurrences associated to the components $i\in\{1,...,7\}$.}\label{occur}
		\end{figure}
		We find that most of the shocks originate from the initial node Chemicals (7) during the time period 1990-1991 which coincides with the Gulf War, caused by the invasion of Kuwait by Iraq. During the war it was feared that Iraq made use of chemical warfare. 
		Regarding Fabricated Products (6), the larger losses occur close to the end of 1997, which is associated with the slowdown in the Asian economies and which had spillover effects on the U.S economy, eventually leading also to the October 27, 1997 Mini-Crash. 
		The Utilities (4) experience large losses in the years 1994 and 1996 associated with deregulation of the electric energy supply in the US, which was initiated in 1992. 
		The Healthcare sector (2) experiences losses in the period 1992-1993 associated with the Clinton Health Care reform.
		
		Our final goal is to approximate the causal dependence structure via a recursive ML model by means of the learning algorithm presented in the previous sections of this paper. This algorithm is based on all returns above a high threshold, not only on the maximum value.

{\em Bivariate extremes.} In a second exploratory analysis we plot the bivariate extremes (real data and simulated ones) in Figure~\ref{tb1plot} of Appendix~\ref{p}. We also simulate a 7-dimensional random vector $\boldsymbol{X}\in RV^{7}_+(2)$ from an innovation vector $\boldsymbol{Z}\in RV^{{7}}_+(2)$
            with independent standard Fr\'echet(2) components of dimension $n=2285$ via $\boldsymbol{X}=\hat{A}\times_{\max} \boldsymbol{Z}$, where the estimated matrix $\hat{A}$ is given in \eqref{esA}.
            Two columns always belong together, the left one gives the empirical bivariate extremes of two of the seven components, respectively, which have also been the basis for the estimation procedure of Section~\ref{s81}. The right one presents a simulation of the estimated model. We plot only those bivariate observations with the 50 largest radii.
            Left and right (real and simulated data) look very much alike, indicating that the estimated bivariate models are valid approximations to the biviariate empirical distribution in the tails. 
            
We give an interpretation of
		{\em Utilities versus Electrical Equipment} (line {6},  columns 3 and 4): Here we notice that large losses in Utilities do not necessarily occur together with high risk for Electrical Equipment. {This can be verified in the plot of realised bivariate extremes (column 3), where large losses for Utilities occurring close to the vertical axis correspond only to negligible losses for Electrical Equipment.} This suggests that the common large losses between Utilities and Electrical Equipment are rather caused by their common ancestors. This may be due to idiosyncratic risks associated with one but not the other, {which in this case might correspond to the innovation terms $Z_3$ $(\hat{a}_{43}=0)$ and $Z_4$ $(\hat{a}_{34}=0)$.} 
	
{\em Fitting a recursive ML model.} Finally, we approximate the extreme dependence structure of $\boldsymbol{X}^*$ by a recursive ML model.
To this end, according to Section~\ref{regcase}, we have to choose a threshold value $k=o(n)$ of the radial components.
We choose $k\approx\sqrt{n}$ and set $k=50$.
For identification of the initial nodes we employ Algorithm~\ref{datdalg} with $a=1.01$ and $\eps_1=0.0045,\eps_2=0.0045$, and then in order to reorder the remaining nodes Algorithm~\ref{last.alg1}.
As output we obtain the ordered vector $$\boldsymbol{X}=(X_S,X_{H},X_{EE},X_U,X_O,X_{FP},X_{Ch}).$$
	Finally, we estimate the ML coefficient matrix  by the estimation version of Algorithm~\ref{recalg2} and setting $\hat{A}=\sqrt{\max(\hat{A^2},0)}$.
	The estimated  standardised ML coefficient matrix is given by
\begin{align}
	\hspace{-15mm}
	\hat{A}=\begin{bmatrix}
0.708 & 0 & 0.462& 0 & 0.142 & 0.333 & 0.476\\
0 & 0.566 & 0.459 & 0 &0.076 & 0.147& 0.687\\
0 & 0 & 0.649& 0& 0.344&  0.136& 0.686\\
0 & 0 & 0 & 0.709& 0.333& 0.188 &  0.593\\
0 & 0 & 0 & 0 & 0.682 & 0.250 & 0.688\\
0 & 0 & 0 & 0 & 0 & 0.674& 0.739\\
0 & 0 & 0 & 0 & 0 & 0& 1\\
	\end{bmatrix}.
	\label{esA}
	\end{align}
The estimated squared scaling parameters of the components of $\boldsymbol{X}$ are obtained by summing the squared entries of the respective row. We find the estimated vector of scalings $(1.036, 1.01, 1.01, 1, 1, 1, 1)$ and recall that the theoretical ones are all equal to 1. Deviations from scalings of 1 stem from the fact that we set $\hat{A}=\sqrt{\max(\hat{A}^2,0)}$. In doing so, once $\hat{A}^2$ has been computed, we ignore its entries that are close to zero but negative, for instance $\hat{a}_{12}^2, \hat{a}_{14}^2, \hat{a}_{24}^2, \hat{a}_{34}^2$. Consequently this can make the sums of the square entries of the respective row be slightly greater than one.  
	
		The DAG corresponding to $\hat{A}$ is given in Figure \ref{founddag}. We recall that $a_{ij}=0$ {implies no edge} from $j$ to $i$. 
		
		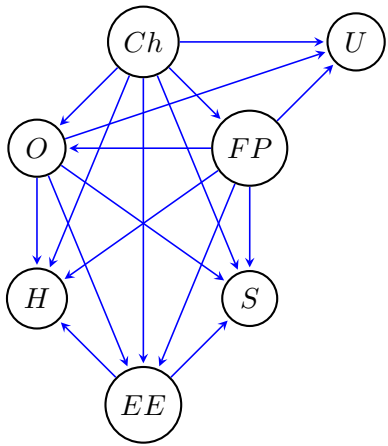
\begin{figure}[htbp]
			\centering
			\begin{tikzpicture}[
			> = stealth, 
			shorten > = 1pt, 
			auto,
			node distance = 2cm, 
			semithick 
			]
			\tikzstyle{every state}=[
			draw = black,
			thick,
			fill = white,
			minimum size = 4mm
			]
			
			\node[state,  ] (7) {$Ch$};
			
			\node[state] (6) [below right of=7]{$FP$};
			\node[state] (5) [below left of=7] {$O$};
			
			\node[state] (4) [above right of=6] {$U$};
			\node[state] (2) [below of=5] {$H$};
			\node[state] (1) [below of=6] {$S$};
			\node[state] (3) [below left of=1] {$EE$};
			
			\path[->][blue] (7) edge node {} (5);
			\path[->][blue] (7) edge node {} (6);
			\path[->][blue] (7) edge node {} (4);
			\path[->][blue] (7) edge node {} (3);
			\path[->][blue] (7) edge node {} (2);
			\path[->][blue] (7) edge node {} (1);
			
			\path[->][blue] (6) edge node {} (5);
			\path[->][blue] (6) edge node {} (4);
			\path[->][blue] (6) edge node {} (3);
			\path[->][blue] (6) edge node {} (2);
			\path[->][blue] (6) edge node {} (1);
			
			\path[->][blue] (5) edge node {} (4);
			\path[->][blue] (5) edge node {} (3);
			\path[->][blue] (5) edge node {} (2);
			\path[->][blue] (5) edge node {} (1);

			\path[->][blue] (3) edge node {} (2);
		\path[->][blue] (3) edge node {} (1);

			\end{tikzpicture}
			\small{\caption{{Learned} DAG structure for the seven considered industries. Note that Chemicals is the only initial node of the DAG.}\label{founddag}}
		\end{figure}

			The estimated DAG should provide insight into the causality structure of the seven industries. {We  associate the estimated scalings with the standardised risk of each component.} Then the estimated entries in $\hat A$ indicate the {proportions} of risk inferred from the causal dependence. 
			
		The only initial node is Chemicals, whose high losses impact risk on all other industries as it has out-degree 6. For instance, the line of productions in Fabricated Products, Healthcare (which includes pharmaceutical industry), Electrical Equipment, Utilities (which includes electricity services and supply), and Smoke are highly dependent on the supply of chemical products and chemical processing.
		
		Fabricated Products has out-degree 5 with its high risk affecting Utilities, Others, Smoke, Healthcare, and Electrical Equipment. A reason for this may lie in the industrial fabrication of many products in the affected industries.
		In particular the impact on Smoke, whose production line depends heavily on machinery, is stronger than that on Utilities, Electrical Equipment, and Others, since $a_{16}>\max(a_{26},\dots,a_{56})$.
		
		Others, whose components include also Cogeneration Power Producers, has out-degree 4 with its high risk affecting Utilities, Smoke and Electrical Equipment, and Healthcare to a lesser extent. On the other hand, Others has in-degree 2, so high risk in Chemicals or Fabricated Products affects Others, which can be seen from $a_{57}$ and $a_{56}$ with higher influence from Chemicals.
		
		Electrical Equipment has out-degree 2 and in-degree 3, so its high risk is caused by Others, Chemicals, and Fabricated Products, where the influence of Chemicals is about {twice} as large as Others, and the influence of high risk in Fabricated Products is much lower.
		On the other hand, high risk in Electrical Equipment impacts on Healthcare and Smoke in about equal proportions. 
		
		Utilities, Healthcare, and Smoke have out-degree 0, so these portfolios are affected by high risk of other portfolios, but their high risks do not {spread} elsewhere. This is seen from columns 1,2, and 4 of $\hat A$, where the quantities on the diagonal correspond to the idiosyncratic risk. The quantities to the right measure the high risk influencing these three portfolios.

\subsection{Dietary Supplement Data}\label{s83}

		The data is taken from a dietary interview from the NHANES report for the year 2015-2016, which is available at \url{https://wwwn.cdc.gov/Nchs/Nhanes/2015-2016/DR1TOT_I.XPT}; here also more details about the 168 data components can be found. The objective is that of estimating the total intake of calories, nutrients and non-nutrient food components from foods and beverages consumed a day prior to the interview. 
	From the above data	38 components have been investigated in \cite{JanWan} using the clustering approach mentioned already in the introduction. 
		
		We focus on four of the components, Vitamin A (DR1TVARA), Beta-Carotene (DR1TBCAR),  Lutein+Zeaxanthin (DR1TLZ) and Alpha-Carotene (DR1TACAR). We abreviate them as VA, BC, LZ, AC, respectively. 
		{For each component there are $n=9544$ observations, each corresponding to a different individual and generated from survey interviews, thus the data sample can be treated as an i.i.d. sample. }
		A Hill plot (see e.g. \cite{embc}, Section~6.4) suggests that all data components are regularly varying with some positive index.  We map (VA, BC, LZ, AC) $\mapsto (1,2,3,4)$ and apply the empirical integral transform  to standardize the data to Fr\'echet(2) margins as in \eqref{empstand}. {As in Section~\ref{s82}} we choose $k\approx\sqrt{n}$ as radial threshold (see Section \ref{regcase}), taking $k=100$ upper order statistics.
		The bivariate extremes (real data and simulated ones) are plotted in Figure~\ref{tb2plot} of Appendix~\ref{f1} with interpretations as in Section~\ref{s82}.
		
		We apply Algorithms~\ref{datdalg} and \ref{last.alg1} to reorder the nodes and estimate a recursive ML model.  In the two algorithms we set $a=1.01, \eps_1=0.002, \eps_2 =0.001$. This results in the causal order (VA, BC, LZ, AC). {Following the procedure in Section \ref{s82}, we obtain the ML coefficient matrix}
			\begin{align}
			\hat{A}=\begin{bmatrix}
			0.680 & 0.406 & 0.303 & 0.531\\
			 0& 0.651& 0.500 & 0.571\\
			0 & 0 & 0.960 & 0.281\\
			0& 0 & 0 & 1.000\\
			\end{bmatrix}.
					    \label{esAf}
			\end{align}
			The estimated scaling parameters of the components of $\bs X$ equals {up to three digits} (1,1,1,1). 
		The DAG corresponding to $\hat A$ is presented in Figure~\ref{founddag3}.
		We interpret dependence in high amounts of the four given food components.
	The estimated entries in $\hat A$ indicate the proportion of high intake from food consumption.
	
			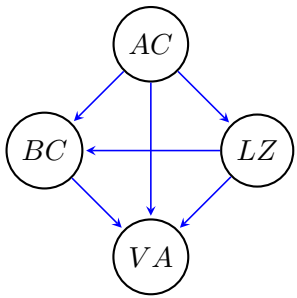
\begin{figure}[htbp]
				\centering
				\begin{tikzpicture}[
				> = stealth, 
				shorten > = 1pt, 
				auto,
				node distance = 2cm, 
				semithick 
				]
				\tikzstyle{every state}=[
				draw = black,
				thick,
				fill = white,
				minimum size = 4mm
				]
				\node[state,  ] (4) {$AC$};
				
				\node[state] (3) [below right of=4]{$LZ$};
				\node[state] (2) [below left of=4] {$BC$};
				\node[state] (1) [below left of=3] {$VA$};
				
				\path[->][blue] (4) edge node {} (2);
				\path[->][blue] (4) edge node {} (1);
				\path[->][blue] (3) edge node {} (2);
				
				\path[->][blue] (3) edge node {} (1);
				\path[->][blue] (4) edge node {} (3);

				\path[->][blue] (2) edge node {} (1);
				
				\end{tikzpicture}
				\small{\caption{DAG structure between the four food components.}\label{founddag3}}
			\end{figure}

	From the DAG in Figure~\ref{founddag3} we observe that the only initial node is Alpha-Carotene. 
	Having out-degree 3 its high {intake} affects the intake of the other three food components.
		Alpha-Carotene affects in particular Vitamin A, and Beta-Carotene, where in both cases it behaves as the main contributor of high values amongst the respective ancestral nodes. This can be seen from  the relative magnitude of the entries, namely $a_{14}>\max(a_{12},a_{13})$ and $a_{24}>a_{23}$. To a lesser degree  Alpha-Carotene also affects high intake of Lutein+Zeaxanthin as is seen from $a_{34}=0.281$.

		On the other hand, high proportions of Lutein+Zeaxanthin, which has in-degree 1 and out-degree 2,  lead to large intakes of Beta-Carotene, and affect those of Vitamin A in a similar fashion, but to a lesser proportion. From the estimated matrix $\hat A$ in \eqref{esAf}, we can infer that Lutein+Zeaxanthin, along with Alpha-Carotene is one of the main causes of high Beta-Carotene, with approximately equal contributions, {judging by the relative sizes of} $a_{23},a_{24}$. Finally Beta-Carotene, with in-degree 2 and out-degree 1 is the second largest contributor to high intake of Vitamin A, since $a_{12}>a_{13}$. Among all four components, Vitamin A has in-degree 3, and out-degree 0, showing that high intake of VA does not influence any of BC, LZ, or AC.

				\section{Conclusions}
			
			We have developed a new structure learning and estimation algorithm for the recursive ML model \eqref{rmlmequat}. The proposed methodology is designed for estimating recursive max-linear models for extreme events in a multivariate regular variation setting. The technique is non-parametric based on the empirically estimated spectral measure.
			The parametric estimation step focuses on the scalings and reflects the changes {caused by} simple scalar multiplications of the observed data variables {on} the scaling parameters. In addition, based on the very same scalings, we have shown how to estimate all extreme dependence parameters. The latter are shown to be asymptotically normal. Finally, the application of the new estimation method to financial and food dietary intake data shows that the recursive max-linear model can be fitted for capturing causal dependence structures in the extremes arising from real-life data.


\appendix

	\vspace{-3cm}\section{Figures: Portfolio Data}\label{p}
		\begin{figure}[H]
		\vspace{-.4cm}
		\hspace{-2cm}\includegraphics[height=6cm, width=10.4cm]{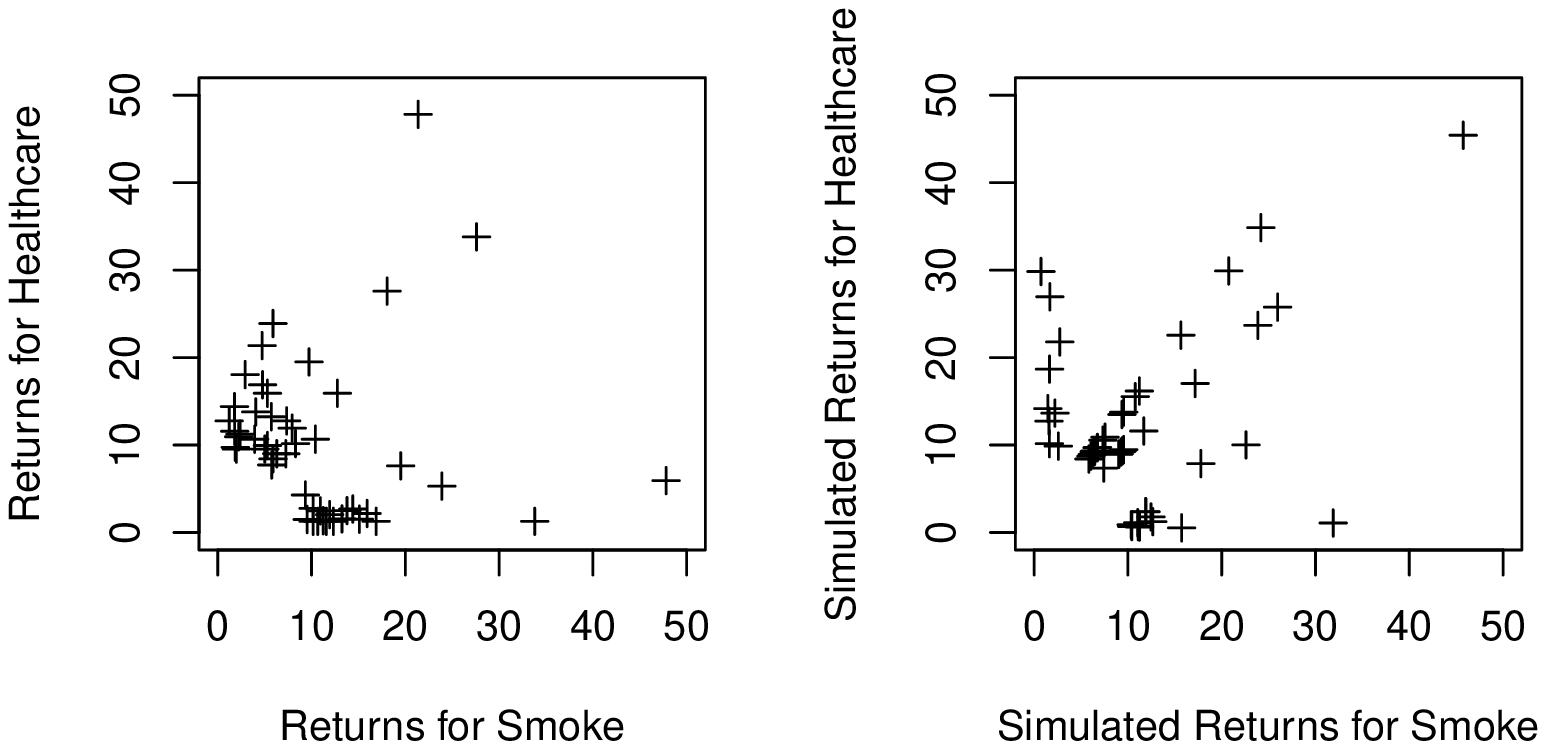}\vspace{-.0cm}
		\includegraphics[height=6cm, width=10.4cm]{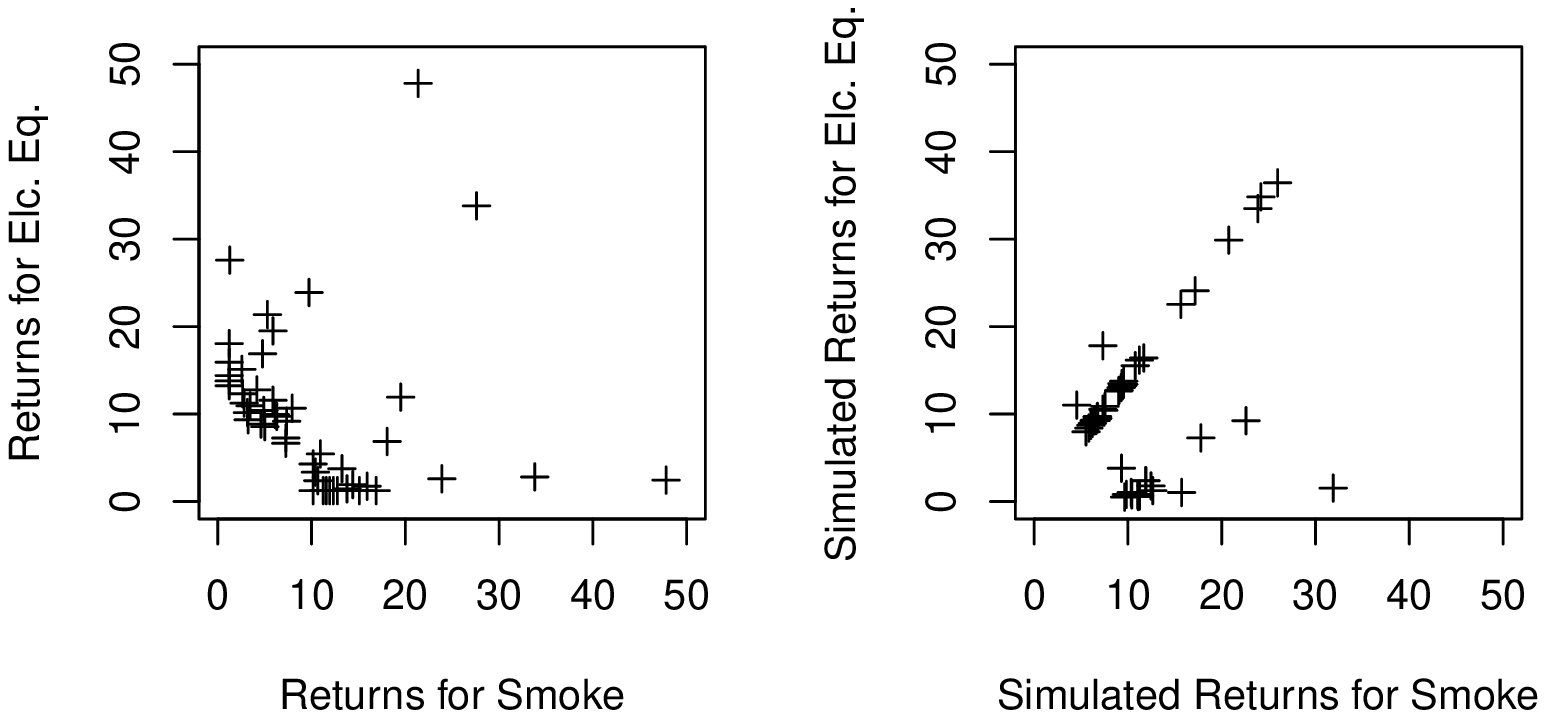}\vspace{-.0cm}
			
		\hspace{-2cm}\includegraphics[height=6cm, width=10.4cm]{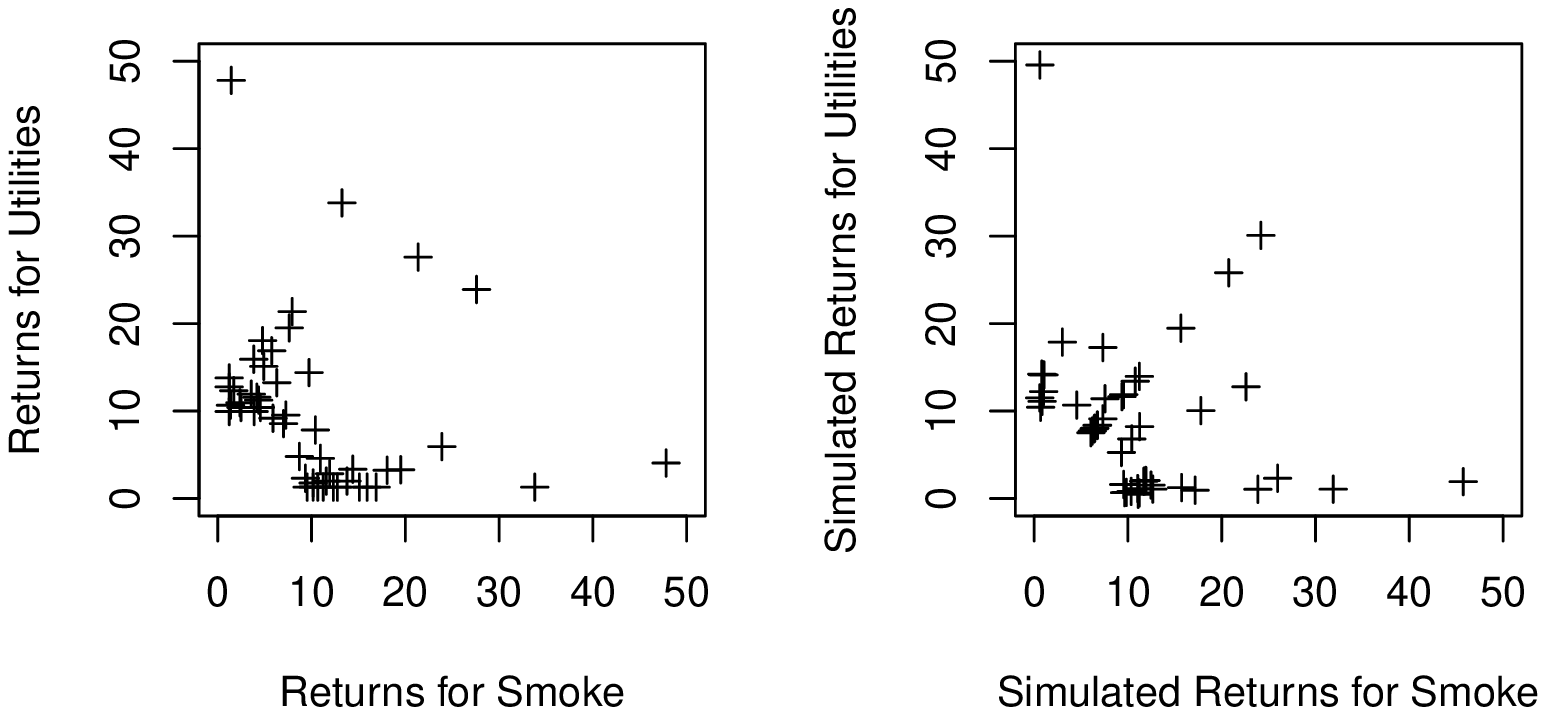}\vspace{-.0cm}
		\includegraphics[height=6cm, width=10.4cm]{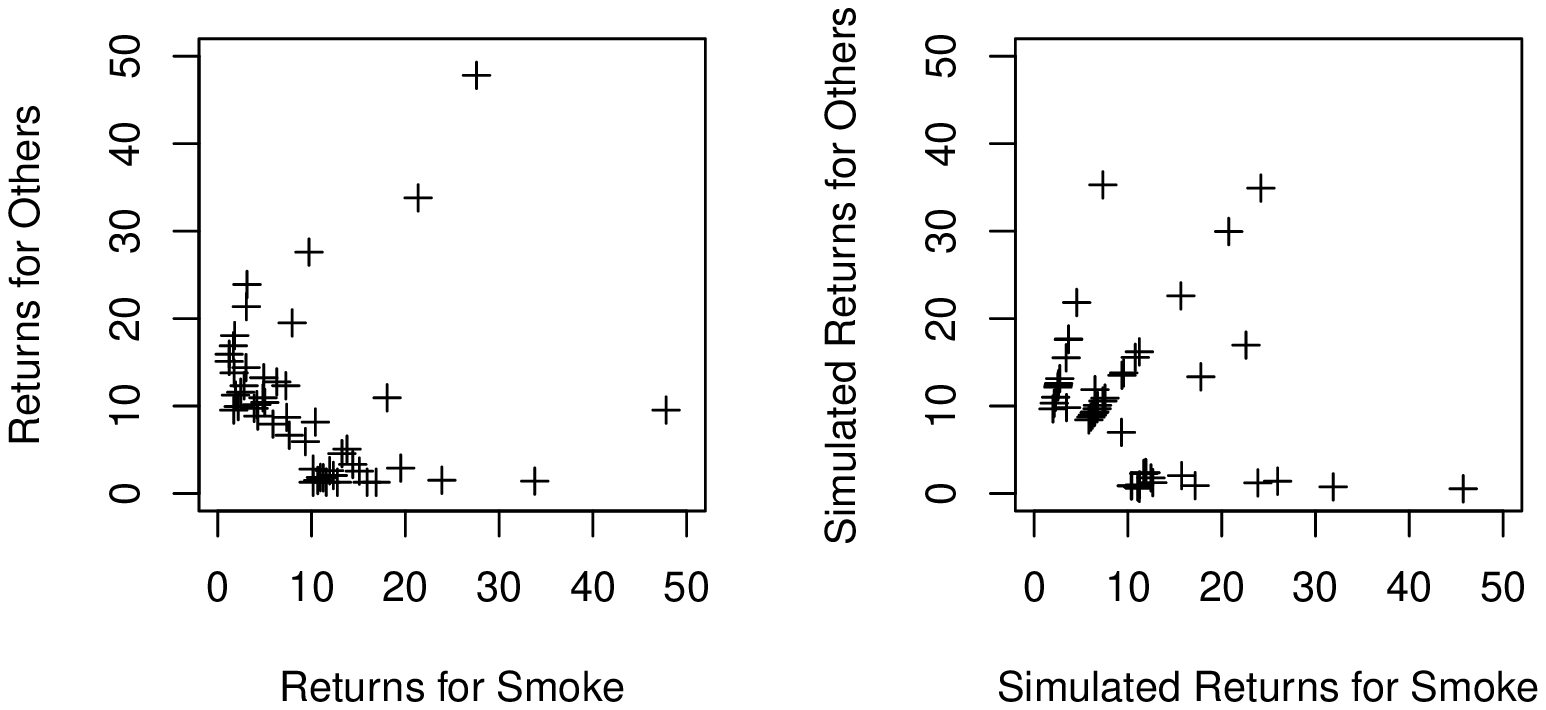}\vspace{-.0cm}
					
		\hspace{-2cm}\includegraphics[height=6cm, width=10.4cm]{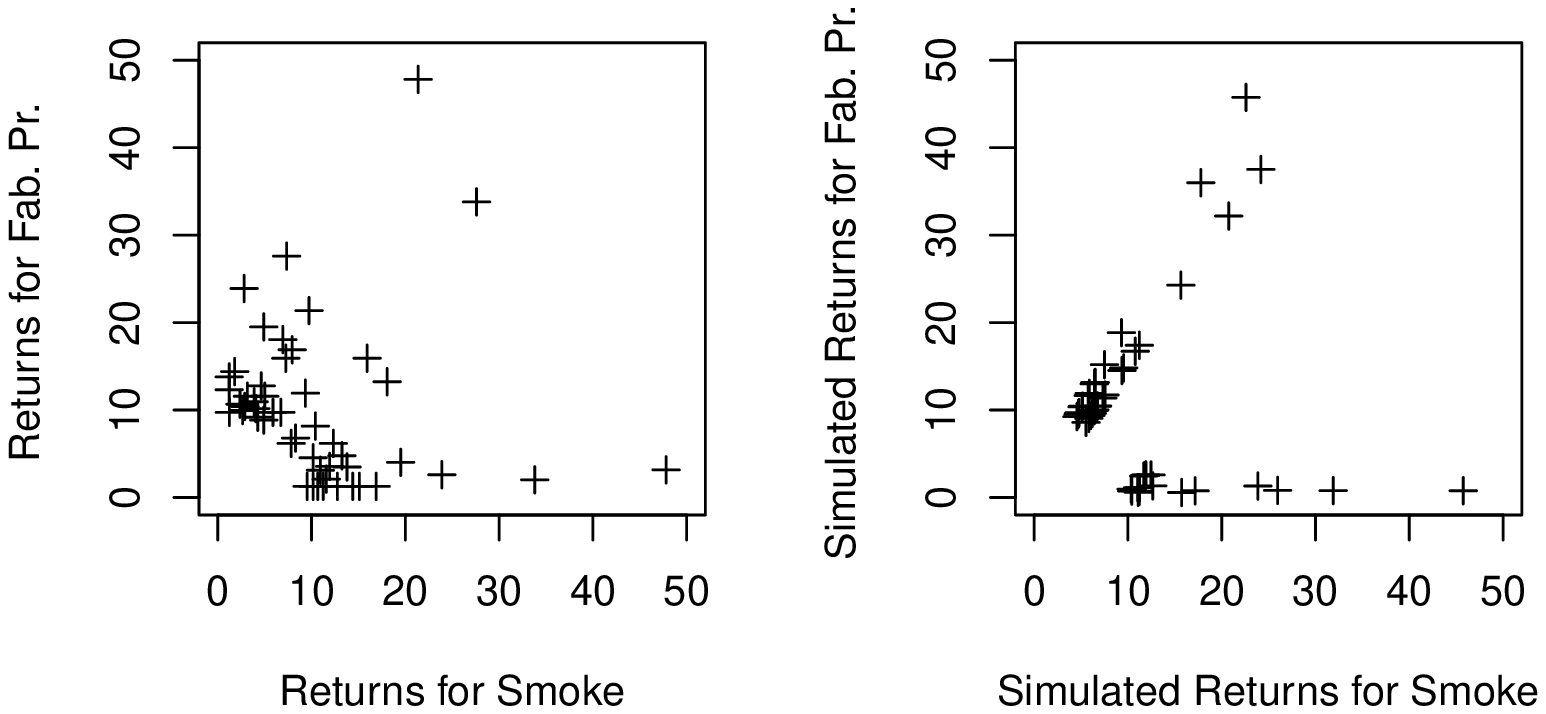}	
		\includegraphics[height=6cm, width=10.4cm]{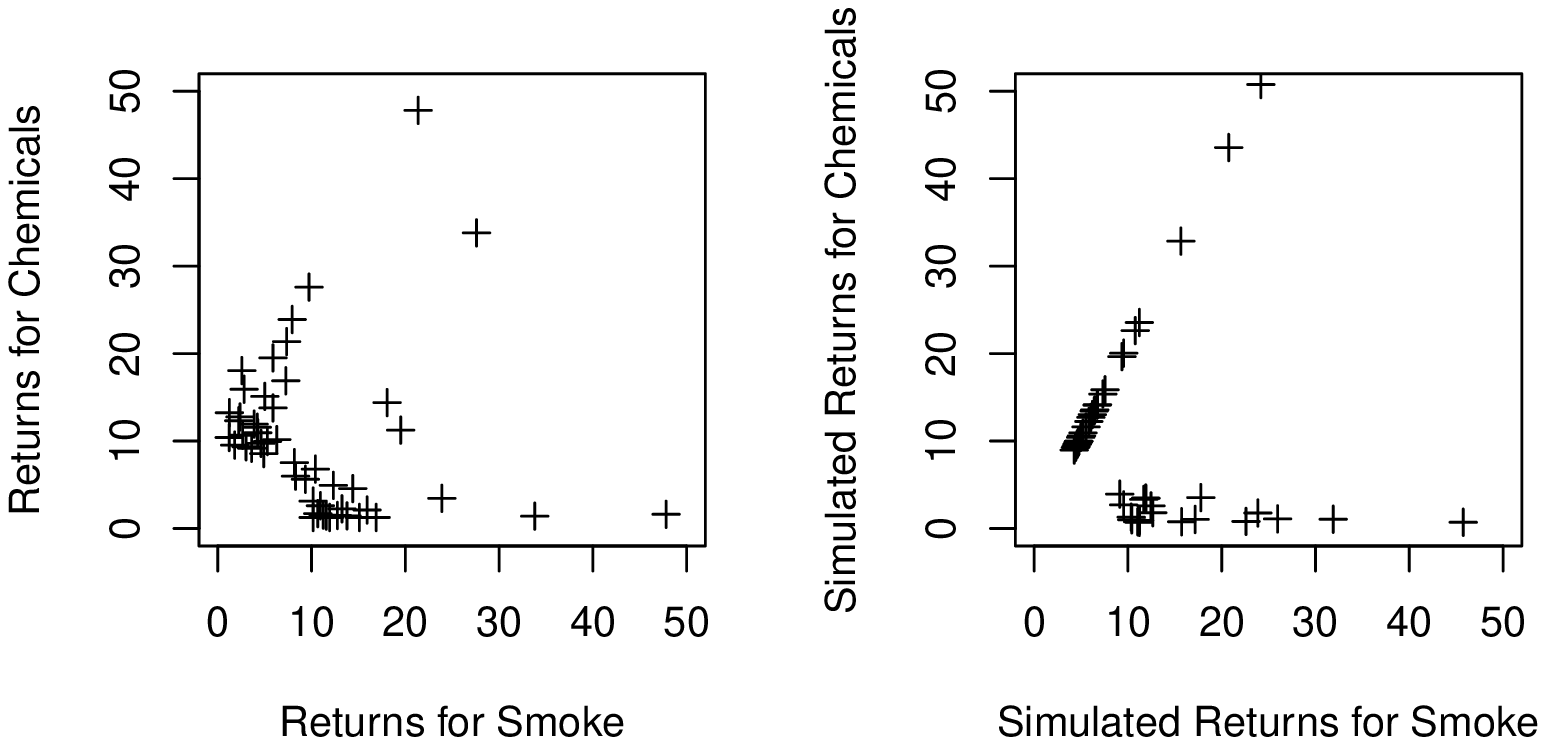}\vspace{-.0cm}
		
		\hspace{-2cm}\includegraphics[height=6cm, width=10.4cm]{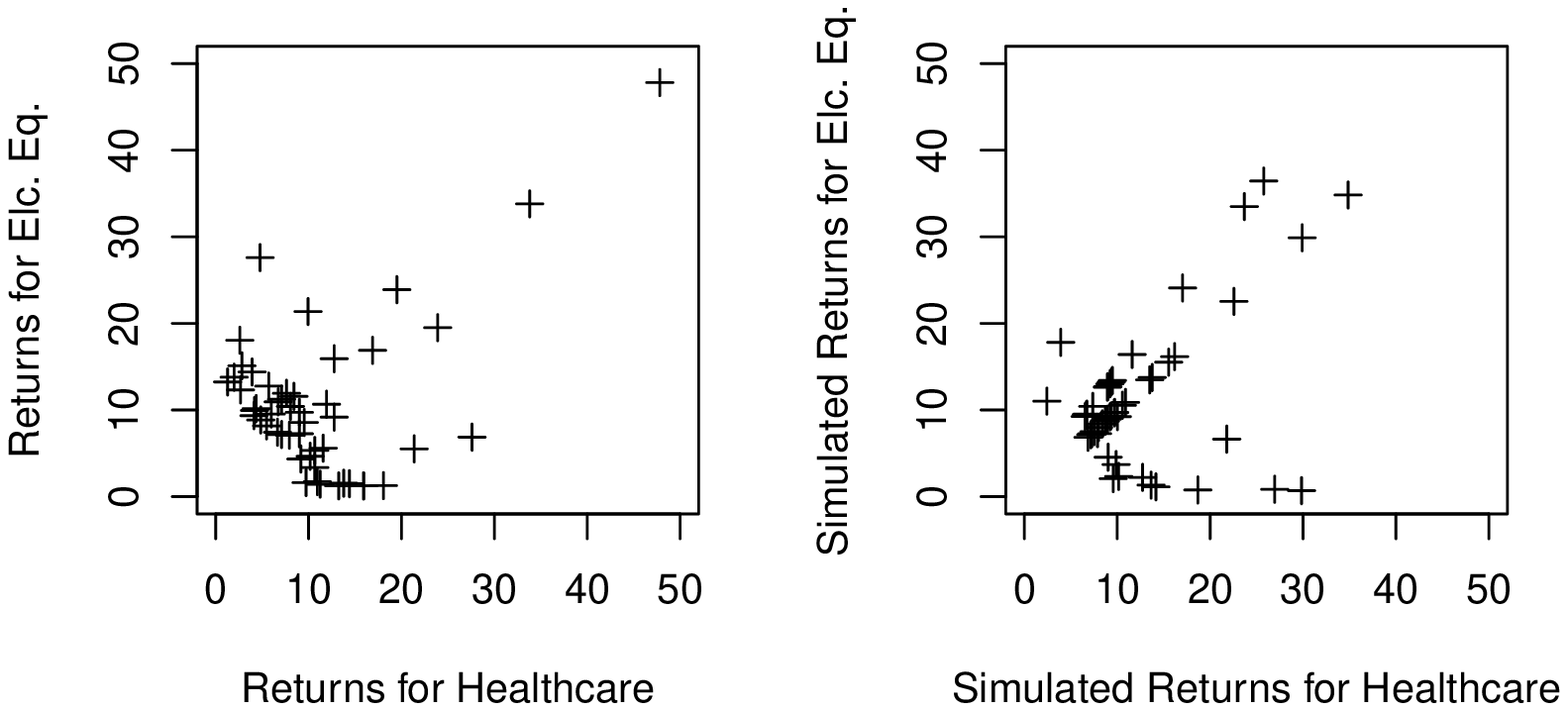}\vspace{-.0cm}
		\includegraphics[height=6cm, width=10.4cm]{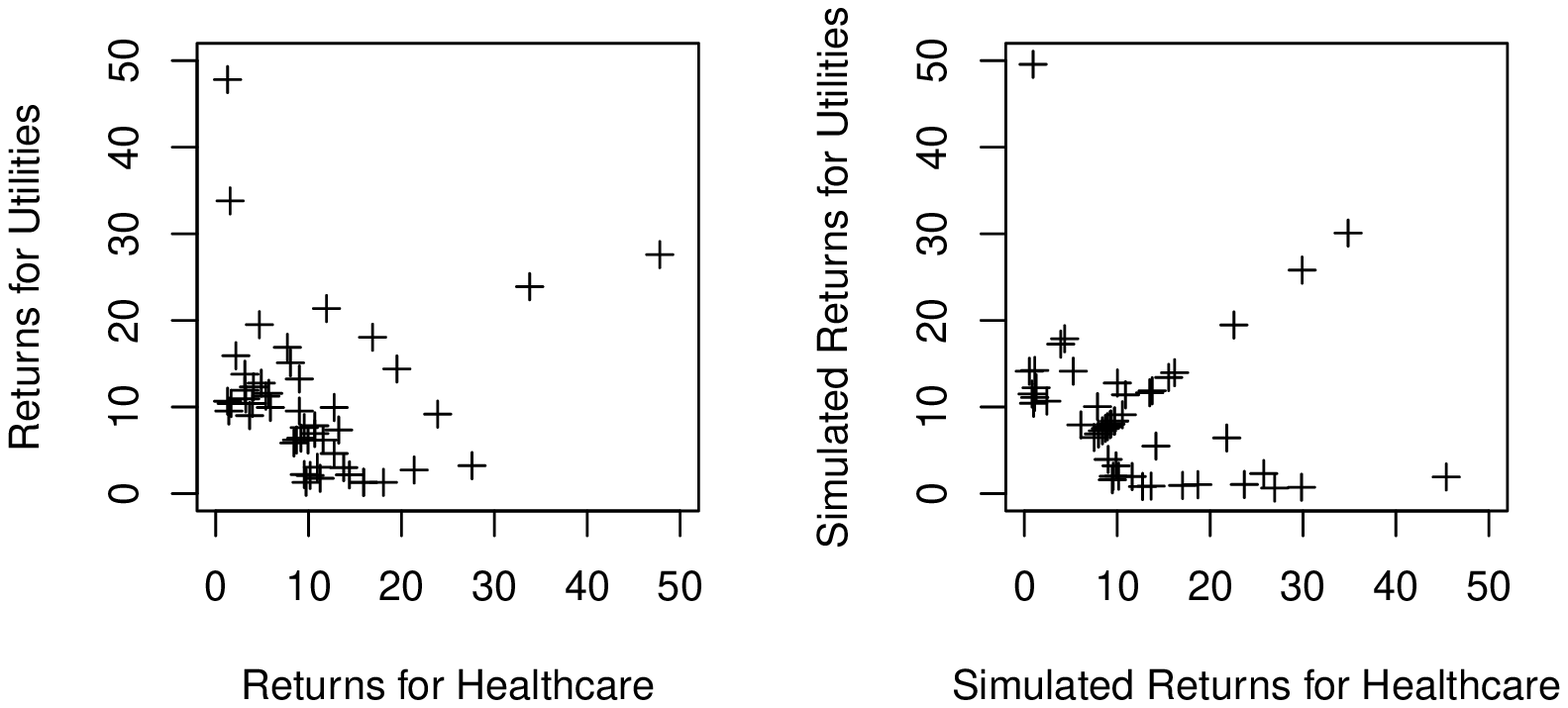}\vspace{-.0cm}
			\end{figure}
			
			\begin{figure}[H]
		
		\hspace{-2cm}\includegraphics[height=6cm, width=10.4cm]{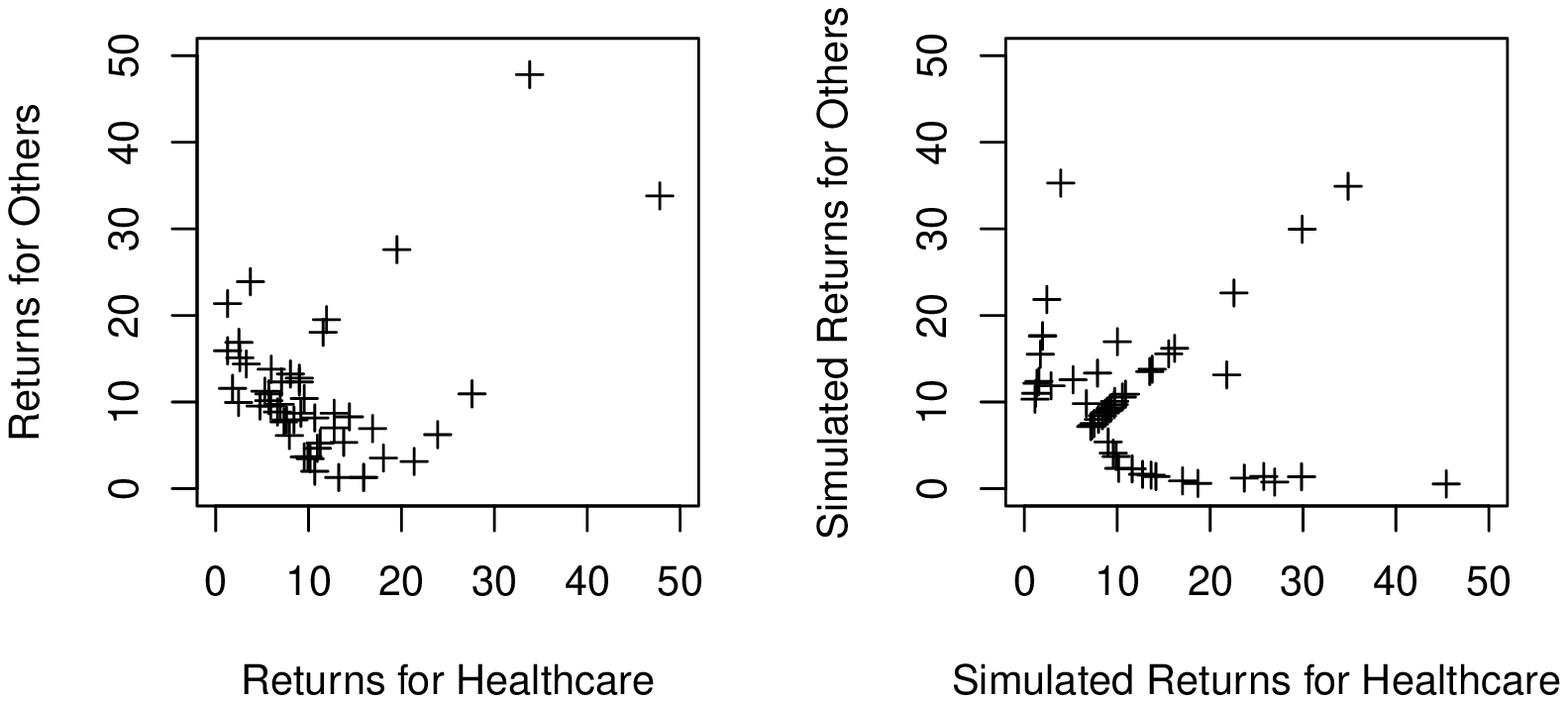}\vspace{-.0cm}
		\includegraphics[height=6cm, width=10.4cm]{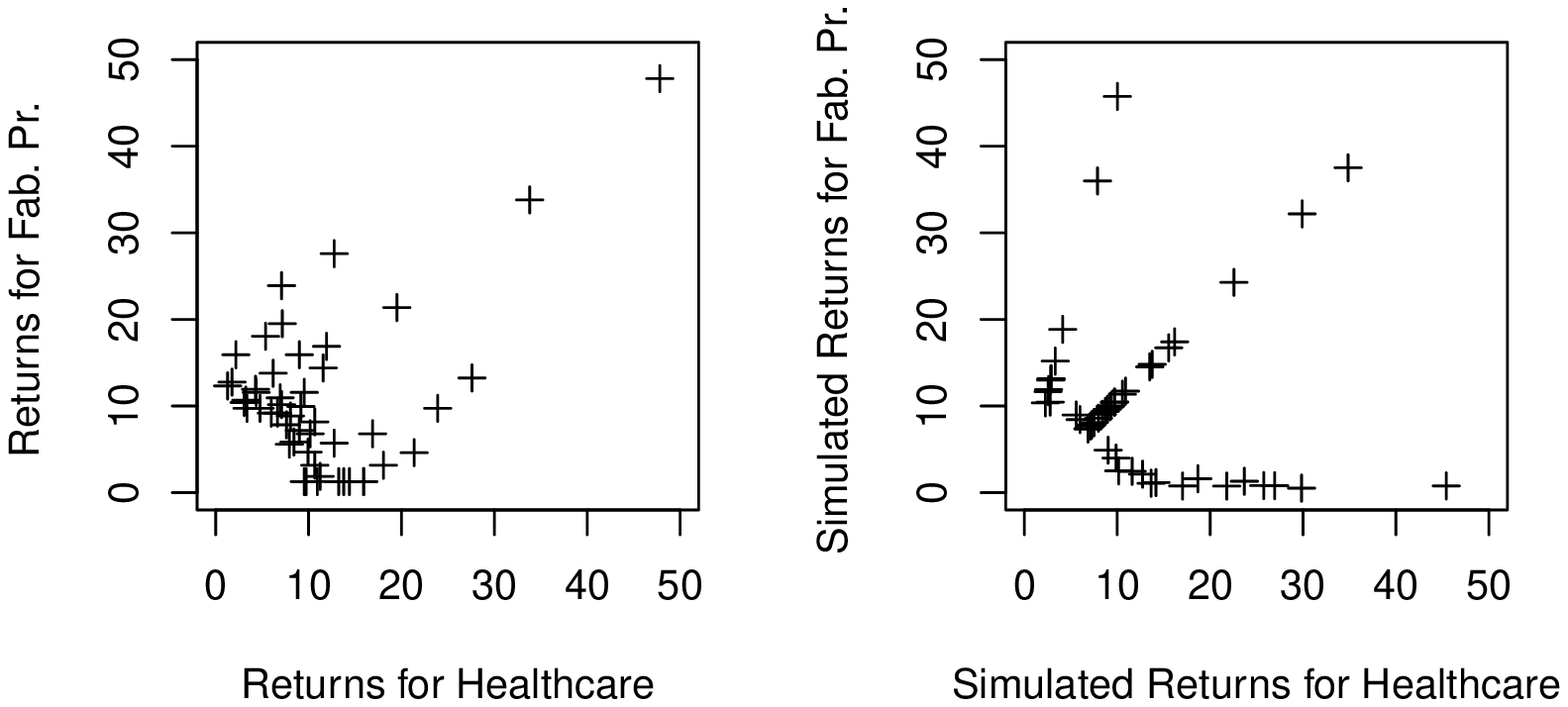}\vspace{-.0cm}
				
		\hspace{-2cm}\includegraphics[height=6cm, width=10.4cm]{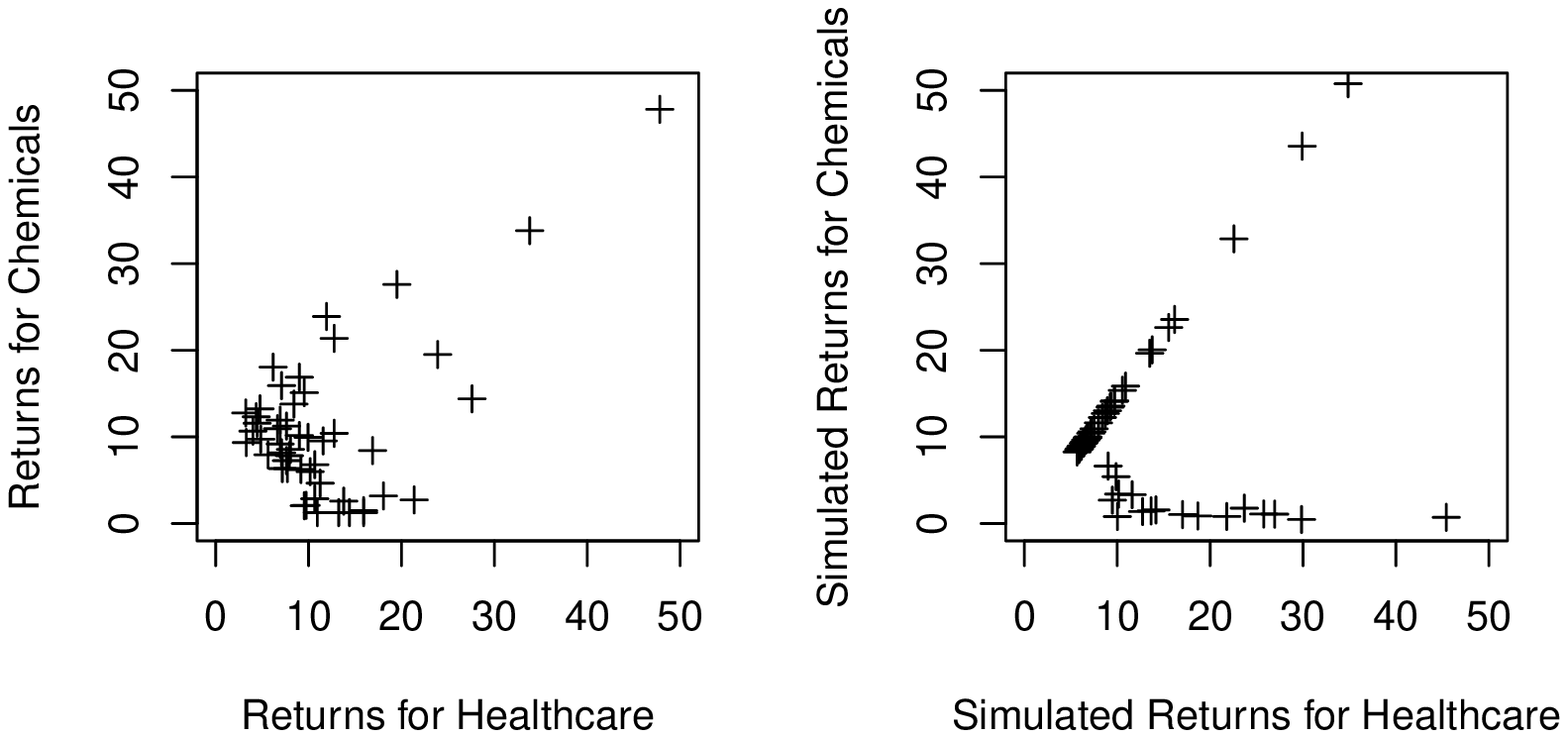}\vspace{-.0cm}
		\includegraphics[height=6cm, width=10.4cm]{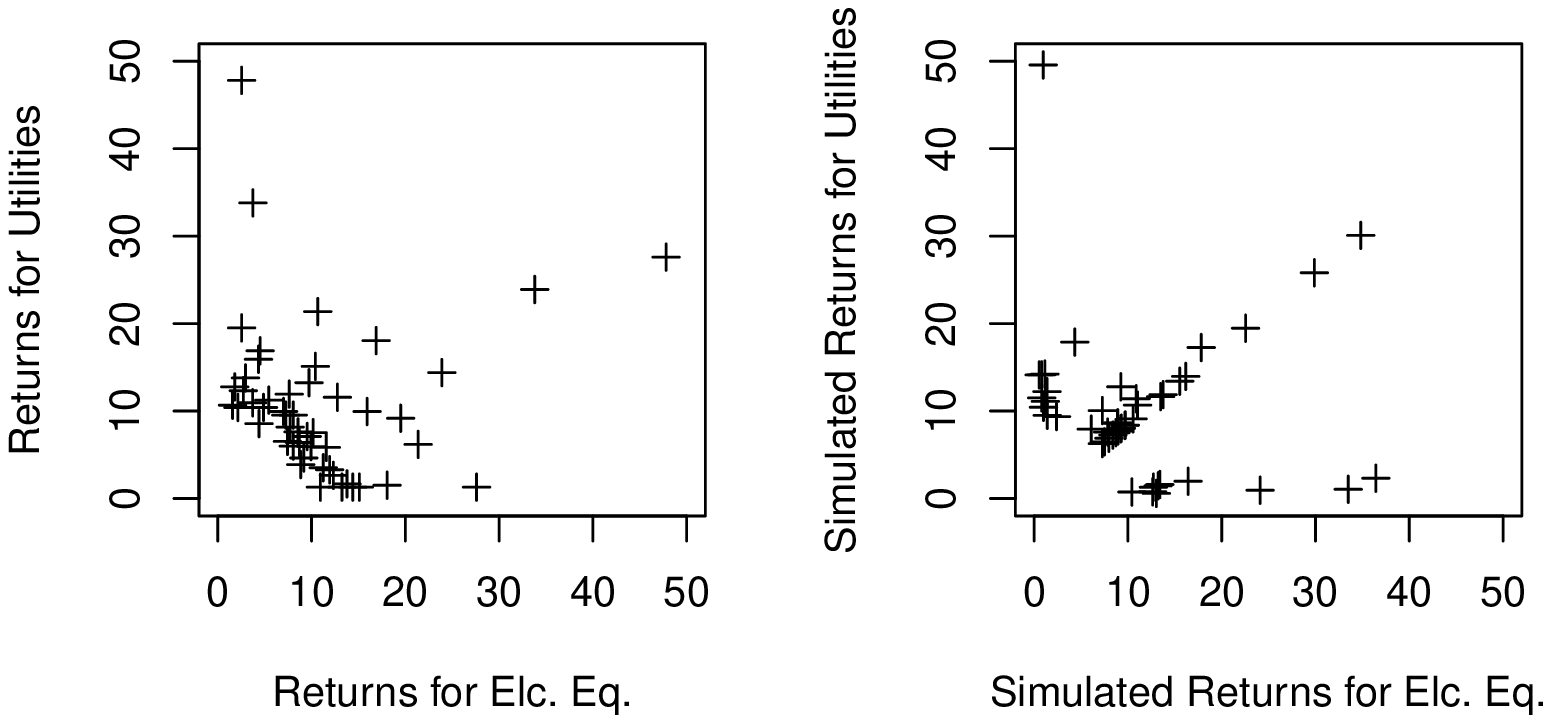}\vspace{-.0cm}
						
		\hspace{-2cm}	\includegraphics[height=6cm, width=10.4cm]{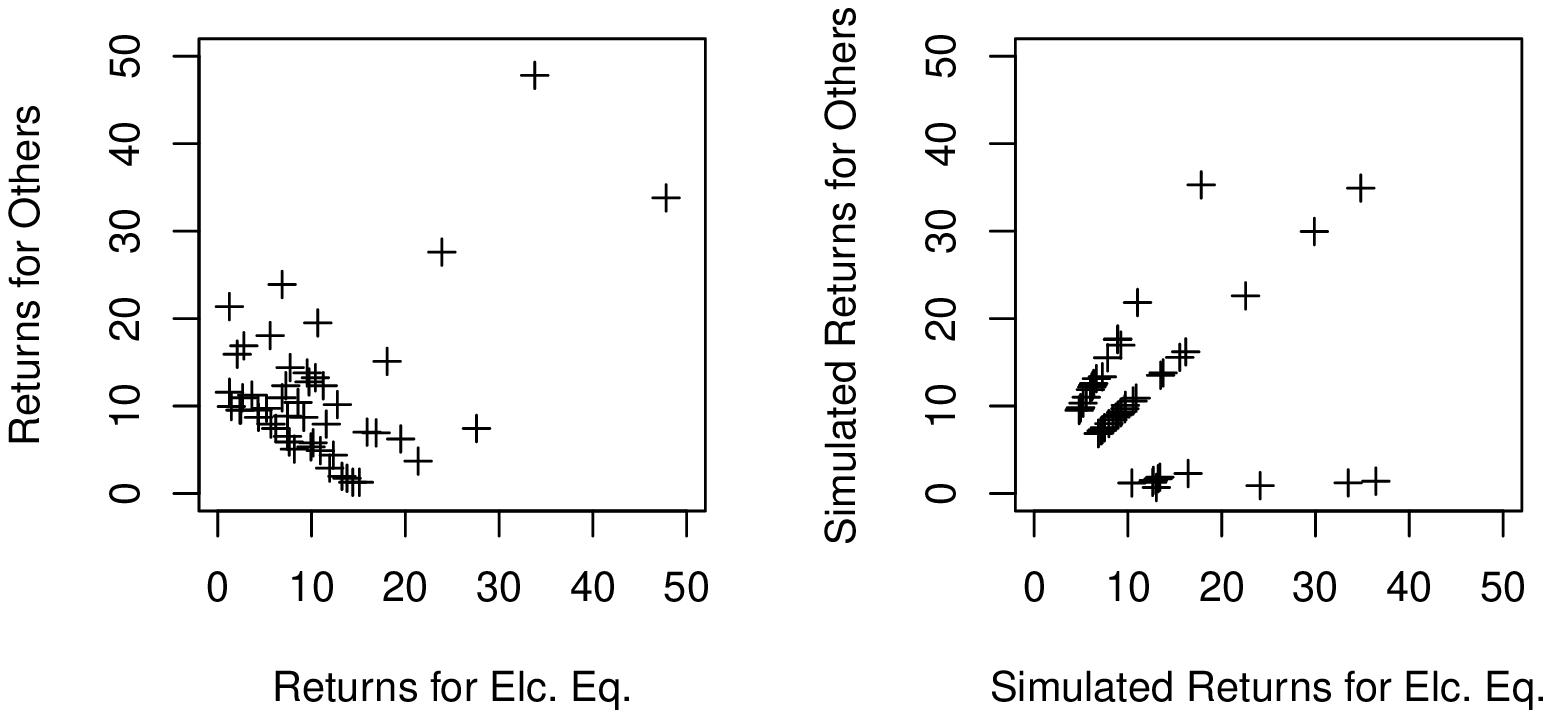}\vspace{-.0cm}
		\includegraphics[height=6cm, width=10.4cm]{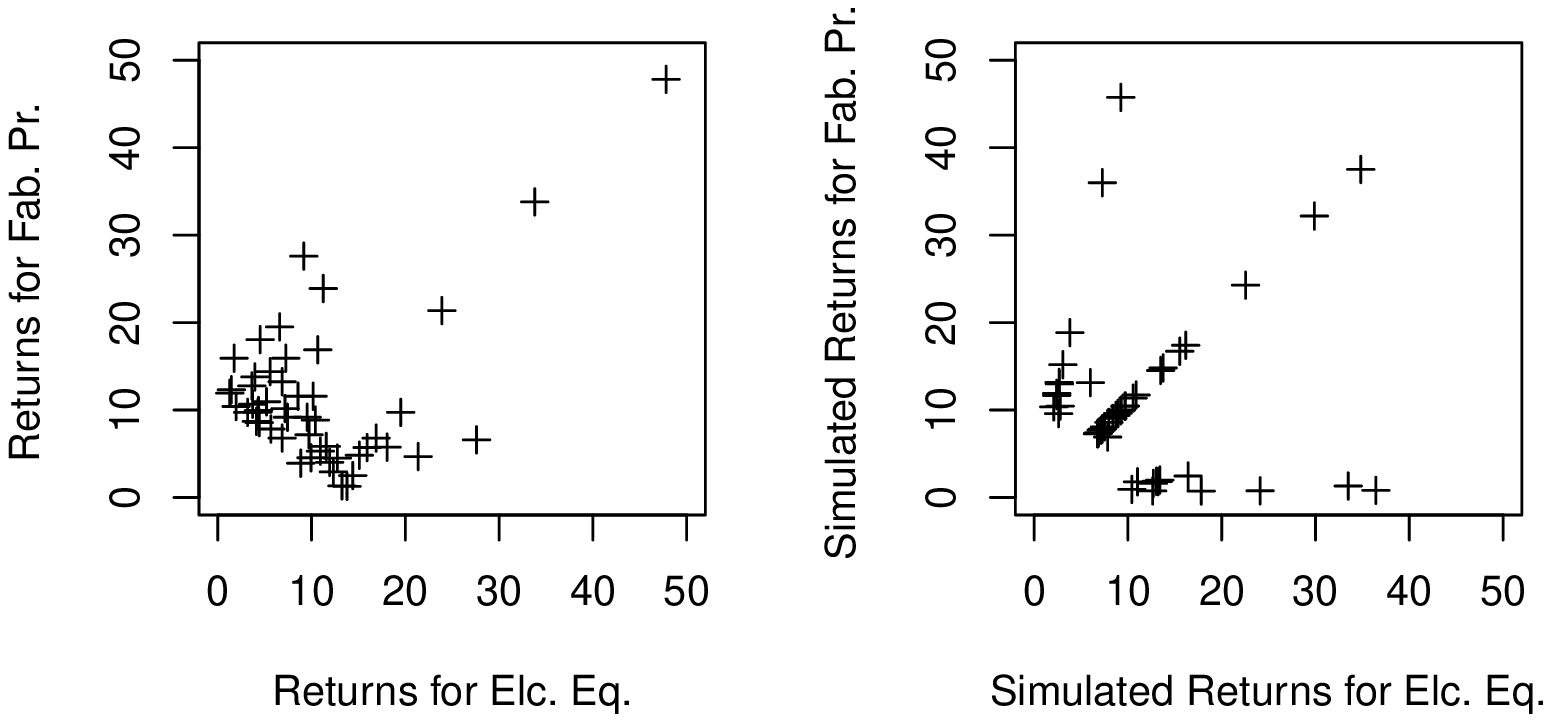}\vspace{-.0cm}
		
		\hspace{-2cm}\includegraphics[height=6cm, width=10.4cm]{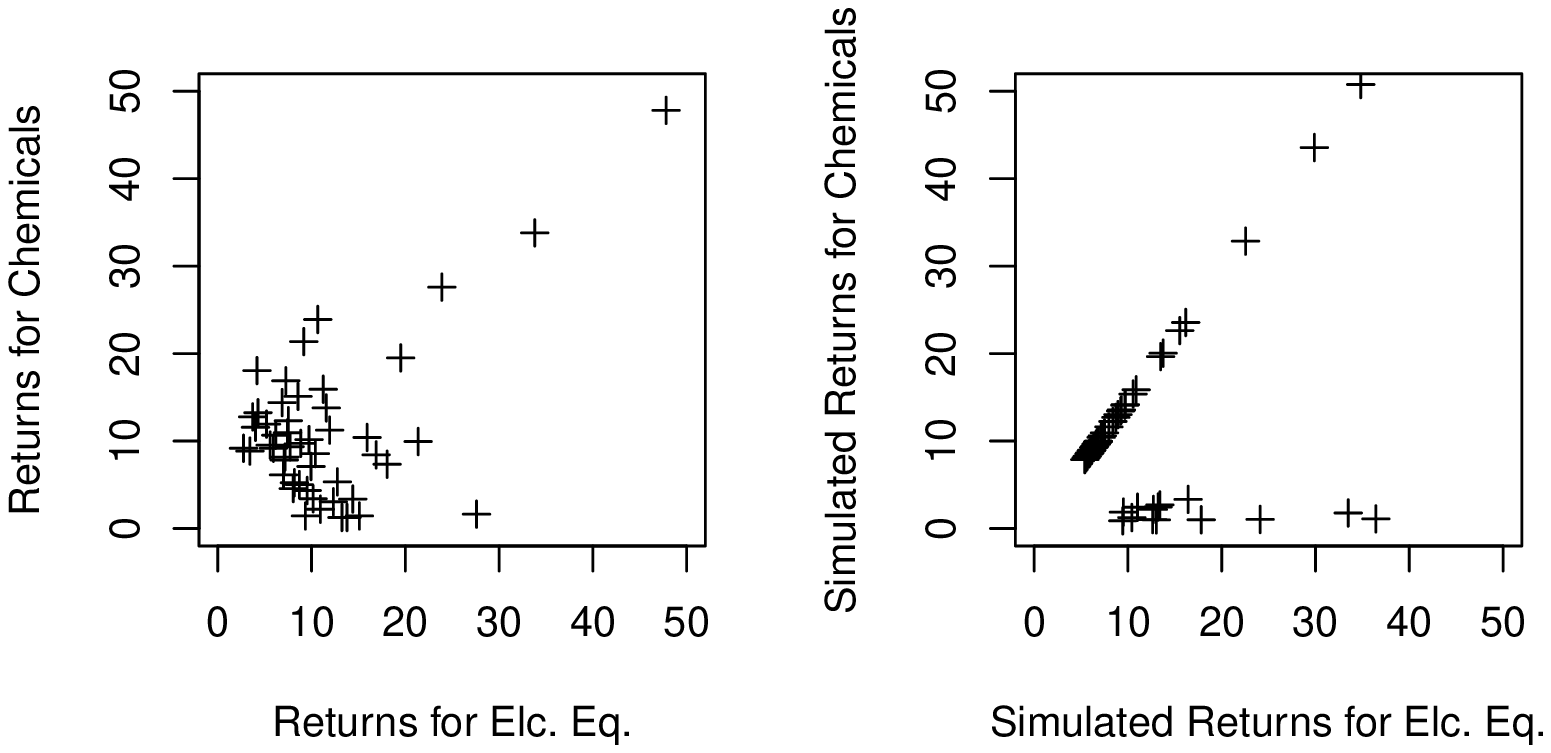}\vspace{-.0cm}
		\includegraphics[height=6cm, width=10.4cm]{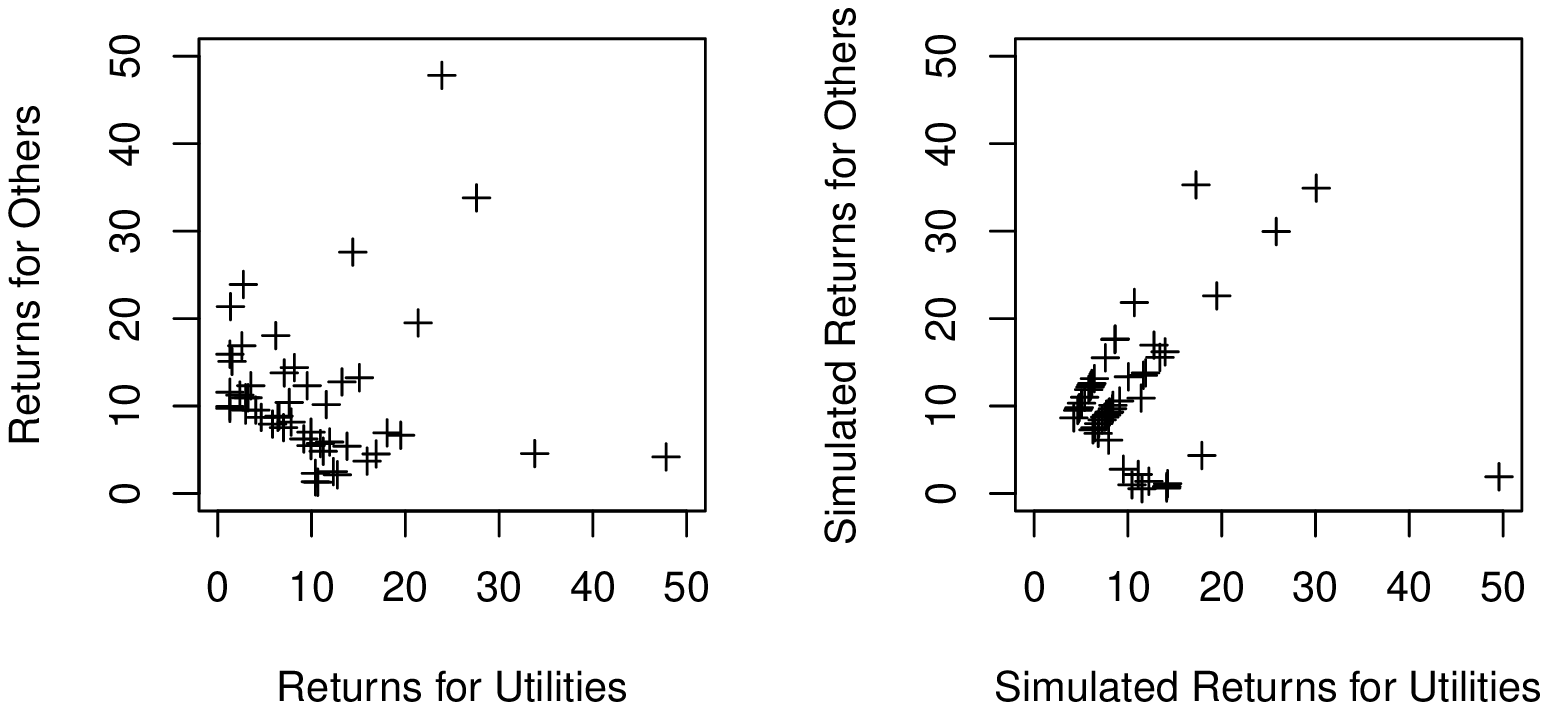}\vspace{-.0cm}
						
			
		\end{figure}
		\begin{figure}[H]
		\hspace{-2cm}\includegraphics[height=6cm, width=10.4cm]{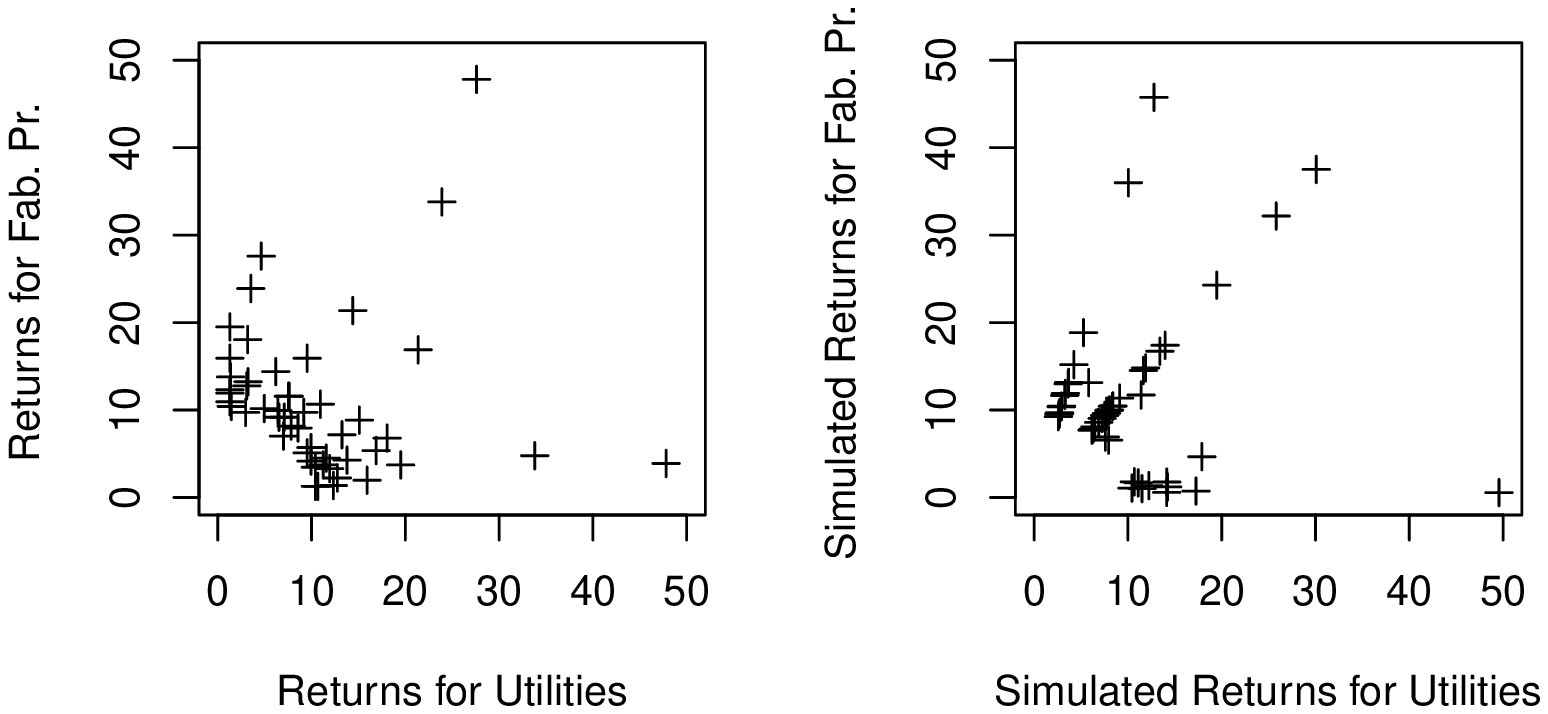}\vspace{-.0cm}
		\includegraphics[height=6cm, width=10.4cm]{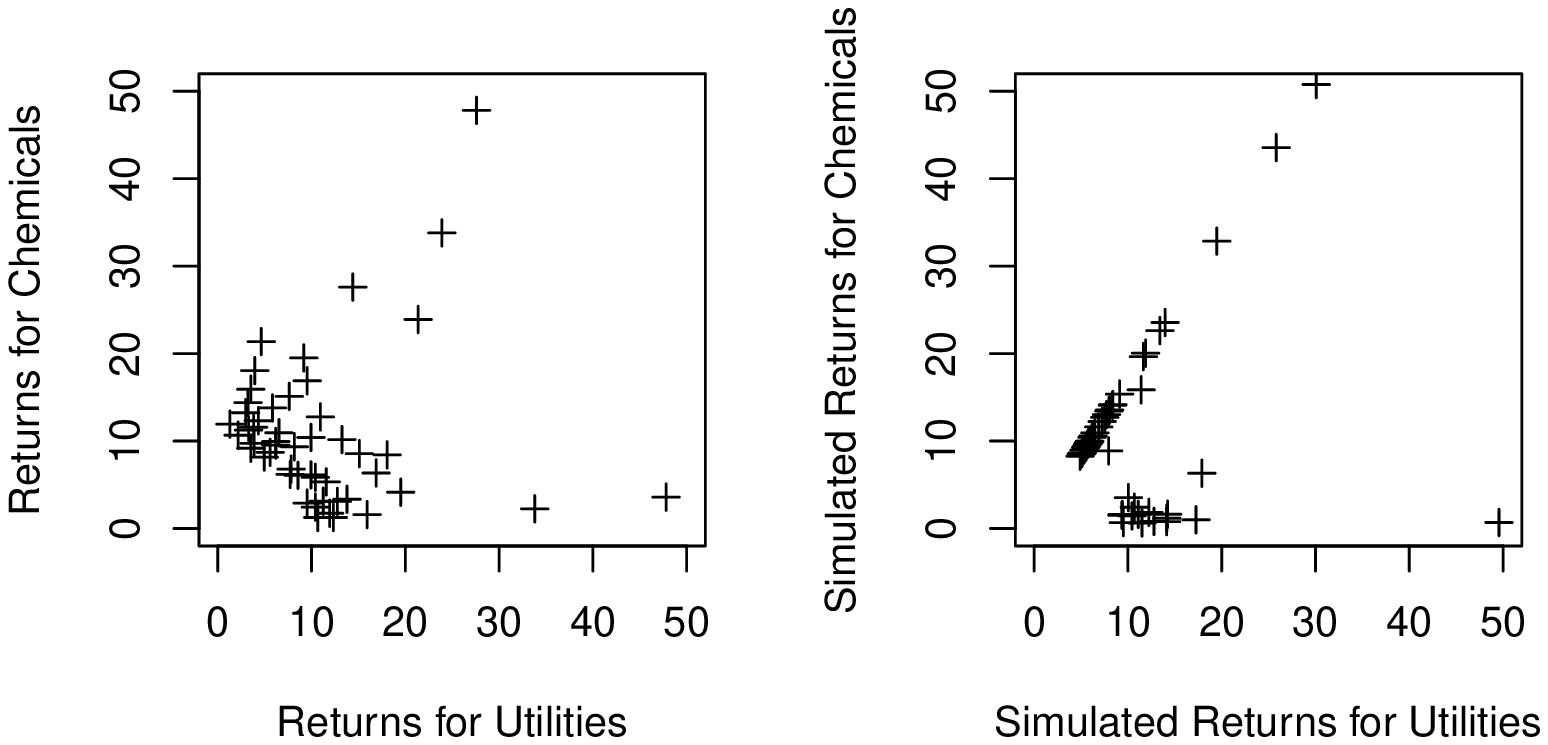}
        
        \hspace{-2cm}\includegraphics[height=6cm, width=10.4cm]{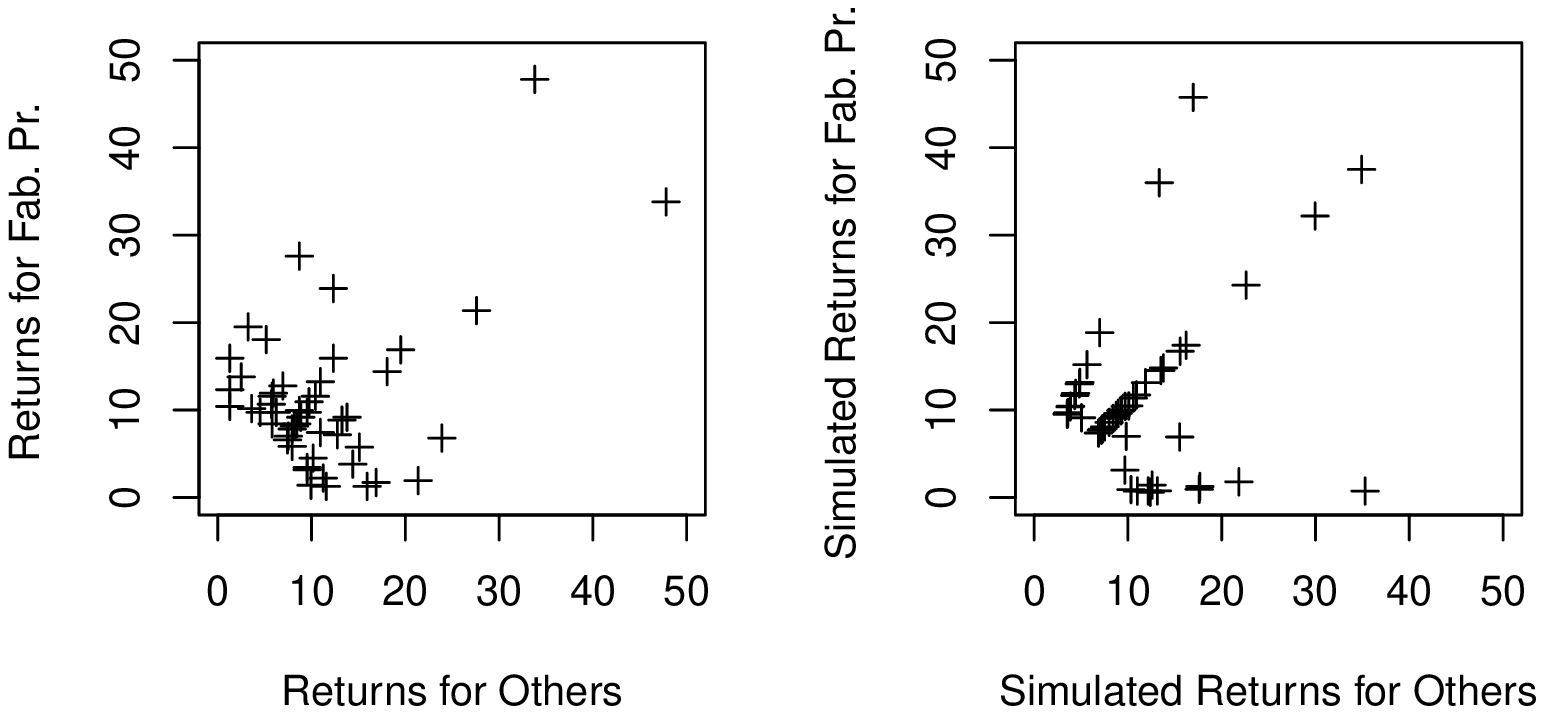}
        \includegraphics[height=6cm, width=10.4cm]{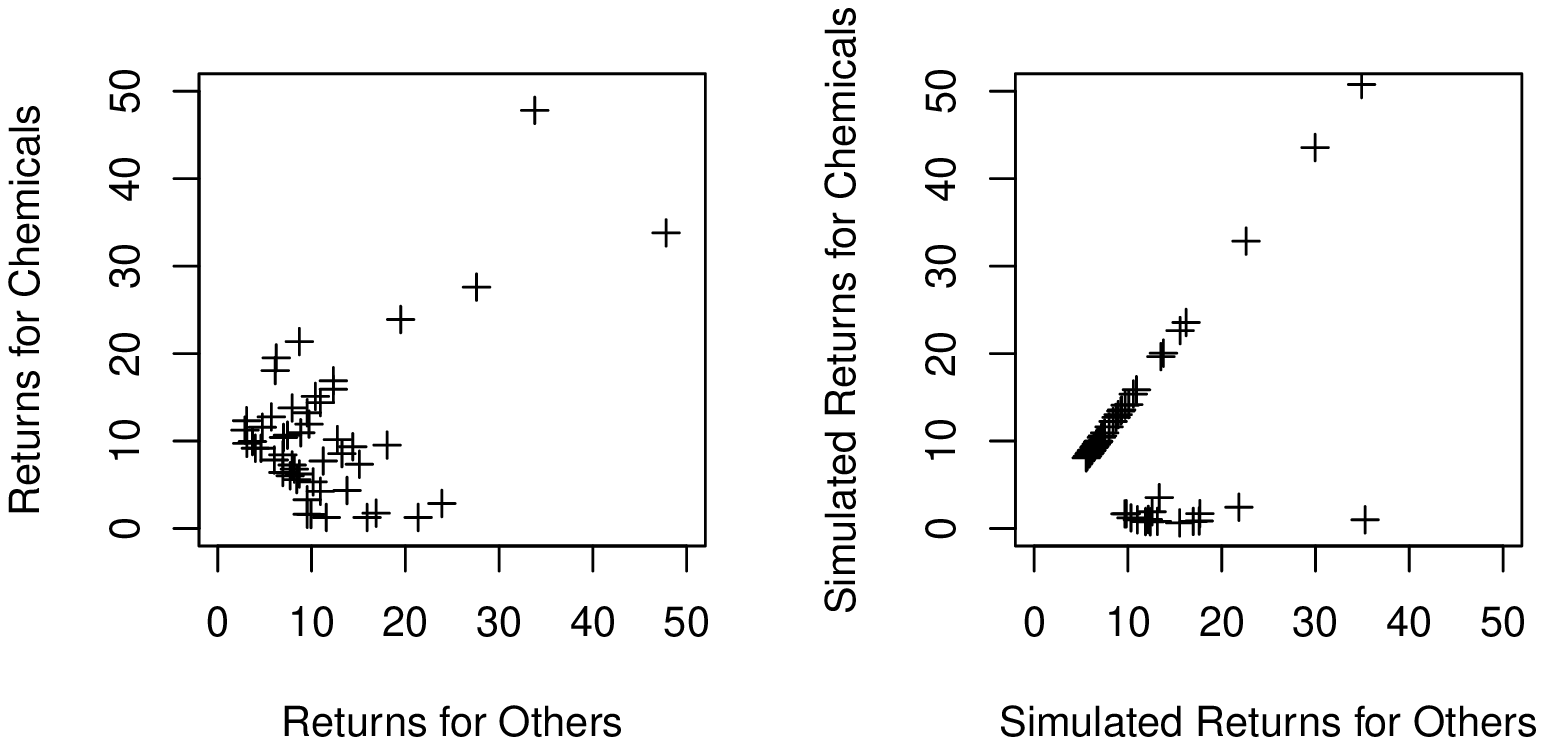}
        
        \hspace{3.25cm}\includegraphics[height=6cm, width=10.4cm]{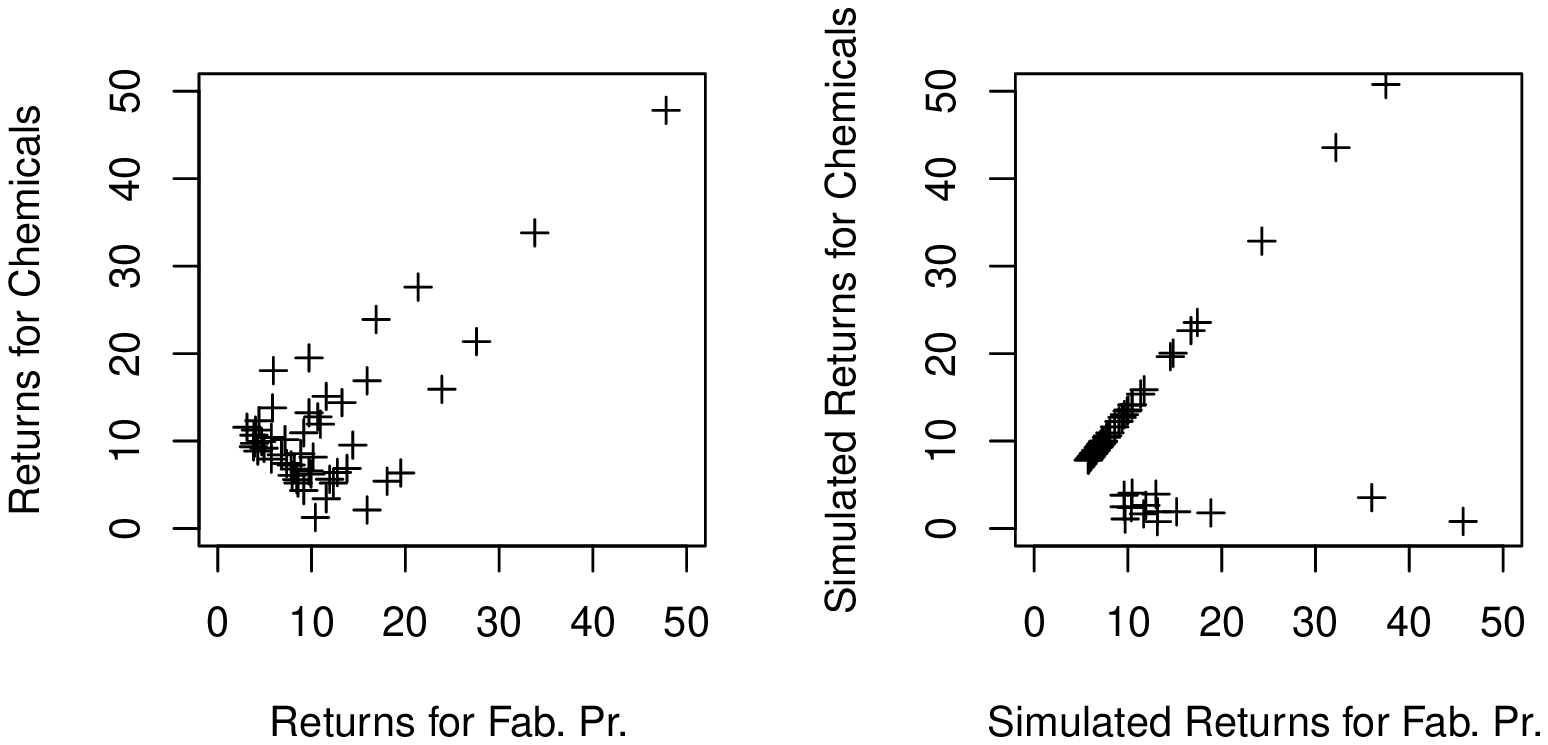}

		\tiny{\caption{Bivariate extremes above a high radial threshold for the DAG in Figure \ref{founddag}: The left plots contain the tails from the real data, the right ones contain simulated realizations.}\label{tb1plot}}
		\end{figure}
	
	\section{Figures: Food Components}\label{f1}
				\begin{figure}[H]
	                \hspace{-2cm}\includegraphics[height=6cm, width=10.4cm]{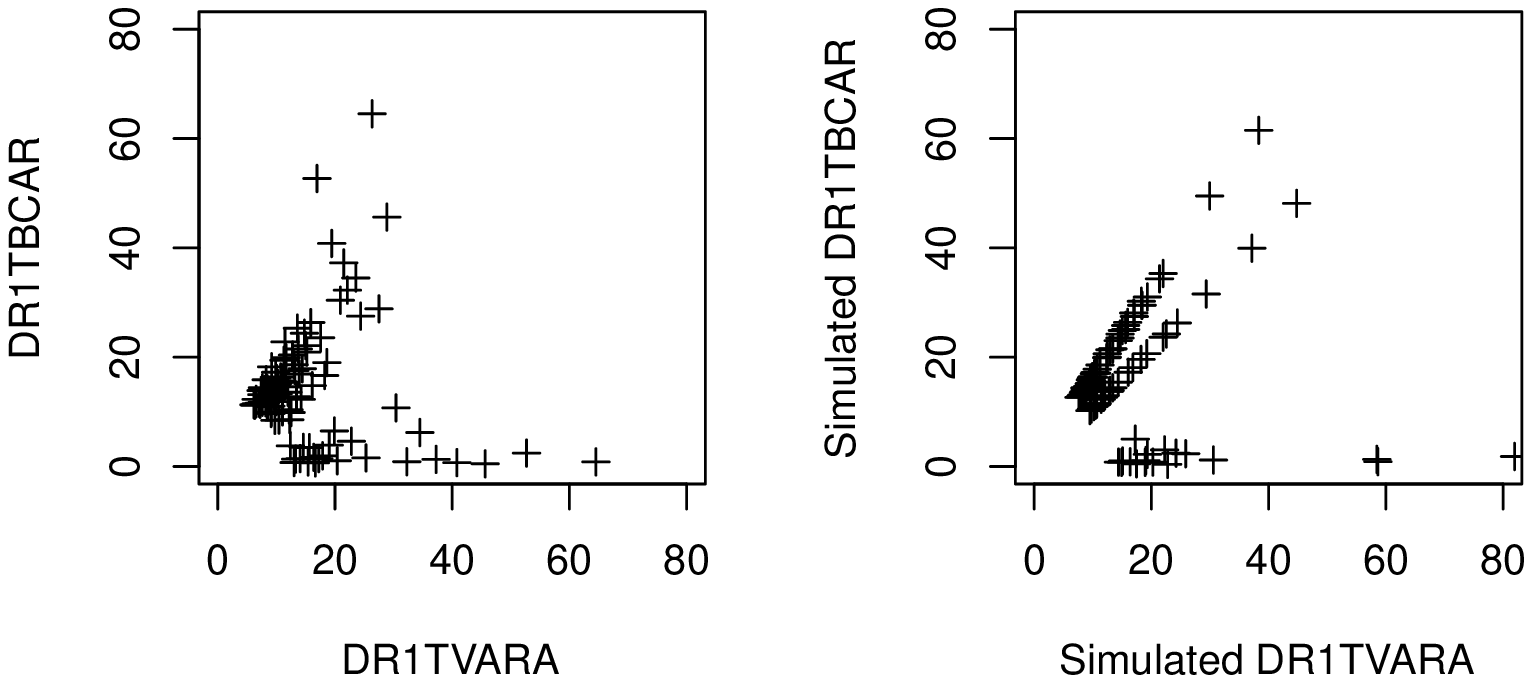}\vspace{-.0cm}
					\includegraphics[height=6cm, width=10.4cm]{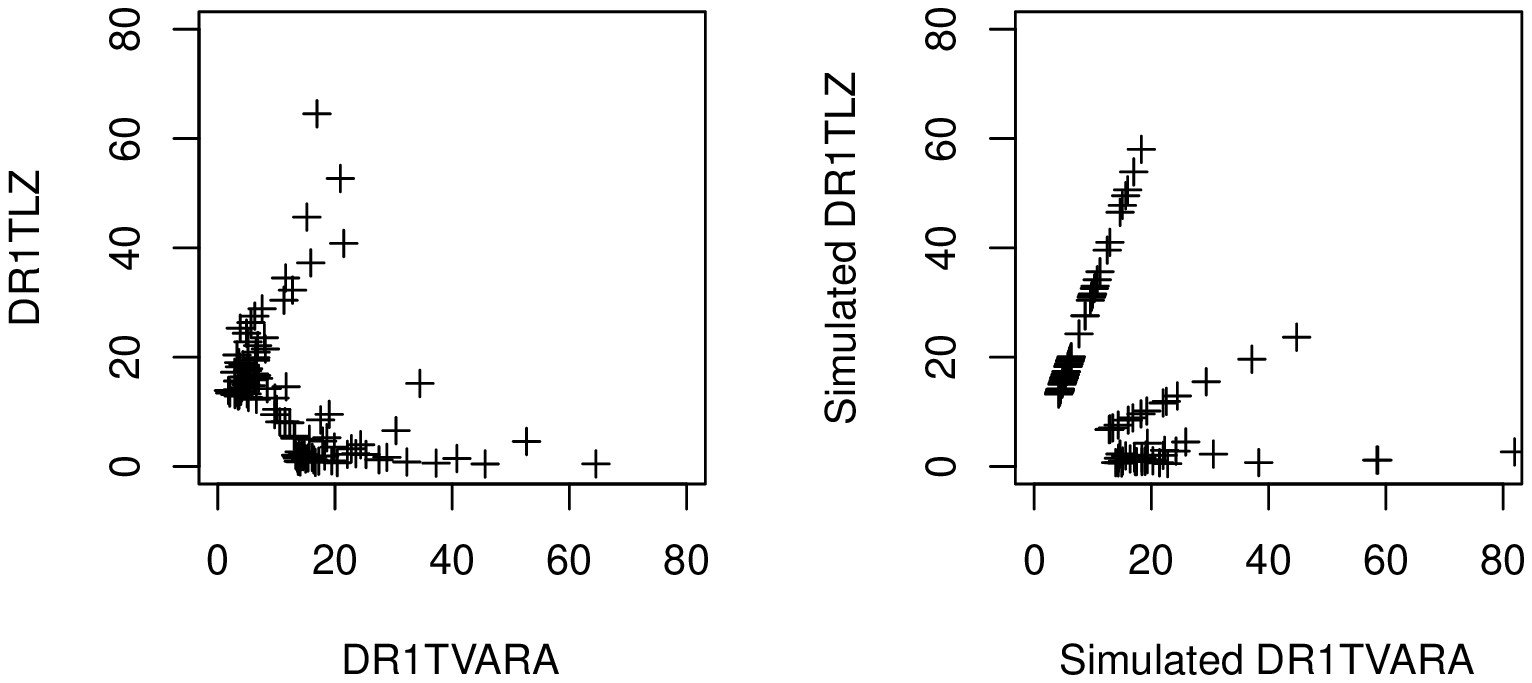}\vspace{-.0cm}
					
					\hspace{-2cm}\includegraphics[height=6cm, width=10.4cm]{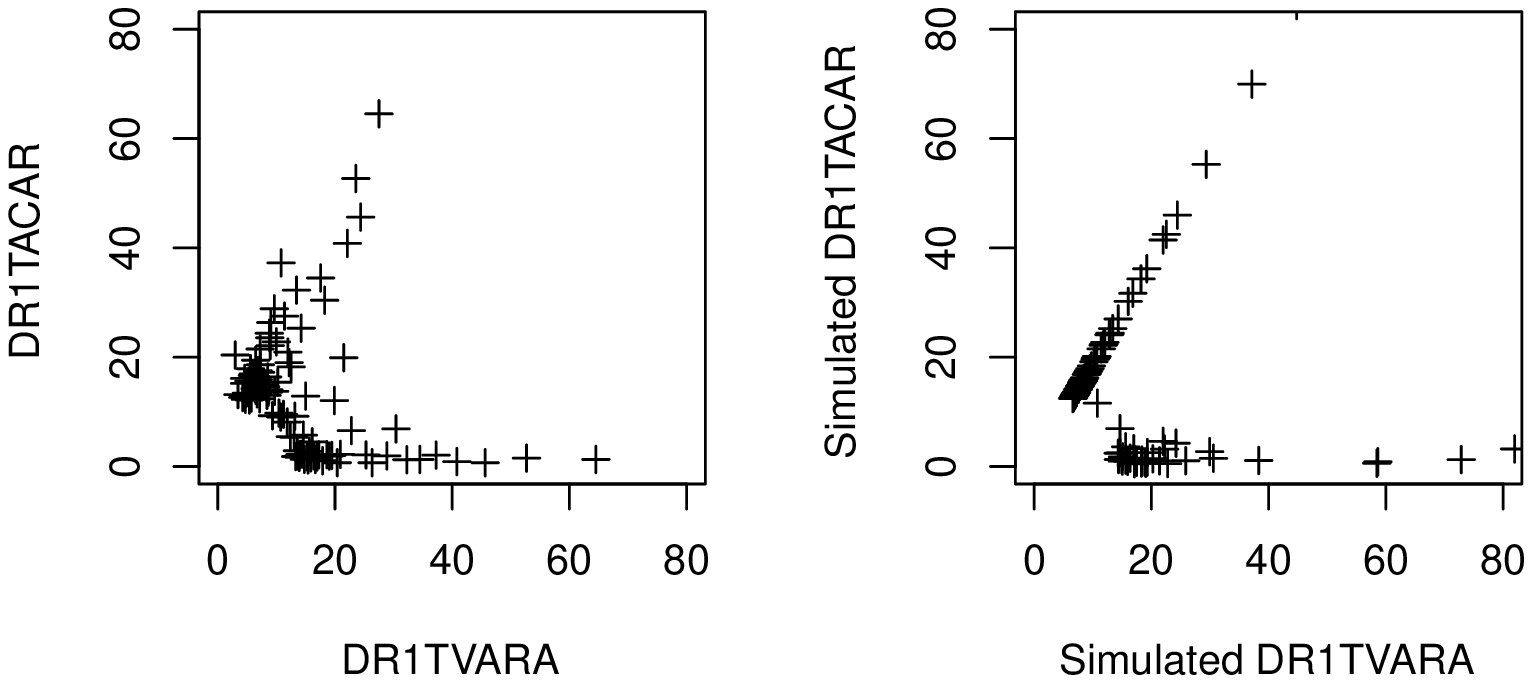}\vspace{-.0cm}
					\includegraphics[height=6cm, width=10.4cm]{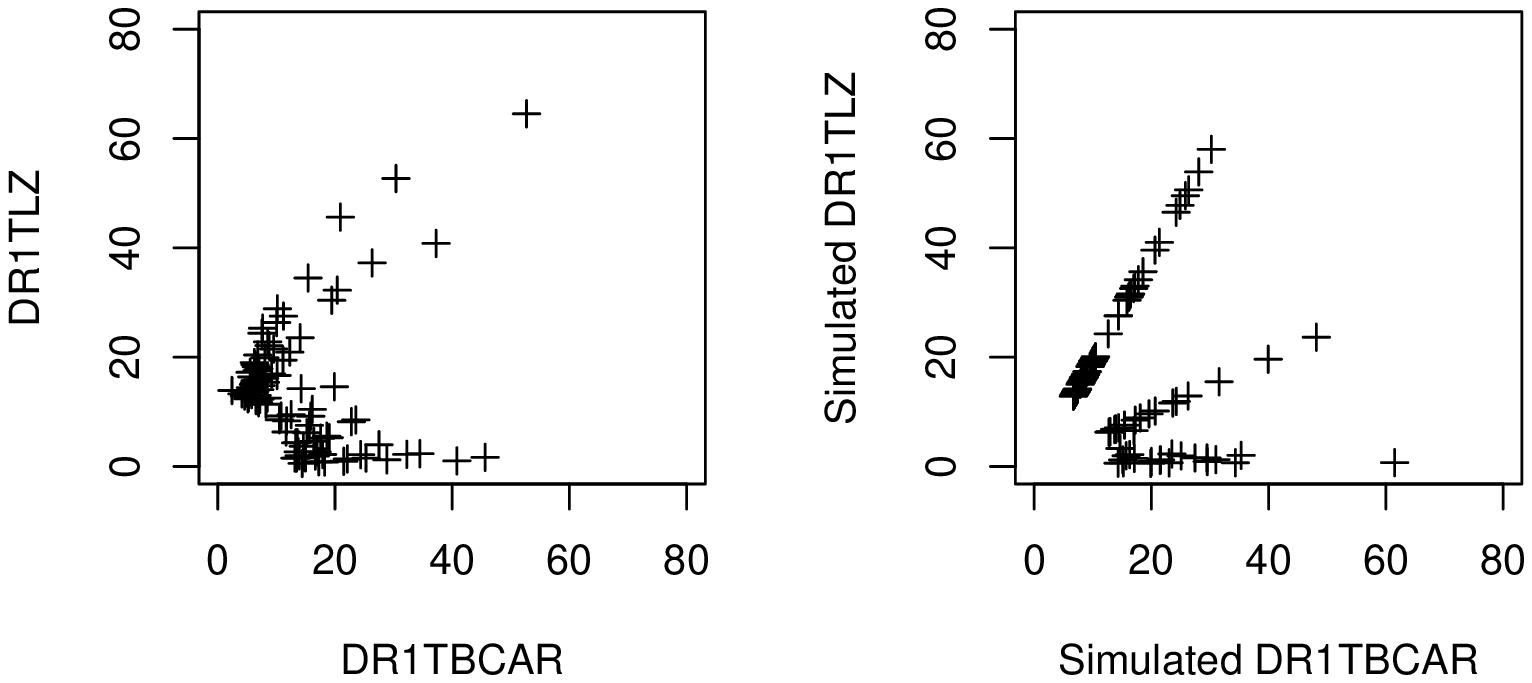}\vspace{-.0cm}
					
					\hspace{-2cm}\includegraphics[height=6cm, width=10.4cm]{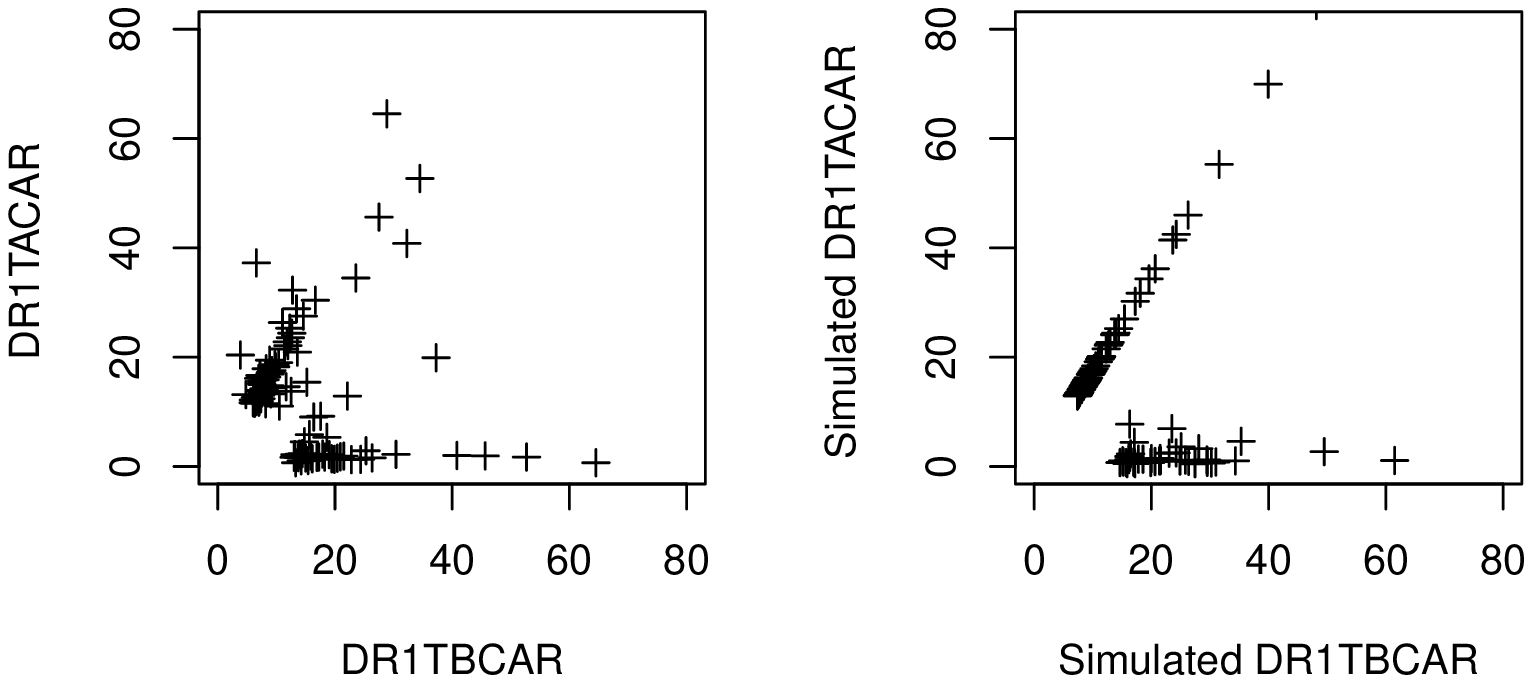}\vspace{-.0cm}
					\includegraphics[height=6cm, width=10.4cm]{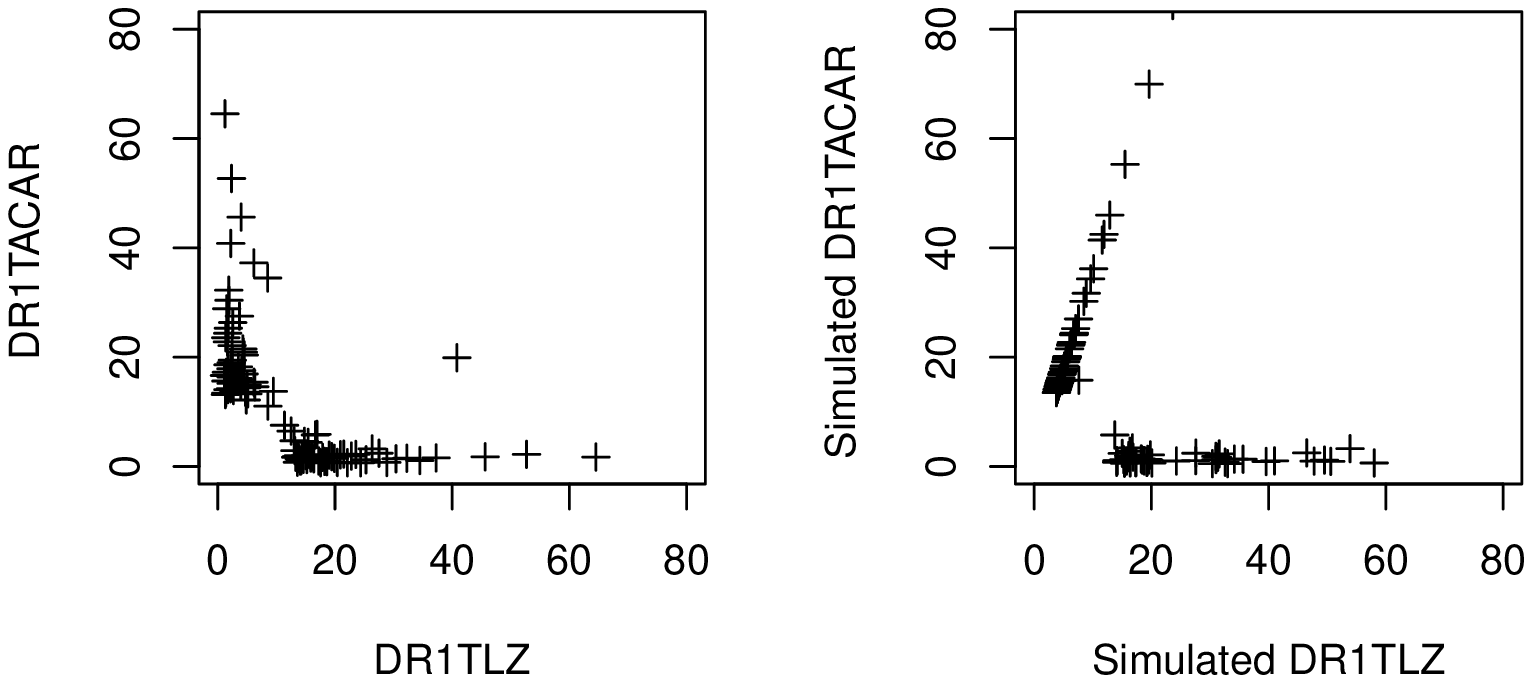}\vspace{-.0cm}

					\tiny{\caption{Bivariate extremes above a high radial threshold for the DAG in Figure \ref{founddag3}: The left plots contain the tails from the real data, the right ones contain simulated realizations.}\label{tb2plot}}
				\end{figure}

\end{document}